\newif\ifsubmission
\newif\ifanon
\newif\ifextendedabstract
\newif\ifstocproceedings

\submissiontrue

\documentclass[11pt]{article}
\usepackage{headers}

\title{
    All Unitaries Have Constant Depth Quantum Circuits
}
\date{}

\ifanon
    \author{}
\else
    \newauthor{Barak Nehoran\thanks{Email: \texttt{b.nehoran@columbia.edu}}}{Department of Computer Science, Columbia University, New York, NY, USA}
    \newauthor[1]{Henry Yuen}{}

    \finishauthors
\fi

\begin{document}

\maketitle
\thispagestyle{empty}

\begin{abstract}
It is well-known that every $n$-qubit unitary can be implemented by a $2^{O(n)}$-depth quantum circuit using single- and two-qubit gates. It has been open whether exponential depth is \emph{necessary} for general unitaries, even when allowing for unlimited number of ancilla qubits. We show, perhaps surprisingly, that all unitaries can be approximated to operator norm $\epsilon$ by a circuit of one- and two-qubit gates of depth $\poly(n,\log 1/\epsilon)$ with $2^{O(n)}$ ancilla qubits. In other words, every $n$-qubit unitary can be parallelized to polynomial depth. Moreover, if we allow unbounded fan-out gates, these circuits can be reduced further to \emph{constant} depth. Our construction takes advantage of a novel relationship connecting the unitary synthesis problem of Aaronson and Kuperberg to locally-decodable codes and private information retrieval from complexity theory and cryptography.

\end{abstract}
\thispagestyle{empty}

\tableofcontents

\newpage

\section{Introduction}

What is the complexity of implementing $n$-qubit unitaries on a quantum computer? Standard circuit synthesis approaches (such as, e.g.,~\cite[Section 4.5.2]{Nielsen_Chuang_2010}) imply that $2^{O(n)}$ single- and two-qubit gates suffice. Counting arguments show that this exponential bound on the number of gates is required in general. However, this does not preclude the possibility that, with the help of ancilla qubits, the exponentially-many gates can be \emph{parallelized} into a circuit with depth much less than $2^{O(n)}$. 

Classical circuits can always be parallelized to low-depth: in fact, every boolean function can be computed by a depth-three classical circuit with exponentially-many ANDs, ORs, NOTs, and unbounded fan-in and fan-out. On the other hand, it is not obvious how to do the same parallelization for quantum circuits. One reason is that unitary operations need to be coherent: the implementation needs to work not just for every basis state of the Hilbert space, but for arbitrary superpositions, preserving amplitudes and relative phases. 

The headline result of our paper is the following:

\begin{theorem}[informal, see \Cref{thm:shallow}]
\label{thm:main-informal}
    All $n$-qubit unitaries can be implemented up to error $\epsilon$ in operator norm with a quantum circuit (consisting of single- and two-qubit gates) of depth $\poly(n,\log 1/\epsilon)$ and $2^{O(n)}$ ancilla qubits. Furthermore, the quantum circuit can be made constant depth if we also allow quantum fan-out gates of size $2^{O(n)}$.
\end{theorem}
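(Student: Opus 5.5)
Our plan reduces general unitary synthesis to the \emph{state synthesis} problem of Aaronson and Kuperberg, and then parallelizes state synthesis using locally-decodable codes / PIR. The reduction rests on a standard fact. Let $U$ be an $n$-qubit unitary with columns $U\ket{x}$. It suffices to implement, in low depth and with clean ancillas, the controlled-preparation map
\[
  V:\ket{x}\ket{0^n}\mapsto \ket{x}\,U\ket{x}.
\]
Given $V$, we also implement its analogue $W:\ket{x}\ket{0}\mapsto\ket{x}\ket{x}$ (copying, which is depth $1$ per qubit). The composition $W^\dagger \cdot \mathrm{SWAP}\cdot V$ is not directly enough, because $U\ket{x}$ and $\ket{x}$ are not related by a permutation. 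Instead we use the trick $\ket{x}\ket{0}\mapsto\ket{x}U\ket{x}$ followed by an uncompute of $\ket{x}$ controlled on the second register. This requires the map $\ket{y}\ket{x}\mapsto\ket{y}\ket{x\oplus\cdot}$ that erases $x$ given $U\ket{x}$, which is $(V')^\dagger$ for the controlled preparation of $U^\dagger$ columns. So the whole task reduces to implementing \emph{unitary state-preparation oracles} $\ket{x}\ket{0}\mapsto\ket{x}\ket{\psi_x}$ for an arbitrary family of $2^n$ states, in depth $\mathrm{poly}(n,\log 1/\epsilon)$.

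For a fixed state, preparation of $\ket{\psi}$ in depth $O(n)$ with $2^{O(n)}$ ancillas follows from the Grover–Rudolph / binary-tree approach. We precompute all rotation angles classically and load them into ancilla registers. Each level of the tree applies a controlled rotation selected by the prefix of the qubits already prepared. This is a lookup into a table of size $2^{O(n)}$, implementable by a QRAM-style circuit of depth $O(n)$ (with fan-out, $O(1)$), and the angles are discretized to $O(\log(n/\epsilon))$ bits.

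The difficulty is the controlled version: the table now depends on $x$, so each level must look up into a table indexed by $(x,\text{prefix})$, and doing this coherently for all $n$ levels sequentially gives depth $O(n)$ per level, which is fine. The genuine obstacle is that the lookup must not leave garbage entangled with $x$. Here we use the LDC/PIR idea. Encode the table with a Reed–Muller style locally decodable code, so that any entry is a fixed low-degree function of a few queries at positions that are \emph{individually uniformly random}, independent of the index. We then query these positions with a uniformly random query register. Because each query's distribution is independent of $(x,\text{prefix})$, the query-position registers can be uncomputed or left in a state that carries no information, and disentanglement holds exactly (or to error $\epsilon$). The decoding is a polynomial interpolation, which is classical reversible arithmetic of depth $\mathrm{poly}(n)$, or constant with fan-out and the appropriate threshold/parity structure.

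Finally we compose the $n$ levels, account for errors additively to get $\epsilon$ in operator norm, and count ancillas ($2^{O(n)}$ copies of the encoded tables). For the constant-depth claim we replace each lookup and each arithmetic step with its fan-out-based constant-depth implementation, using that parity and fan-out are equivalent in constant depth, so polynomial interpolation over $\mathbb{F}_2$-extensions becomes constant depth. The step I expect to be hardest is making the PIR-based lookup coherent and garbage-free, namely showing that the random query registers decouple from $x$ so the controlled preparation is a genuine unitary on superpositions.
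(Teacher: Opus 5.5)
\textbf{There is a genuine gap: the reduction from unitary synthesis to controlled state preparation fails, and the rest of your plan depends on it.}

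After $V$ the state is $\sum_x\gamma_x\ket{x}\otimes U\ket{x}$. You then erase the first register with $(V')^\dagger$, where $V'$ prepares a column of $U^\dagger$ controlled on the computational-basis value $y$ of the second register.

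\begin{itemize}
\item Any operation controlled in that way has the form $\sum_y W_y\otimes\ketbra{y}{y}$.
\item The amplitude it leaves on $\ket{0}\ket{y}$ is therefore $\sum_x c_{y,x}U_{yx}\gamma_x$, where $c_{y,x}=\bra{0}W_y\ket{x}$. This is a row of a unitary, so $\sum_x\abs{c_{y,x}}^2=1$.
\item Matching $(U\vec\gamma)_y=\sum_xU_{yx}\gamma_x$ for every $\vec\gamma$ forces $c_{y,x}=1$ wherever $U_{yx}\neq0$. That is impossible once some row of $U$ has two nonzero entries.
\item Your specific choice gives amplitude $\sum_xU_{yx}^2\gamma_x$. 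For $U=H$ and input $\ket{0}$, half the weight is left with the first register not reset.
\end{itemize}

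The map that actually erases $x$ is controlled in the basis $\{U\ket{x}\}$. It equals $(\id\otimes U)\,C\,(\id\otimes U^\dagger)$, with $C$ a computational-basis copy. So it presupposes an implementation of $U$, and the argument is circular.

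What survives is controlled state preparation $\ket{x}\ket{0}\mapsto\ket{x}\ket{\psi_x}$ with $x$ retained. That is the easy part: it already follows in constant depth from \Cref{lem:state-prep}, a one-hot encoding of $x$, and fan-out/parity routing. Garbage correlated with a retained $x$ is uncomputed in the usual way, so neither Grover--Rudolph nor PIR is needed there.

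Your PIR remark also would not address the real problem. In an LDC each query is marginally uniform, but the tuple of queries is correlated with the index, so leftover query registers are garbage.

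\textbf{The missing idea is how to forget the input index coherently.} The paper does this by interference rather than by a controlled uncompute.
\begin{itemize}
\item By \Cref{lem:eqiv-synthesis} it suffices to handle a traceless real symmetric involution $S$.
\item The input $\ket{x}$ is encoded as the Gaussian single-excitation state with wavefunction proportional to $u_x e^{-\frac12\norm{\vec u}_2^2}$.
\item The algorithm applies $Q_S\,\calF\,Q_S\,\calF\,Q_S$, with $Q_S:\ket{\vec v}\mapsto e^{\frac i2\vec v^TS\vec v}\ket{\vec v}$.
\item The middle block $\calF Q_S\calF$ sends $\vec u$ to $\vec v$ while evaluating the quadratic form at the midpoint. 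By the degree-two decoding identity~\eqref{eq:three-q-quadratic-decoding-identity}, the three phases combine into $e^{-i\vec u^TS\vec v}$.
\item The Gaussian integrals over the query points $\vec u,\vec v$, against $u_x$ and $v_y$, then return exactly $-i\bra{y}S\ket{x}$. The query registers are absorbed by the decoding $E^\dagger$ rather than left as garbage.
\end{itemize}
Discretizing to $K=\Theta(\log(N/\epsilon))$ grid points per mode, with Poisson summation bounding the error by $O(N^{3/2}K^{3/4}e^{-\pi K/8})$, gives a finite algorithm. Exact constant-depth fan-out circuits for the encoding, the quadratic phase, and the QFT then yield \Cref{thm:shallow}. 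Replacing fan-out by CNOT trees gives depth logarithmic in $N$, hence polynomial in $n$.
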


This result is perhaps surprising from a physics point of view. How much time does it take to simulate the time evolution of some Hamiltonian $H$ up to time $T$ on a quantum computer? Various pieces of intuition from physics and quantum algorithms strongly suggest that, in general, the simulation must run for at least $T$ steps on a quantum computer; in other words, the evolution of $H$ cannot be significantly \emph{fast-forwarded}. For example, it is believed that the SYK model, a highly-scrambling Hamiltonian studied in many-body physics, should generate complexity at a maximal rate~\cite{brown2016complexity,balasubramanian2020quantum}. Furthermore, the suggestively-named \emph{No Fast-Forwarding Theorem}~\cite{berry2007efficient} establishes limits on the query complexity of simulating $e^{-i H t}$ in the oracle model.

These intuitions fall short in a couple of ways. First, the notion of quantum complexity growth considered in~\cite{brown2016complexity,balasubramanian2020quantum} is focused on circuits with little to no ancilla qubits, which is not the case for our parallel circuit construction. Second, the No Fast-Forwarding theorem of~\cite{berry2007efficient} only holds in idealized models where information about $H$ is accessed by querying some oracle, and moreover the theorem only lower bounds the \emph{number} of queries, not whether they can be parallelized. 

Theorem~\ref{thm:main-informal} shows that for implementing general unitaries, time (ie., the circuit depth) can be traded for space (i.e., the number of ancilla qubits used). In order to explain the ideas behind the proof of Theorem~\ref{thm:main-informal}, we have to first explain the connection with the \emph{unitary synthesis problem} of Aaronson and Kuperberg.

\paragraph{Unitary Synthesis with Classical Oracles}
A related but technically different question was posed by Aaronson and Kuperberg~\cite{aaronson2007quantum}: is there an efficient reduction from the task of implementing an arbitrary unitary to that of implementing any specially-chosen (and unitary-dependent) classical function (the classical oracle)?
If there were such a reduction, then it could allow us to study the complexity of unitaries by studying the corresponding classical function. 
On the other hand, if there were no such reduction, it would mean that the complexity theory of unitaries could be completely disconnected from any classical complexity theory.
Despite much effort, the best upper bounds for implementing unitaries of dimension $N = 2^n$ require either $\sqrt{N}$ queries to the classical oracle~\cite{Rosenthal_2026} or a classical oracle on $N^2$ bits~(folklore, see \cite{yuen2022unitarysynthesis,lombardi2024one}).
The best lower bounds known are even worse: \cite{lombardi2024one} show that any such reduction to a function on $o(N)$ bits must make strictly more than a single query to the function. Even extending this lower bound to rule out a constant number of queries seems to require new ideas.

\paragraph{Our Results}
Our main technical result is to give the first constant-query algorithm for unitary synthesis relative to an oracle on $\widetilde O(N)$ qubits.
The algorithm makes three queries to a diagonal phase oracle, or six queries to a classical oracle.%
\footnote{
    Standard arguments show that each diagonal phase oracle can be implemented in two queries to a classical oracle, so this becomes a six-query algorithm relative to a boolean oracle on $\widetilde O(N)$ bits.
}

\begin{theorem}[informal, see \Cref{thm:disc-three-q-synthesis}]
    \label{thm:intro-disc-three-q-synthesis}
    For every size $N$ and tolerance $\epsilon$, there is a fixed quantum query algorithm $A^{(\cdot)}$ making three queries, with the property that for every unitary $U \in U(N)$, there is a diagonal phase oracle $\calO$ on $O(N \log\log(N/\epsilon))$ qubits such that, up to operator norm error $\epsilon$, $A^{\calO}$ implements $U$ and returns the ancillas to their initial state.
\end{theorem}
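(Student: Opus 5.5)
The plan is to split $U$ into two \emph{controlled state preparations} and spend the query budget on them. Write $\ket{\psi_x} = U\ket{x}$. Any isometry of the form $W:\ket{x}\ket{0}\mapsto \ket{x}\ket{\psi_x}$ together with an ``erasure'' map $E:\ket{x}\ket{\psi_x}\mapsto \ket{0}\ket{\psi_x}$ gives $U$ (after a swap), since $E W\ket{x}\ket{0} = \ket{0}U\ket{x}$. The map $E$ is well defined and unitary on the span of $\{\ket{x}\ket{\psi_x}\}$ because the $\ket{\psi_x}$ are orthonormal. Moreover, $E^\dagger$ is again a controlled preparation, now run on the input $\ket{0}\ket{z}$, namely $\ket{0}\ket{z}\mapsto \sum_x \langle \psi_x | z\rangle \ket{x}\ket{\psi_x}$. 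I will therefore aim to implement $W$ with one phase query and $E$ with two. The second query inside $E$ is the uncomputation that returns the ancillas to their initial state, as the statement requires.

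\textbf{Step 1: one-query controlled state preparation.} Each amplitude $a$ with $|a|\le 1$ can be written as an average of two unit-modulus phases, $a=\tfrac12(e^{i\alpha}+e^{i\beta})$. I then discretize each phase to $O(\log(N/\epsilon))$ bits; by a union bound over the $N^2$ entries, this costs total operator-norm error $\epsilon$. The idea is to prepare a uniform superposition over an index register $y$ and a sign/choice register, apply the diagonal phase oracle $\calO\ket{x,y,r}=e^{i\theta_{x,y,r}}\ket{x,y,r}$, and then apply a fixed Fourier/Hadamard transform on the $r$ register. The branch with $r$ returned to $0$ carries amplitude proportional to $U_{yx}$. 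The oracle's table entries live on indices $(x,y,r)$, which already accounts for $O(N\log\log(N/\epsilon))$ qubits once the phase precision is encoded in binary by a $\log\log$-size register.

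The key obstacle is that this naive version is a \emph{postselected} block encoding, with success amplitude only about $1/\sqrt N$. To avoid paying $\sqrt N$ rounds of amplitude amplification, I would use the locally-decodable-code/PIR idea flagged in the abstract. The oracle is not given the raw entries of $U$. Instead it is given a Reed--Muller-type (low-degree polynomial) encoding of the column vectors, with the $N$ coordinates embedded into a small grid over a field of size $\polylog(N/\epsilon)$. A single coherent query on a random line, followed by a fixed decoding unitary on the query register, then returns the target amplitude with the garbage depending only on the randomness and not on $x$. Because the garbage is independent of $x$, it can be disentangled coherently rather than postselected.

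\textbf{Steps 2 and 3: the erasure map and cleanup.} I would apply the same one-query construction to $E^\dagger$, with the oracle encoding the rows of $U^\dagger$; by unitarity, $E$ is obtained by running this in reverse. I would then use one more query to uncompute the random line and the oracle-value registers, exactly as in a two-server PIR, where the second answer cancels the first. The main difficulty I expect is Step 1: showing that the LDC decoding can be made \emph{coherent}, so that the junk state is exactly (or $\epsilon$-close to) independent of both $x$ and $U$ and can be returned to $\ket{0}$ by a fixed unitary. This is where I would first try a smoothed decoder of a Hadamard/Reed--Muller code, whose query distribution is uniform on a subspace.

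Finally, I would account for the total error: $\epsilon/3$ from phase discretization and $\epsilon/3$ from each imperfect decoding, combined via the triangle inequality for products of unitaries.
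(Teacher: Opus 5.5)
\textbf{There is a genuine gap: the step that is supposed to make the erasure $E$ tractable is false, so $E$ still carries the whole problem.}

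The map $E^\dagger$ sends $\ket{0}\ket{\psi_x}\mapsto\ket{x}\ket{\psi_x}$. It must therefore coherently \emph{compute the label $x$ from the state $U\ket{x}$}. On a basis input it gives $E^\dagger\ket{0}\ket{y}=\sum_{x,y'}\overline{U_{yx}}\,U_{y'x}\ket{x}\ket{y'}$. This rewrites the $y$ register, so it is not a preparation controlled on a classical label that an oracle could index.

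In fact $W^\dagger E^\dagger$ maps $\ket{0}\otimes U\ket{x}\mapsto\ket{x}\ket{0}$. Any implementation of $E$ therefore gives one of $U^\dagger$ with only one more query. Running ``the same one-query construction on the rows of $U^\dagger$'' is circular. The PIR-style cleanup in which the second answer cancels the first is never specified.

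Step 1 also has no concrete mechanism. You correctly note that averaging two phases leaves an $r\neq 0$ branch carrying $x$- and $U$-dependent garbage, but the Reed--Muller fix does not remove it. Each individual query of a smooth decoder is uniformly distributed, so it carries no information about the target coordinate. Local decoding of a degree-$d$ code needs a linear combination of $d+1$ evaluations along the line. The paper gets the three evaluations of its quadratic code from its three sequential queries, whose phases multiply. A degree-one (Hadamard/Bernstein--Vazirani) code avoids this, but it needs one oracle coordinate per matrix entry, i.e.\ the folklore $N^2$-size oracle.

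Your qubit count is also off. An oracle indexed by $(x,y,r)$ acts on $O(\log N)$ qubits; phase precision is not part of a diagonal oracle's input. A constant-query scheme with so small an oracle would be far stronger than the theorem being proved, which suggests the sketch is hiding the real difficulty.

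For comparison, the paper never splits $U$ into a preparation and an erasure.
\begin{itemize}
\item It first reduces to a traceless real symmetric orthogonal $S$ via \Cref{lem:eqiv-synthesis}.
\item It encodes $\ket{x}$ as the single-excitation oscillator state with amplitude proportional to $v_x e^{-\|\vec v\|^2/2}$, on $N$ registers of $K=\Theta(\log(N/\epsilon))$ grid points each. This is where the $O(N\log\log(N/\epsilon))$ qubits come from.
\item It queries the chirp $Q_S:\ket{\vec v}\mapsto e^{\frac{i}{2}\vec v^T S\vec v}\ket{\vec v}$.
\item Because $\Tr S=0$ and $S$ is an involution, $\calF Q_S\calF$ has kernel proportional to $e^{-\frac{i}{2}(\vec u+\vec v)^T S(\vec u+\vec v)}$ from $\ket{\vec u}$ to $\ket{\vec v}$. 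Together with the two outer queries, the phases multiply to $e^{-i\vec u^T S\vec v}$. This is the three-point identity $4f(\tfrac{\vec u+\vec v}{2})-f(\vec u)-f(\vec v)=2\vec u^T S\vec v$ for $f(\vec w)=\vec w^T S\vec w$, realized coherently.
\item A Gaussian integral using $\|S\vec u\|=\|\vec u\|$ turns this bilinear phase into the amplitude $-iS_{yx}$ exactly, with no postselection and no garbage.
\item The finite version compares the half-algorithm $\widehat\calF\widehat Q_S\widehat E$ (here $\widehat E$ is the paper's encoding map, not your erasure) with the ideal isometry $\widehat Q_S^{-1/2}\widetilde E(-iS)^{1/2}$. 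It bounds the discrepancy by Gaussian tails via Poisson summation.
\end{itemize}
Any repair of your plan would need an ingredient of this kind, one that converts phases into amplitudes without losing norm.
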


The oracle $\calO$ embeds the matrix entries of the unitary $U$ into a quadratic form in the phase of the diagonal oracle. That is, the oracle applies a phase of the form%
\footnote{
    For this to make sense as a diagonal phase, we want $\vec v^T U \vec v$ to be real-valued. We show in \Cref{lem:eqiv-synthesis} that it is equivalent without loss of generality to consider only real unitaries. In fact, we show that it is even equivalent if we restrict to real symmetric involutions. See \Cref{lem:eqiv-synthesis} for details.
}
$e^{i \vec v^T U \vec v}$ 
for onto a classical description of the vector $\vec v \in \R^N$. The three queries serve to implement a polynomial decoding of this quadratic form by three points along a line. 
The algorithm $A^{\calO}$ encodes its input into the single-excitation sector of an $N$-mode quantum Harmonic oscillator, which in some sense generalizes the classical one-hot encoding that the classical parallelization relies on.

The algorithm has a very simple form!

\begin{breakablealgorithm}
\caption{Three-Query Unitary Synthesis}
\label{alg:intro-synthesis}
\textbf{Input:} $\ket{\psi} \in \C^N$
\vspace{-0.3em}
\begin{enumerate}[itemsep=0pt]
    \item 
    \label[step]{step:intro-synthesis1}
    \textbf{Encode.} Coherently encode $\ket{\psi}$ into the single-excitation sector of the quantum harmonic oscillator. This is an $N$-dimensional subspace, with basis vectors resembling a generalized one-hot encoding of the basis of $\C^N$.
    \item 
    \label[step]{step:intro-synthesis2}
    \textbf{Query.} Apply the oracle $\calO$
    \item 
    \label[step]{step:intro-synthesis3}
    \textbf{Fourier Transform.} Apply the quantum Fourier transform on each register.
    \item 
    \label[step]{step:intro-synthesis4}
    \textbf{Query.}
    \item 
    \label[step]{step:intro-synthesis5}
    \textbf{Fourier Transform.}
    \item 
    \label[step]{step:intro-synthesis6}
    \textbf{Query.}
    \item 
    \label[step]{step:intro-synthesis7}
    \textbf{Decode.} Apply the inverse of \Cref{step:intro-synthesis1} to bring the state back to $\C^N$, recovering $U \ket{\psi}$.
\end{enumerate}
\end{breakablealgorithm}

\begin{theorem}[informal, see \Cref{thm:continuous-three-q-synthesis}]
    If we allow a continuous encoding as a quantum Harmonic oscillator state, then \Cref{alg:intro-synthesis} exactly recovers $U \ket{\psi}$.
\end{theorem}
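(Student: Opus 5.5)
The plan is to work entirely in the Gaussian (metaplectic) picture of the $N$-mode oscillator. By \Cref{lem:eqiv-synthesis} we may assume $U$ is a real symmetric involution, so $U = P_+ - P_-$ with $P_\pm$ orthogonal projectors. The continuous encoding sends $\ket{\psi}$ to $\sum_j \psi_j a_j^\dagger \ket{0}$. Its position wavefunction is proportional to $(\sum_j \psi_j x_j)\, e^{-|x|^2/2}$. The oracle is the phase $e^{i c\, x^T U x}$ for a real constant $c$ chosen per query, and the Fourier transform on each register is the passive rotation $e^{-i\frac{\pi}{2}\hat N}$, where $\hat N$ is the total number operator.

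\textbf{Step 1: diagonalize.} Let $O$ be a real orthogonal matrix with $O^T U O = \mathrm{diag}(\pm 1)$. The passive linear-optics unitary implementing the mode rotation $O$ has three properties: it commutes with every $e^{-i\theta \hat N}$, including the QFT steps; it conjugates $x^T U x$ into $\sum_k \lambda_k x_k^2$ with $\lambda_k = \pm 1$; and it preserves the single-excitation sector, acting there as $O$. It therefore suffices to analyze \Cref{alg:intro-synthesis} as a tensor product of single-mode circuits
\[
V_\lambda = e^{i a\lambda x^2}\, F\, e^{i b\lambda x^2}\, F\, e^{i a\lambda x^2}, \qquad \lambda \in \{\pm 1\},
\]
with the first and third queries sharing the parameter $a$ and the middle query using $b$. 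If the circuit later needs asymmetric parameters, I would allow $a_1 \neq a_3$.

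\textbf{Step 2: reduce to $2\times 2$ symplectic algebra.} Each factor of $V_\lambda$ is metaplectic. The phase $e^{ic x^2}$ acts as the shear $\begin{pmatrix}1&0\\2c&1\end{pmatrix}$ on $(x,p)$, and $F$ acts as rotation by $\pi/2$. The goal is to choose $a,b$ so that, for both $\lambda=+1$ and $\lambda=-1$, the symplectic matrix of $V_\lambda$ is a rotation $R(\phi_\lambda)$ with
\[
\phi_{-} - \phi_{+} \equiv \pi \pmod{2\pi}.
\]
Since the metaplectic representation is faithful up to a sign or phase, $V_\lambda = e^{i\gamma_\lambda} e^{-i\phi_\lambda \hat n}$ on that mode. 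Thus $V_\lambda$ fixes the vacuum up to the phase $e^{i\gamma_\lambda}$ and multiplies the one-photon state by the additional factor $e^{-i\phi_\lambda}$.

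\textbf{Step 3: assemble.} On the full single-excitation sector the product operator acts as
\[
\Big(\prod_k e^{i\gamma_{\lambda_k}}\Big)\, \mathrm{diag}\big(e^{-i\phi_{\lambda_k}}\big).
\]
The prefactor is a global phase, since it is fixed by the multiplicities of $\pm 1$ eigenvalues. The diagonal part equals $e^{-i\phi_+}\,\mathrm{diag}(\lambda_k)$ because $\phi_- - \phi_+ \equiv \pi$. Undoing $O$ gives $U$ up to a global phase. Any residual uniform rotation $e^{-i\phi_+ \hat N}$ is passive and commutes with everything, so it can be absorbed into the decode step (or the global phase can be fixed by a constant shift of the oracle). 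Because the state never leaves the single-excitation sector, the decode step returns $U\ket{\psi}$ exactly, with ancilla modes back in the vacuum.

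\textbf{Main obstacle.} The crux is Step 2: finding real $a,b$ so that the product of shear, rotation, shear, rotation, shear is a pure rotation for both signs of $\lambda$ simultaneously, with rotation angles differing by $\pi$. A naive computation shows that the middle block $R\,S(\lambda b)\,R$ equals $\begin{pmatrix}-1&\lambda b'\\0&-1\end{pmatrix}$, where $b'=2b$ is the shear strength of the middle query. Conjugating by the outer shears then gives a condition that may or may not admit a nontrivial solution. If the symmetric ansatz fails, I would try the following relaxations in order:
\begin{enumerate}
\item take distinct outer parameters $a_1 \neq a_3$;
\item replace the quarter-turn by a general passive rotation $e^{-i\theta\hat N}$, which the paper's Fourier step can presumably be adapted to;
\item accept targets $R(\phi_\pm)$ where one sign gives the identity and the other gives parity $R(\pi)$.
\end{enumerate}
Under any of these, the problem becomes solving a few trigonometric and polynomial equations in two or three real unknowns. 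I would also explicitly verify the metaplectic phase bookkeeping $e^{i\gamma_\lambda}$, which is where sign errors are most likely.
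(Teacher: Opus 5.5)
Your reduction is sound: the passive unitary for $O$ commutes with $\calF$, maps the encoding to itself, and turns $Q_U$ into single-mode chirps, and each single-mode circuit is fixed up to a scalar by its symplectic matrix because a unitary fixing $x$ and $p$ is scalar. But the proposal stops exactly at the step that carries the content. You never show that suitable chirp strengths exist, and you leave open that none might. As written this is a genuine gap. Relaxations (1) and (2) would also not prove this statement, since \Cref{alg:intro-synthesis} uses the Fourier transform and one fixed oracle for all three queries.

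The gap closes with one more line of your own computation. Write $a'=2a$ and conjugate your middle block by the outer shears:
\[
M_\lambda=
\begin{pmatrix}
a'b'-1 & \lambda b' \\
\lambda a'(a'b'-2) & a'b'-1
\end{pmatrix}.
\]
This is a rotation iff $b'(1+a'^2)=2a'$. In that case $\cos\phi_\lambda=\frac{a'^2-1}{a'^2+1}$ and $\sin\phi_\lambda=\lambda b'$, so $\phi_-=-\phi_+$. Your condition $\phi_--\phi_+\equiv\pi$ then forces $\phi_+=\pm\frac{\pi}{2}$, i.e.\ $a'=b'=\pm1$. So the symmetric ansatz works, essentially uniquely, with $a=b=\frac12$. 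That is exactly the paper's single oracle $Q_S$ with phase $e^{\frac12 i\vec v^TS\vec v}$, and it gives $M_\lambda=\begin{pmatrix}0&\lambda\\-\lambda&0\end{pmatrix}$. This is three shears making a quarter turn, since $\calF e^{\frac i2\lambda x^2}\calF=e^{\frac i2\lambda p^2}\calF^2$ and $\calF^2$ is parity.

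For the phase bookkeeping you flagged, push $\pi^{-1/4}e^{-x^2/2}$ through the circuit. This gives $\gamma_\lambda=\lambda\pi/4$, so in fact $V_\lambda=e^{i\lambda\pi/4}\calF^{\lambda}$ on each mode. The one-photon state of a $\lambda$-mode therefore picks up $e^{-i\lambda\pi/4}$, and the vacuum prefactor over all modes is $e^{i\pi\Tr(S)/4}$. Hence $E^\dagger Q_S\calF Q_S\calF Q_S E=-i\,e^{i\pi\Tr(S)/4}\,S$. For traceless $S$ this is exactly the paper's identity $S=iE^\dagger Q_S\calF Q_S\calF Q_S E$, and no residual rotation needs absorbing.

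Once completed, your route differs genuinely from the paper's. The paper never diagonalizes $S$:
\begin{itemize}
\item It computes the kernel of $\calF Q_S\calF$ directly by completing the square. It uses $S^2=\id$ and $\Tr S=0$ to evaluate the Fresnel integral as $(2\pi)^{N/2}$, which is the same fact as $e^{i\pi\Tr(S)/4}=1$ above.
\item It reads that kernel as a query at the midpoint $\frac{\vec u+\vec v}{2}$ with weight $-4$. The outer queries then leave the bilinear kernel $e^{-i\vec u^TS\vec v}$, via the polarization identity \Cref{eq:three-q-quadratic-decoding-identity}.
\item A final Gaussian integral gives $-i\bra{y}S\ket{x}$.
\end{itemize}
That argument is elementary and makes the Reed--Muller/LDC interpretation, which motivates the paper, explicit. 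Yours replaces the integrals with $2\times2$ symplectic algebra plus irreducibility of the Heisenberg representation. It explains the identity as quarter turns in opposite directions for the two eigenvalue signs, and it shows that tracelessness only serves to fix the global phase.
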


The intuition comes from observing a deep connection between unitary synthesis and the tasks of private information retrieval (PIR) and locally decodable codes (LDC).
In fact we can view the folklore Berstein-Vazirani-style algorithm (which uses 2 queries of size $N^2$ for clean synthesis~\cite{yuen2022unitarysynthesis,lombardi2024one}) as implementing a linear locally-decodable code. In fact, it coherently implements the 2-server PIR scheme of~\cite{ChorGoldreichKushilevitzSudan1995}!
The database consists of the matrix entries of the unitary $U$ in the standard basis. 
The oracle is a linear function $e^{i \Tr(UV)}$ in its input $V \in \C^{N\times N}$. 
The two queries are two queries to the underlying LDC, and serve to map $\ket{x}$ to a superposition over $\ket{y}$ while decoding the entry $\bra{y}U\ket{x}$ into the amplitude.

Our algorithm replaces the linear function in the phase with a quadratic function of the form $e^{i c\vec v^T U \vec v)}$ for $\vec v \in \R^N$.%
\footnote{
    This phase is equivalent to  $e^{i c \Tr(U \vec v \vec v^T))}$, which show the similarity to the folklore linear construction. Instead of testing on generic matrices $V$, we test it on rank-one matrices $\vec v^t \vec v$.
}
We show that our three queries serve to probe this quadratic function on three points along a line, and implement a quadratic-polynomial decoding of the matrix entries $\bra{y}U\ket{x}$ to similarly map $\ket{x}$ to a superposition over $\ket{y}$ with $\bra{y}U\ket{x}$ in the amplitude.
This identity is the matrix-valued form of a known continuous-variable gate-synthesis construction~\cite{Fiu03}, building on Hamiltonian-simulation techniques for continuous-variable systems~\cite{KHGC03}; the underlying metaplectic formalism is standard in continuous-variable quantum information~\cite{Fol89,LB99}.

While this is an exact implementation, it uses the continuous states of the quantum harmonic oscillator. In order to get a finite algorithm for \Cref{thm:intro-disc-three-q-synthesis}, we need to discretize the states. We show that by sampling a regular grid of $K = \log(N/\epsilon)$ points along each dimension, we can implement a discretized version of this algorithm, while only incurring operator norm error at most $\epsilon$. We use the Poisson summation formula to bound the error as a sum over the tails of the Gaussians in both the standard and Fourier bases. See \Cref{sec:discrete} for details.

Finally, we take the discretized version of \Cref{alg:intro-synthesis} and show that each of the seven steps can be implemented by a constant-circuit of one- and two-qubit gates and exponential-sized quantum fan-out gates. Moreover, each fan-out gate can be implemented by a binary tree of CNOT gates, thus recovering both parts of \Cref{thm:main-informal}, and showing that every unitary can be parallelized!

\paragraph{AI Disclosure}
The majority of the work that went into this paper was done by the human authors in the Fall of 2025 and Winter of 2026, before the current age of AI mathematics. The authors developed the connection to PIR and LDCs without the use of generative AI, and extended the folklore Berstein-Vazirani-style algorithm from a linear LDC to a quadratic one, as well as the connection to parallelizing unitaries. The quadratic form we initially considered made bounding the errors difficult. At a later stage, we began considering variants of the quadratic form oracle, and ChatGPT and Claude suggested this specific form, which made the algorithm and analysis simpler. Moreover, ChatGPT helped us understand the use of Poisson summation, which was the key technical tool to getting a finite discretized version of the algorithm.
While the authors know most of the constant depth implementation before consulting generative AI, Chat GPT was instrumental in discovering how to modify the constant depth QFT algorithm to remove the error greatly simplifying the analysis.
Every AI suggestion was thoroughly vetted and verified by the authors and the authors take full responsibility for the content and correctness of the paper.

\section{Preliminaries}

Throughout the paper, we take $[M]$ to be the notation for the set $\{0, \dots, M-1\}$.

\subsection{Unitary Synthesis Relative to Classical Oracles}

Our main goal will be to give quantum circuits, both in the plain model and relative to an oracle, for the unitary group $U(N)$.
Nevertheless, it will often be easier to work with the subgroup $O(N)$ of orthogonal (real-valued) unitaries. Moreover, we will even be convenient to use unitaries that are both real and symmetric, that is, where $U^\dagger = U^T = U$. This may seem like a significant restriction on the kinds of unitaries we can implement, but in fact, it does not change the problem at all. These tasks are in fact equivalent up to a constant. There are a number of ways to see this, but we present one such reduction in the following lemma.

\begin{lemma}[Equivalent formulations of unitary synthesis]
    \label{lem:eqiv-synthesis}
    The following are equivalent:
    \begin{enumerate}
        \item 
        \label{item:equiv-synthesis-complex}
        There exists a circuit $A(U)$ of size $\Theta(s)$, depth $\Theta(d)$, and width $\Theta(w)$ (and number of oracle queries $q$), such that for every \textbf{unitary} $U \in U(N)$, $A(U)$ implements $U$ up to operator norm distance $\epsilon$.
        \item 
        \label{item:equiv-synthesis-real}
        There exists a circuit $A(U)$ of size $\Theta(s)$, depth $\Theta(d)$, and width $\Theta(w)$ (and number of oracle queries $q$), such that for every \textbf{real unitary} $U \in O(N')$, $A(U)$ implements $U$ up to operator norm distance $\epsilon$, where $N' = \Theta(N)$.
        \item 
        \label{item:equiv-synthesis-real-sym}
        There exists a circuit $A(U)$ of size $\Theta(s)$, depth $\Theta(d)$, and width $\Theta(w)$ (and number of oracle queries $q$), such that for every \textbf{real symmetric involution}%
        \footnote{
            that is, a real unitary $U$ such that $U^{-1} = U^{\dagger} = U^T = U$
        }
        $U \in O(N'')$, $A(U)$ implements $U$ up to operator norm distance $\epsilon$, where $N'' = \Theta(N)$.
        \item 
        \label{item:equiv-synthesis-traceless-real-sym}
        There exists a circuit $A(U)$ of size $\Theta(s)$, depth $\Theta(d)$, and width $\Theta(w)$ (and number of oracle queries $q$), such that for every \textbf{traceless real symmetric involution}
        $U \in O(N''')$, $A(U)$ implements $U$ up to operator norm distance $\epsilon$, where $N''' = \Theta(N)$.
    \end{enumerate}
    In fact, the reductions between the cases induce a factor of at most four in the dimension and only an additive constant in the size, depth, and circuit width (for oracle circuits, the number of queries is unchanged).
\end{lemma}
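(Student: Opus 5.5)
The implications $4\Rightarrow 3\Rightarrow 2\Rightarrow 1$ each go "more general class from less general class" in the hard direction. The reverse directions ($1\Rightarrow 2\Rightarrow 3\Rightarrow 4$) are trivial: every traceless real symmetric involution is a real symmetric involution, hence real orthogonal, hence unitary. A circuit for the larger class therefore works verbatim with $N'''=N''=N'=N$. So the content is three reductions, each going to a structured matrix of dimension at most twice the previous.

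\textbf{Complex to real.} Given $U=A+iB\in U(N)$ with $A,B$ real, I will use the realification
\[
R(U)=\begin{pmatrix}A&-B\\ B&A\end{pmatrix}\in O(2N).
\]
$R$ is a ring homomorphism from $\C^{N\times N}$ to real $2N\times 2N$ matrices with $R(U^\dagger)=R(U)^T$, so $R(U)$ is orthogonal. On the qubit side, adjoin one ancilla qubit $a$ and use the isometry $V:\ket{\psi}\mapsto \frac{1}{\sqrt2}(\ket{\psi}\ket{0}-i\ket{\bar\psi}\ket{1})$, up to conventions. More cleanly, $\ket{x}\mapsto \ket{x}\ket{+i}$ with $\ket{\pm i}=(\ket0\pm i\ket1)/\sqrt2$. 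I then check that $R(U)$ acting on $\C^N\otimes\C^2$ (as $A\otimes I - B\otimes J$, where $J=\begin{pmatrix}0&-1\\1&0\end{pmatrix}$) satisfies $R(U)(\ket{\psi}\ket{+i})=(U\ket{\psi})\ket{+i}$, because $\ket{+i}$ is a $\pm i$ eigenvector of $J$ and $R(U)$ is $\C$-linear in that structure. Preparing $\ket{+i}$ and uncomputing it is a constant-depth single-qubit operation, and the operator norm error is unchanged since $V$ is an isometry. So dimension doubles and the overhead is additive $O(1)$.

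\textbf{Real to real symmetric involution.} Given $O\in O(N')$, set
\[
S=\begin{pmatrix}0&O^T\\ O&0\end{pmatrix}.
\]
This matrix is real and symmetric, and $S^2=I$. Applying $S$ to $\ket{0}\ket{\psi}$ yields $\ket{1}O\ket{\psi}$, and a final $X$ on the control qubit returns it to $\ket{0}$. Thus $O=(\bra0\otimes I)(X\otimes I)S(\ket0\otimes I)$, again with one ancilla, an additive constant, and error preserved.

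\textbf{Symmetric involution to traceless.} Given $S$, use $T=S\oplus(-S)=Z\otimes S$, which is a traceless real symmetric involution. On input $\ket0\ket\psi$ it acts as $S$.

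Composing the reductions multiplies the dimension by at most $4$ overall (by $2$ at each of two steps, plus the traceless step). This fits "at most four" if I note that $Z\otimes S$ for $S$ off-diagonal can be absorbed. Alternatively, the traceless step can be merged directly: the off-diagonal $S$ above is already traceless, since its diagonal blocks are zero. So $2\Rightarrow 4$ directly gives dimension $2N'$, and the total is $4N$.

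Oracle queries are unchanged because the oracle for $R(U)$ or $S$ is just the oracle for the new matrix, and the algorithm is called once. The only step that needs real care is the complex-to-real identity. I need to verify the sign conventions, so that $\ket{+i}$ (not $\ket{-i}$) is the correct eigenvector given $R$'s sign choice, and that $R(U)$ is indeed orthogonal.
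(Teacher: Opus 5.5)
Your proposal is correct and follows the paper's proof: realification with a $\pm i$-eigenstate ancilla for (2)$\Rightarrow$(1), the off-diagonal embedding (already traceless) for (4)$\Rightarrow$(2), and trivial restrictions for the remaining directions; your extra $Z\otimes S$ step is valid but unnecessary. On the sign you flagged: $\begin{pmatrix}A&-B\\B&A\end{pmatrix}$ equals $A\otimes I + B\otimes J$ (not $A\otimes I - B\otimes J$), and since $J\ket{-i}=i\ket{-i}$ the correct ancilla is $\ket{-i}=(\ket0-i\ket1)/\sqrt2$, which is the paper's $\ket{-_Y}$; with $\ket{+i}$ you would implement $\overline{U}$ instead of $U$.
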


\begin{proof}
    We show that 
    (\ref{item:equiv-synthesis-complex}) 
    $\Rightarrow$ 
    (\ref{item:equiv-synthesis-real-sym}) 
    $\Rightarrow$ 
    (\ref{item:equiv-synthesis-traceless-real-sym}) 
    $\Rightarrow$ 
    (\ref{item:equiv-synthesis-real}) 
    $\Rightarrow$ 
    (\ref{item:equiv-synthesis-complex}).

    \paragraph{(\ref{item:equiv-synthesis-real}) $\Rightarrow$ (\ref{item:equiv-synthesis-complex})}
    We can embed a unitary $U \in U(N)$ into its real representation of dimension $2N$. That is, we let 
    \begin{align}
        R_U := 
        \begin{pmatrix}
            \Re(U)
            &
            -\Im(U)
            \\
            \Im(U)
            &
            \phantom{-}\Re(U)
        \end{pmatrix}
    \end{align}
    This is an orthogonal representation since 
    \begin{align}
        R_U 
        R_U^T
        &
        =
        \begin{pmatrix}
            \Re(U)
            &
            -\Im(U)
            \\
            \Im(U)
            &
            \phantom{-}\Re(U)
        \end{pmatrix}
        \begin{pmatrix}
            \phantom{-}\Re(U)^T
            &
            \Im(U)^T
            \\
            -\Im(U)^T
            &
            \Re(U)^T
        \end{pmatrix}
        \\
        \allowdisplaybreaks
        &
        =
        \begin{pmatrix}
            \Re(U U^\dagger)
            &
            -\Im(U U^\dagger)
            \\
            \Im(U U^\dagger)
            &
            \phantom{-}\Re(U U^\dagger)
        \end{pmatrix}
        \\
        \allowdisplaybreaks
        &
        =
        \begin{pmatrix}
            \Id
            &
            \mathbf{0}
            \\
            \mathbf{0}
            &
            \Id
        \end{pmatrix}
        \,,
    \end{align}
    and
    \begin{align}
        R_U 
        R_V
        &
        =
        \begin{pmatrix}
            \Re(U)
            &
            -\Im(U)
            \\
            \Im(U)
            &
            \phantom{-}\Re(U)
        \end{pmatrix}
        \begin{pmatrix}
            \Re(V)
            &
            -\Im(V)
            \\
            \Im(V)
            &
            \phantom{-}\Re(V)
        \end{pmatrix}
        \\
        \allowdisplaybreaks
        &
        =
        \begin{pmatrix}
            \Re(U V)
            &
            -\Im(U V)
            \\
            \Im(U V)
            &
            \phantom{-}\Re(U V)
        \end{pmatrix}
        \\
        \allowdisplaybreaks
        &
        =
        R_{UV}
        \,.
    \end{align}
    Moreover $R_U \ket{-_Y} \otimes \ket{\psi} = \ket{-_Y} \otimes U \ket{\psi}$, where $\ket{-_Y} := \frac{1}{\sqrt{2}} \left(\ket{0} - i \ket{1}\right)$.

    Therefore, to implement $U$, we prepare a single-qubit
    ancilla in state $\ket{-_Y}$, apply the circuit
    $A(R_U)$, and then undo the preparation of the ancilla.
    In the exact case, this implements $U$ and returns
    the ancilla to $\ket{0}$.
    In the approximate case, the operator norm error
    remains at most $\epsilon$, since composing with
    fixed unitaries and restricting to the initialized
    ancilla subspace cannot increase the error.
    This doubles the dimension and adds only a constant
    number of single-qubit gates and one ancilla.
    The number of oracle queries is unchanged.

    \paragraph{(\ref{item:equiv-synthesis-traceless-real-sym}) $\Rightarrow$ (\ref{item:equiv-synthesis-real})}
    We can embed embed $R \in O(N')$ into a traceless symmetric real involution unitary $S_R$ defined as
    \begin{align}
        S_R := 
        \begin{pmatrix}
            \mathbf{0}
            &
            R
            \\
            R^T
            &
            \mathbf{0}
        \end{pmatrix}
    \end{align}
    We have that $(X \otimes \Id) S_R (\ket{1} \otimes \ket{\psi}) = \ket{1} \otimes R \ket{\psi}$.

    The matrix $S_R$ is real, symmetric, and traceless,
    and
    \begin{align*}
        S_R^2
        =
        \begin{pmatrix}
            RR^T & \mathbf{0} \\
            \mathbf{0} & R^T R
        \end{pmatrix}
        =
        \Id
        \,.
    \end{align*}
    Therefore, to implement $R$, we prepare a
    single-qubit ancilla in state $\ket{1}$ and apply
    the circuit $A(S_R)$.
    Since
    \begin{align*}
        S_R
        \left(
            \ket{1} \otimes \ket{\psi}
        \right)
        =
        \ket{0} \otimes R\ket{\psi}
        \,,
    \end{align*}
    the ancilla automatically returns to $\ket{0}$
    in the exact case.
    As above, the approximate implementation has
    operator norm error at most $\epsilon$.
    This doubles the dimension and adds only one
    single-qubit gate and one ancilla.
    The number of oracle queries is unchanged.

    \paragraph{
        (\ref{item:equiv-synthesis-real-sym})
        $\Rightarrow$ 
        (\ref{item:equiv-synthesis-traceless-real-sym}) 
        and
        (\ref{item:equiv-synthesis-complex}) 
        $\Rightarrow$ 
        (\ref{item:equiv-synthesis-real-sym})
    }
    These follow trivially.
\end{proof}

\subsection{Schwartz Space}

The Hilbert space $L^2(\R^N)$ consists of measurable
functions $f:\R^N \to \C$ whose squared magnitude
has finite integral:
\begin{align*}
    \norm{f}_2^2
    \coloneq
    \int_{\R^N}
    \abs{f(\vec u)}^2
    \,
    d\vec u
    <
    \infty
    \,.
\end{align*}
We identify functions that agree almost everywhere.
The inner product is
\begin{align*}
    \langle f,g\rangle
    \coloneq
    \int_{\R^N}
    \overline{f(\vec u)}
    \,
    g(\vec u)
    \,
    d\vec u
    \,.
\end{align*}
A normalized function $\psi \in L^2(\R^N)$ describes
a quantum state with wavefunction $\psi$, which we
write as
$
    \ket{\psi}
    =
    \int_{\R^N}
    \psi(\vec u)
    \ket{\vec u}
    d\vec u
$.

The Schwartz space $\calS(\R^N)$ consists of all
infinitely differentiable functions
$f:\R^N \to \C$ such that the function and all of
its derivatives decay faster than any reciprocal
polynomial.
Every Schwartz function belongs to $L^2(\R^N)$.
Unlike a general element of $L^2(\R^N)$, a Schwartz
function has a unique smooth representative, so
its value at each point is well-defined.
The Fourier transform maps Schwartz space to itself.

We will use the Poisson summation formula, which relates the sum of a Schwartz function in $\calS(R^N)$ over an integer lattice the corresponding sum of its Fourier transform (for instance, see \cite[Chapter~8, Proposition~8.2]{stein2011functional}).
\begin{lemma}[Poisson summation]
    Let $f:\R^N \to \C$ be a function in the Schwartz space $\calS(\R^N)$, and let 
    \begin{align*}
        (\calF_{2\pi} f)(\vec v)
        =
        \int_{\R^N}
        f(\vec u)
        \;
        e^{-2\pi i \vec u^T \vec v}
        \;
        d \vec u
    \end{align*}
    be its Fourier transform.
    Then
    \begin{align}
        \sum_{\vec w \in \Z^N}
        f(\vec w)
        =
        \sum_{\vec w \in \Z^N}
        (\calF_{2\pi} f)(\vec w)
    \end{align}
    where $\Z^N$ denotes the collection of integer lattice points in $\R^N$. Both sums converge absolutely.
\end{lemma}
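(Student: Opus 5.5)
The plan is to prove the identity first on the one-dimensional periodization and then pass to the full sum. Define the periodization
\begin{align*}
    F(\vec x)
    \coloneq
    \sum_{\vec w \in \Z^N}
    f(\vec x + \vec w)
    \,,
\end{align*}
a function on the torus $\R^N/\Z^N$. The statement then reduces to evaluating $F$ at $\vec x = \vec 0$ via its Fourier series.

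The first step is to show that $F$ is well-defined and smooth. Since $f \in \calS(\R^N)$, for every multi-index $\alpha$ we have $\abs{\partial^\alpha f(\vec u)} \le C_\alpha (1+\norm{\vec u})^{-N-1}$. On any compact set of $\vec x$, the series for $\partial^\alpha F$ is dominated termwise by a convergent sum $\sum_{\vec w} (1+\norm{\vec w})^{-N-1}$ (shifted by a constant). The Weierstrass M-test therefore gives uniform convergence of all derivative series, so $F$ is $C^\infty$ and $\Z^N$-periodic. The same bound gives absolute convergence of $\sum_{\vec w} f(\vec w)$, and since $\calF_{2\pi} f \in \calS(\R^N)$, the sum $\sum_{\vec w} (\calF_{2\pi} f)(\vec w)$ also converges absolutely.

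The second step is to compute the Fourier coefficients of $F$ on $[0,1)^N$. For $\vec k \in \Z^N$, exchanging sum and integral (justified by uniform convergence) and tiling $\R^N$ by translates of the unit cube gives
\begin{align*}
    \int_{[0,1)^N}
    F(\vec x)\, e^{-2\pi i \vec k^T \vec x}\, d\vec x
    =
    \sum_{\vec w}
    \int_{[0,1)^N + \vec w}
    f(\vec u)\, e^{-2\pi i \vec k^T \vec u}\, d\vec u
    =
    (\calF_{2\pi} f)(\vec k)
    \,,
\end{align*}
where we used $e^{-2\pi i \vec k^T \vec w} = 1$.

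The third step, which I expect to be the main point requiring care, is to show that the Fourier series of $F$ converges to $F$ pointwise, and in particular at $\vec 0$. Since $F$ is smooth and periodic, integrating by parts $N+1$ times shows that its coefficients decay like $\abs{\vec k}^{-N-1}$, so they are absolutely summable. The series $\sum_{\vec k} (\calF_{2\pi} f)(\vec k)\, e^{2\pi i \vec k^T \vec x}$ therefore converges uniformly to a continuous function $G$. Next, $F - G$ is continuous and all of its Fourier coefficients vanish. By completeness of the exponentials in $L^2$ of the torus (for instance via Fejér kernels or Stone--Weierstrass density of trigonometric polynomials), $F - G = 0$ almost everywhere, hence everywhere by continuity. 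Evaluating $F(\vec 0) = G(\vec 0)$ yields the claimed identity.

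Alternatively, since the paper cites \cite{stein2011functional} for exactly this statement, one could simply invoke it. The argument above is the self-contained route I would write out.
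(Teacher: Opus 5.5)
Your periodization argument is correct and complete. The paper does not prove this lemma itself but invokes it from \cite{stein2011functional}, and what you wrote is essentially the standard textbook proof behind that citation. Two cosmetic points: the periodization is $N$-dimensional (on $\R^N/\Z^N$), not one-dimensional. Also, the integration by parts in your third step is unnecessary: you have already identified the Fourier coefficients of $F$ as $(\calF_{2\pi} f)(\vec k)$, and these are absolutely summable because $\calF_{2\pi} f \in \calS(\R^N)$.
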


For convenience, we will mainly use the Fourier transform without the factor of $2\pi$ in the exponent.
\begin{align*}
    (\calF f)(\vec v)
    =
    \frac{1}{(2\pi)^{N/2}}
    \,
    (\calF_{2\pi} f)
    \left(
        \frac{\vec v}{2\pi}
    \right)
    =
    \frac{1}{(2\pi)^{N/2}}
    \,
    \int_{\R^N}
    f(\vec u)
    \;
    e^{- i \vec u^T \vec v}
    \;
    d \vec u
    \,,
\end{align*}
under which Poisson summation says that 
\begin{align}
    \sum_{\vec w \in \Z^N}
    f(\vec w)
    =
    {(2\pi)^{N/2}}
    \sum_{\vec w \in \Z^N}
    (\calF f)(
        2\pi 
        \,
        \vec w
    )
    \,.
\end{align}
Moreover, we will be interested in evaluating the Poisson sum on a shifted lattice of spacing $\Delta > 0$ and dimensionless offset $\vec \ell \in \R^N$. 
Let 
$
    g(\vec u) 
    = 
    f
    \left(
        (
            \vec u 
            + 
            \vec \ell
        ) 
        \Delta
    \right)
    \,.
$
Then
\begin{align*}
    (\calF_{2\pi} \, g)
    \big(
        \,
        \vec b
        \,
    \big)
    &
    =
    \int_{\R^N}
    g(\vec u)
    \;
    e^{-2\pi i \vec u^T \vec b}
    \;
    d \vec u
    \allowdisplaybreaks
    \\
    &
    =
    \int_{\R^N}
    f
    \left(
    (\vec u + \vec \ell) \Delta
    \right)
    \;
    e^{-2\pi i \vec u^T \vec b}
    \;
    d \vec u
    \allowdisplaybreaks
    \\
\intertext{
Let 
$
    \vec a 
    = 
    (
        \vec u 
        + 
        \vec \ell
    ) 
    \Delta
    \,,
$ 
such that 
$
    \vec u 
    = 
    \frac{\vec a}{\Delta}
    -
    \vec \ell
    \,,
$ 
and
$
    d \vec u 
    = 
    \Delta^{-N}
    \,
    d \vec a
    \,.
$
}
    &
    =
    \Delta^{-N}
    \,
    \int_{\R^N}
    f
    \left(
        \vec a
    \right)
    \;
    e^{
        -2\pi i 
        \,
        (
            \frac{\vec a}{\Delta}
            -
            \vec \ell
        )^T 
        \,
        \vec b
    }
    \;
    d \vec a
    \allowdisplaybreaks
    \\
    &
    =
    \Delta^{-N}
    \,
    e^{
        2\pi i 
        \,
        \vec \ell^{\;T} 
        \vec b
    }
    \;
    \int_{\R^N}
    f
    \left(
        \vec a
    \right)
    \;
    e^{
        -2\pi i 
        \,
        \frac{1}{\Delta}
        \,
        \vec a^T 
        \,
        \vec b
    }
    \;
    d \vec a
    \allowdisplaybreaks
    \\
    &
    =
    \Delta^{-N}
    \,
    e^{
        2\pi i 
        \,
        \vec \ell^{\;T} 
        \vec b
    }
    \;
    (\calF_{2\pi} \, f)
    \left(
        \frac{1}{\Delta}
        \,
        \vec b
        \,
    \right)
    \allowdisplaybreaks
    \\
    &
    =
    \frac{
        (2\pi)^{N/2}
    }{
        \Delta^{N}
    }
    \,
    e^{
        2\pi i 
        \,
        \vec \ell^{\;T} 
        \vec b
    }
    \;
    (\calF \, f)
    \left(
        \frac{2\pi}{\Delta}
        \,
        \vec b
        \,
    \right)
    \,.
\end{align*}

Then the Poisson sum over the lattice
$
    \left(
        \Z^N
        +
        \vec \ell
        \,
    \right)
    \Delta
$
is given by
\begin{align}
    \frac
    {
        \Delta^{N}
    }
    {
        (2\pi)^{N/2}
    }
    \sum_{\vec w \in \Z^N}
    f
    \left(
        (
            \vec w 
            + 
            \vec \ell
            \,
        ) 
        \Delta
    \right)
    =
    \sum_{\vec w \in \Z^N}
    e^{
        2\pi i 
        \,
        \vec \ell^{\;T} 
        \vec w
    }
    \;
    (\calF \, f)
    \left(
        \frac{2\pi}{\Delta}
        \,
        \vec w
        \,
    \right)
    \,.
    \label{eq:scaled-shifted-poisson-sum}
\end{align}
In particular, on the rescaled half-integer lattice,
$
    \left(
        \Z
        +
        \frac12
        \vec 1
    \right)^N
    \Delta
    \,,
$
where
$
    \vec 1
    = (1, \dots, 1)^T
$,
it gives 
\begin{align}
    \frac
    {
        \Delta^{N}
    }
    {
        (2\pi)^{N/2}
    }
    \sum_{\vec w \in \Z^N}
    f
    \left(
        \left(
            \vec w 
            + 
            \frac12
            \vec 1
        \right)
        \Delta
    \right)
    =
    \sum_{\vec w \in \Z^N}
    (-1)^{
        \sum_{j = 1}^N w_j
    }
    \;
    (\calF \, f)
    \left(
        \frac{2\pi}{\Delta}
        \,
        \vec w
        \,
    \right)
    \,.
    \label{eq:scaled-half-integer-poisson-sum}
\end{align}

\subsection{Other Useful Lemmas}

We will need the following lemma which relates the operator norm distance of two isometries to the distance when one isometry is rescaled.

\begin{lemma}
    \label{lem:isometries-distance-rescaled}
    Let $V$ and $W$ be two isometries from Hilbert space $\calH_1$ to $\calH_2$, and let $\beta > 0$. Then
    \begin{align}
        \left\lVert
            V
            -
            W
        \right\rVert
        \le
        \frac{2}{1+\beta}
        \left\lVert
            V
            -
            \beta
            W
        \right\rVert
    \end{align}
\end{lemma}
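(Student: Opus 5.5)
We need to prove $\norm{V-W}\le \frac{2}{1+\beta}\norm{V-\beta W}$ for isometries $V,W$ and $\beta>0$. The plan is to reduce to a statement about unit vectors. By definition of the operator norm it suffices to show that, for every unit vector $\ket{x}\in\calH_1$, the vectors $a := V\ket{x}$ and $b := W\ket{x}$ satisfy
\begin{align*}
    \norm{a - b} \le \frac{2}{1+\beta}\norm{a - \beta b} \,.
\end{align*}
Since $V$ and $W$ are isometries, $a$ and $b$ are unit vectors. Taking the supremum over $\ket{x}$ on both sides then gives the lemma.

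To prove the vector inequality, write $c := \Re\langle a, b\rangle \in [-1,1]$. Then
\begin{align*}
    \norm{a-b}^2 = 2 - 2c \,, \qquad \norm{a-\beta b}^2 = 1 + \beta^2 - 2\beta c \,.
\end{align*}
The claim is therefore equivalent to
\begin{align*}
    (1+\beta)^2(2-2c) \le 4(1+\beta^2 - 2\beta c) \,.
\end{align*}
Both sides are affine in $c$, so it is enough to check the two endpoints $c=\pm1$.
\begin{itemize}
    \item At $c=1$ the left side is $0$ and the right side is $4(1-\beta)^2\ge 0$.
    \item At $c=-1$ the left side is $4(1+\beta)^2$ and the right side is $4(1+\beta)^2$, so equality holds.
\end{itemize}
The inequality therefore holds for all $c\in[-1,1]$.

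Alternatively, one can argue geometrically. Subtract the two sides to get
\begin{align*}
    4(1+\beta^2-2\beta c) - (1+\beta)^2(2-2c) = 2(1-\beta)^2(1+c) \ge 0 \,,
\end{align*}
which gives the inequality directly in one line.

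There is no real obstacle here. The only point needing care is to pass from the operator-norm statement to the pointwise statement, which is legitimate because the same vector $\ket{x}$ is used on both sides and the right-hand side is bounded by $\frac{2}{1+\beta}\norm{V-\beta W}$. This uses only that $V$ and $W$ preserve norms; no structure of $\calH_2$ beyond the inner product is needed.
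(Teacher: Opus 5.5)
Your proof is correct and is essentially the paper's own argument. The paper also reduces to a unit vector $\ket{\psi}$ and rewrites $4\lVert V\ket{\psi}-\beta W\ket{\psi}\rVert^2 = (1+\beta)^2\lVert V\ket{\psi}-W\ket{\psi}\rVert^2 + (1-\beta)^2\lVert V\ket{\psi}+W\ket{\psi}\rVert^2$, and your slack term $2(1-\beta)^2(1+c)$ is exactly the term $(1-\beta)^2\lVert a+b\rVert^2$ that the paper drops.
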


\begin{proof}
    Let $\ket{\psi}$ be any normalized quantum state in $\calH_1$. We have that
    \begin{align*}
        \left\lVert
            V
            \ket{\psi}
            -
            \beta
            W
            \ket{\psi}
        \right\rVert
        ^2
        &
        =
        1
        +
        \beta^2
        -
        2\beta
        \,
        \Re
        \bra{\psi}
        V^\dagger
        W
        \ket{\psi}
        \allowdisplaybreaks
        \\
        &
        =
        \frac12
        \left(
            (
                1
                -
                \beta
            )^2
            +
            (
                1
                +
                \beta
            )^2
        \right)
        +
        \frac12
        \left(
            (
                1
                -
                \beta
            )^2
            -
            (
                1
                +
                \beta
            )^2
        \right)
        \Re
        \bra{\psi}
        V^\dagger
        W
        \ket{\psi}
        \allowdisplaybreaks
        \\
        &
        =
        \frac14
        (
            1
            +
            \beta
        )^2
        \left(
            2
            -
            2
            \,
            \Re
            \bra{\psi}
            V^\dagger
            W
            \ket{\psi}
        \right)
        +
        \frac14
        (
            1
            -
            \beta
        )^2
        \left(
            2
            +
            2
            \,
            \Re
            \bra{\psi}
            V^\dagger
            W
            \ket{\psi}
        \right)
        \allowdisplaybreaks
        \\
        &
        =
        \frac14
        (
            1
            +
            \beta
        )^2
        \left\lVert
            V
            \ket{\psi}
            -
            W
            \ket{\psi}
        \right\rVert^2
        +
        \frac14
        (
            1
            -
            \beta
        )^2
        \left\lVert
            V
            \ket{\psi}
            +
            W
            \ket{\psi}
        \right\rVert^2
        \allowdisplaybreaks
        \\
        &
        \ge
        \frac14
        (
            1
            +
            \beta
        )^2
        \left\lVert
            V
            \ket{\psi}
            -
            W
            \ket{\psi}
        \right\rVert^2
        \qedhere
    \end{align*}
\end{proof}

\subsection{The Oracle Model}

\subsection{The Constant Depth Model}
\section{Three-Query Unitary Synthesis}
\label{sec:continuous-alg}

We now describe the continuous version of the 3-query algorithm. In \Cref{sec:discrete}, we show how to discretize this continuous algorithm into a finite, but approximate algorithm. In \Cref{sec:shallow}, we show how to implement each part of the algorithm in a shallow circuit.

We assume here for simplicity that the target unitary $S \in U(N)$ is both real and symmetric (that is, we assume that $S^\dagger = S^T = S$). Note that we can assume this without loss of generality by \Cref{lem:eqiv-synthesis}.

\subsection{The Diagonal Quadratic-Phase Oracle}
Our diagonal oracle will take as input $N$ registers $\reg{V}_{0}, \dots, \reg{V}_{N-1}$. Each register will contain a real number $v_i \in \R$, which we collect as a vector $\vec v \in \R^N$.
The oracle will act as 

\begin{align}
    Q_S
    :
    \ket{\vec v} 
    \mapsto 
    e^{\frac12 i \, \vec v^T S \vec v}
    \ket{\vec v}
    \,.
\end{align}

In signal processing, this is often called a \emph{chirp}. This is because along each the eigendirection of $S$, it puts a phase which oscillates with increasingly higher (or lower for negative eigenvalues) frequencies as the magnitude increases. When treated as an audio wave, it sounds much like a bird's chirp.

\subsection{The Reed Muller Code}

The phase $f(\vec v) = \vec v^T S \vec v$ is a quadratic polynomial in the entries of $\vec v$, whose coefficients are determined by the unitary $S$. 
Viewed abstractly, this is a Reed-Muller code~\cite{muller1954application,reed1954class}, and we know that an entry $S_{i,j}$ of $S$ can be recovered by querying $f$ at three random points along an appropriately chosen line.

For instance, let $S$ be a hollow matrix with no diagonal (that is, $S_{ii} = 0 \; \forall i$, for instance the block matrix 
$
    \begin{pmatrix}
        \mathbf{0}
        &
        R
        \\
        R^T
        &
        \mathbf{0}
    \end{pmatrix}
$
in \Cref{lem:eqiv-synthesis}), and take an arbitrary point $\vec v_0$, and let $\vec v_{\pm 1} = v_0 \pm \frac12 (e_i - e_j)$.
Then
\begin{align*}
    f(\vec v_{\pm 1})
    &
    =
    \vec v_{\pm 1}^T S \vec v_{\pm 1}
    \allowdisplaybreaks
    \\
    &
    =
    \left(
        v_0 
        \pm 
        \frac12 
        (e_i - e_j)
    \right)^T 
    \cdot 
    S 
    \cdot 
    \left(
        v_0 
        \pm 
        \frac12 
        (e_i - e_j)
    \right)
    \allowdisplaybreaks
    \\
    &
    =
    \vec v_0^T S \vec v_0
    \pm
    \frac12 (e_i - e_j)^T S v_0
    \pm
    \frac12 v_0^T S (e_i - e_j)
    +
    \frac14 (e_i - e_j)^T S (e_i - e_j)
    \allowdisplaybreaks
    \\
    &
    =
    \vec v_0^T S \vec v_0
    \pm
    (e_i - e_j)^T S v_0
    +
    \frac14 (e_i - e_j)^T S (e_i - e_j)
    \allowdisplaybreaks
    \\
    &
    =
    \vec v_0^T S \vec v_0
    \pm
    (e_i - e_j)^T S v_0
    +
    \frac14
    \left(
        e_i^T S e_i
        +
        e_j^T S e_j
        -
        e_i^T S e_j
        -
        e_j^T S e_i
    \right)
    \allowdisplaybreaks
    \\
    &
    =
    \vec v_0^T S \vec v_0
    \pm
    (e_i - e_j)^T S v_0
    -
    \frac12 
    e_i^T S e_j
\end{align*}

By querying $f$ at these three points, we can recover $S_{ij}$ as

\begin{align}
    2
    f(\vec v_{0})
    -
    f(\vec v_{+ 1})
    -
    f(\vec v_{- 1})
    &
    =
    e_i^T S e_j
    =
    S_{ij}
\end{align}

We could also recover any entry $\vec u^T S \vec v$ through 3 queries on $\vec u$, $\vec v$, and their midpoint $\frac{\vec u + \vec v}{2}$ as

\begin{align}
    4
    f
    \left(
        \frac{\vec u + \vec v}{2}
    \right)
    -
    f(\vec u)
    -
    f(\vec v)
    =
    (\vec u + \vec v)^T S (\vec u + \vec v)
    -
    \vec u^T S \vec u
    -
    \vec v^T S \vec v
    =
    2
    \vec u^T S \vec v
    \label{eq:three-q-quadratic-decoding-identity}
\end{align}

\subsection{The State Encoding}
We encode the state in the single-excitation sector of the $N$-mode quantum harmonic oscillator. These are states in the Hilbert space $L^2\left(\R^N\right)$ of square-integrable functions over $\R^N$.
The single excitation states are spanned by basis vectors

\begin{align}
    \ket{\psi_x} 
    \coloneq 
    \sqrt{
        \frac{2}{\pi^{N/2}}
    }
    \;
    \int_{\R^N}
    \; 
    e^{-\frac12 \norm{\vec v}_2^2}
    \;
    v_x
    \;
    \ket{\vec v}
    \;
    d\vec v
    \label{eq:harmonic-osc-states}
\end{align}
where $x \in [N]$. Moreover, these basis states are in tensor product across the different registers: letting 
$
    \ket{h_0}
    \coloneq
    \frac{1}{\sqrt{\pi}}
    \int_{\R}
    e^{
        -
        \frac12 
        v^2
    }
    \ket{v}
    d v
$
and
$
    \ket{h_1}
    \coloneq
    \sqrt{
        \frac{2}{\pi}
    }
    \int_{\R}
    v
    \,
    e^{
        -
        \frac12 
        v^2
    }
    \ket{v}
    d v
$,
we have that
\begin{align*}
    \ket{\psi_x} 
    =
    \ket{h_0}_{\reg{V}_{0}}
    \otimes
    \dots
    \otimes
    \ket{h_0}_{\reg{V}_{x-1}}
    \otimes
    \ket{h_1}_{\reg{V}_{x}}
    \otimes
    \ket{h_0}_{\reg{V}_{x+1}}
    \otimes
    \dots
    \otimes
    \ket{h_0}_{\reg{V}_{N-1}}
    \,.
\end{align*}
In other words, it resembles a one-hot encoding of $x$, but where $\ket0$ and $\ket1$ at each qubit is replaced by $\ket{h_0}$ and $\ket{h_1}$.

These basis states are orthonormal, since we have that 
\begin{align*}
    \braket{\psi_y}{\psi_x} 
    &
    = 
    \frac{2}{\pi^{N/2}}
    \iint_{\R^N}
    \; 
    e^{-\frac12 \norm{\vec u}_2^2 -\frac12 \norm{\vec v}_2^2}
    \;
    u_y
    v_x
    \;
    \delta^N\!(\vec u - \vec v)
    \;
    d\vec u
    \,
    d\vec v
    \allowdisplaybreaks
    \\
    &
    = 
    \frac{2}{\pi^{N/2}}
    \int_{\R^N}
    \; 
    e^{-\norm{\vec v}_2^2}
    \;
    v_y
    v_x
    \;
    d\vec v
    \allowdisplaybreaks
    \\
    &
    =
    -
    \frac{1}{\pi^{N/2}}
    \int_{\R^N}
    \; 
    v_y
    \;
    \frac{\partial}{\partial v_x}
    e^{-\norm{\vec v}_2^2}
    \;
    d\vec v
    \allowdisplaybreaks
    \\
    &
    =
    \left[
        e^{-\norm{\vec v}_2^2}
        v_y
    \right]_{-\infty}^{\infty}
    +
    \frac{1}{\pi^{N/2}}
    \int_{\R^N}
    \; 
    e^{-\norm{\vec v}_2^2}
    \;
    \frac{\partial}{\partial v_x}
    v_y
    \;
    d\vec v
    \tag{integration by parts}
    \allowdisplaybreaks
    \\
    &
    =
    \frac{1}{\pi^{N/2}}
    \int_{\R^N}
    \; 
    e^{-\norm{\vec v}_2^2}
    \;
    \frac{\partial v_y}{\partial v_x}
    \;
    d\vec v
    \allowdisplaybreaks
    \\
    &
    =
    \frac{\delta_{yx}}{\pi^{N/2}}
    \int_{\R^N}
    \; 
    e^{-\norm{\vec v}_2^2}
    \;
    d\vec v
    \allowdisplaybreaks
    \\
    &
    =
    \delta_{yx}
    \numberthis
    \label{eq:orthog-of-oscillator-states}
    \,.
\end{align*}

Moreover, we define the encoding isometry $E$ to map from $\C^N$ into this subspace of $L^2\left(\R^N\right)$ as
\begin{align}
    E 
    = 
    \sum_{x \in [N]}
    \ketbra{\psi_x}{x}
    \label{eq:encoding-isometry}
    \,.
\end{align}
We call $E^\dagger$ the decoding map, which is a coisometry on $L^{2}\left(\R^N\right)$.

\subsection{The Fourier Transform}
The quantum Fourier transform on $L^2\left(\R^N\right)$ is the unitary defined as 
\begin{align}
    \calF
    :
    \ket{\vec u}
    \mapsto
    \frac{1}{(2\pi)^{N/2}}
    \int_{\R^N}
    \;
    e^{-i \, \vec u^T \vec w}
    \ket{\vec w}
    d \vec w
    \label{eq:continuous-fourier-transform}
\end{align}

\subsection{The Continuous Algorithm}
We now present the continuous version of our algorithm. It applies the encoding isometry to encode the state, and then alternates three queries to the quadratic phase oracle with two Fourier transforms before decoding it back to a state on $\C^N$.

\begin{theorem}[Continuous Three-Query Unitary Synthesis]
    \label{thm:continuous-three-q-synthesis}
    Let $S \in U(N)$ be a traceless real symmetric unitary (recall by \Cref{lem:eqiv-synthesis} that this is without loss of generality, as synthesis for such unitaries is equivalent to that for all unitaries). Then the following is an exact identity that recovers the target unitary $S$:

    \begin{align}
        S
        =
        i
        \,
        E^\dagger
        \,
        Q_S
        \,
        \calF
        \,
        Q_S
        \,
        \calF
        \,
        Q_S
        \,
        E
        \label{eq:continuous-three-q-synth}
    \end{align}
\end{theorem}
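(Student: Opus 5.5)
The plan is to reduce the $N$-mode identity to a product of one-mode identities by diagonalizing $S$, and then to verify each one-mode identity by a direct Gaussian computation.

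\textbf{Step 1: reduce to diagonal $S$.} Since $S$ is a real symmetric involution, we can write $S = O D O^T$ with $O \in O(N)$ and $D = \mathrm{diag}(d_0,\dots,d_{N-1})$, $d_k \in \{\pm 1\}$. Tracelessness means exactly $N/2$ of the $d_k$ equal $+1$ and $N/2$ equal $-1$.

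Let $R_O : \ket{\vec v} \mapsto \ket{O\vec v}$ be the induced unitary on $L^2(\R^N)$. We check three intertwining relations.
\begin{itemize}
\item $R_O$ commutes with $\calF$. This follows by substituting $\vec w \mapsto O\vec w$ in \eqref{eq:continuous-fourier-transform} and using $(O\vec u)^T (O\vec w) = \vec u^T \vec w$.
\item $R_O^\dagger Q_S R_O = Q_D$. This holds because $(O\vec v)^T S (O\vec v) = \vec v^T D \vec v$.
\item $R_O E = E\, O$. This holds because $e^{-\frac12\norm{\vec v}_2^2}$ is rotation invariant, and the linear factor $v_x$ transforms as a vector, i.e. $R_O\ket{\psi_x} = \sum_y O_{yx}\ket{\psi_y}$.
\end{itemize}
With these, $E^\dagger Q_S \calF Q_S \calF Q_S E = O\,(E^\dagger Q_D \calF Q_D \calF Q_D E)\,O^T$. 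It therefore suffices to prove \eqref{eq:continuous-three-q-synth} for $S = D$.

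\textbf{Step 2: factor into single modes.} For diagonal $D$, both $Q_D$ and $\calF$ factor as tensor products over the registers $\reg{V}_k$. The one-mode factors are the chirp $q_{d}: \ket{v}\mapsto e^{\frac i2 d v^2}\ket{v}$ and the one-dimensional Fourier transform $F$. So the whole operator is $\bigotimes_k M_{d_k}$ with $M_d := q_d F q_d F q_d$.

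The core claim is that each $M_d$ acts diagonally on $\ket{h_0}$ and $\ket{h_1}$:
\[
M_d\ket{h_0} = \alpha_d\ket{h_0}, \qquad M_d\ket{h_1} = \alpha_d \beta_d\ket{h_1},
\]
with explicit unit-modulus constants $\alpha_d, \beta_d$ satisfying $\beta_d = d\cdot\beta$ for a common $\beta$.

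Given this, $\ket{\psi_x}$ is mapped to
\[
\Big(\prod_k \alpha_{d_k}\Big)\, d_x \beta\, \ket{\psi_x}.
\]
Tracelessness gives $\prod_k \alpha_{d_k} = (\alpha_+\alpha_-)^{N/2}$. The identity then reduces to checking $(\alpha_+\alpha_-)^{N/2}\beta = -i$, i.e. $i\,E^\dagger(\cdots)E = D$. I expect this to follow from $\alpha_+\alpha_- = 1$, since $q_{-1}$ and $F$ are the complex conjugates and inverse-related versions of $q_{+1}$ and $F$, together with $\beta = -i$.

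\textbf{Step 3: compute the one-mode constants.} I would compute $M_d$ on the generating function $g_t(v) = e^{-\frac12 v^2 + t v}$. Its coefficients of $t^0$ and $t^1$ are proportional to $h_0$ and $h_1$. Applying $q_d$ multiplies by a Gaussian chirp, and $F$ is a complex Gaussian integral, which can be evaluated by completing the square using the standard analytic continuation of $\int e^{-\frac a2 v^2 + bv}dv = \sqrt{2\pi/a}\,e^{b^2/(2a)}$ for $\Re a \ge 0$. Each step therefore keeps the form $c\, e^{-\frac a2 v^2 + b t v + \gamma t^2}$, and one tracks the parameters $(a,b,c,\gamma)$ through the five factors.

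As a cross-check, the metaplectic picture says the composite symplectic matrix of $\begin{pmatrix}1&0\\ d&1\end{pmatrix}\begin{pmatrix}0&1\\-1&0\end{pmatrix}\begin{pmatrix}1&0\\ d&1\end{pmatrix}\cdots$ should be $\pm$ identity, with the sign depending on $d$. This forces the final width back to $a = 1$ and a $t$-coefficient of $\pm$ a constant.

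\textbf{Main obstacle.} The delicate part is getting the phases exactly right: the square-root branches $\sqrt{1 \mp i}$ arising from each Gaussian integral with a complex width. One must verify that the product of the vacuum phases is exactly $1$ after pairing $+$ and $-$ modes, and that the residual excitation phase is exactly $-i\,d$. I would handle this by computing $M_{+1}$ carefully with principal branches, and then obtaining $M_{-1}$ by complex conjugation: $q_{-1} = \overline{q_{+1}}$ and $\overline{F} = F^{-1}$, which relates $M_{-1}$ to $M_{+1}^{-1}$ up to parity. This gives $\alpha_- = \overline{\alpha_+}$ and the required sign flip on $h_1$.
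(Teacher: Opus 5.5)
Your proposal is correct. It reaches the identity by a genuinely different route from the paper's.

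\textbf{How the paper argues.} The paper never diagonalizes $S$. It computes the kernel of $\calF Q_S \calF$ directly in $N$ dimensions by completing the square (using $S^{-1}=S$). It evaluates the oscillatory integral $\int e^{\frac{i}{2}\vec w^T S \vec w}\,d\vec w=(2\pi)^{N/2}$ using $\Tr S=0$. It then notes that the two outer queries collapse the total phase to $e^{-i\vec u^T S\vec v}$. This is the three-point decoding identity $4f(\frac{\vec u+\vec v}{2})-f(\vec u)-f(\vec v)=2\vec u^T S\vec v$ that motivates the algorithm. Each matrix element $\bra{y}\cdot\ket{x}$ is then finished with one integration by parts and one complex Gaussian integral, using $\norm{S\vec u}_2=\norm{\vec u}_2$.

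\textbf{How you argue, and what each route buys.} You use rotation covariance of $\calF$ and $E$ to pass to diagonal $D$ and then to independent single modes; your three intertwining relations are all correct.
\begin{itemize}
\item Your route gains rigor and transparency. Every Gaussian you meet has width with strictly positive real part; the widths run $1\to 1-id\to\frac{1+id}{2}\to\frac{1-id}{2}\to 1+id\to 1$. So nothing rests on a non-absolutely-convergent Fresnel integral.
\item In your route, tracelessness appears simply as the pairwise cancellation of the vacuum phases $\alpha_{\pm}=e^{\pm i\pi/4}$. Carrying out your bookkeeping, the final $t$-coefficient is $-id$, which confirms $\beta=-i$.
\item The paper's route gains the ``query at the midpoint'' picture of $\calF Q_S\calF$. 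That picture is what connects the algorithm to Reed--Muller/LDC decoding, and your mode-by-mode factorization hides it.
\end{itemize}

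\textbf{A correction to your metaplectic cross-check.} The composite is not $\pm$ the identity. In the Heisenberg picture, $q_d$ acts as $(X,P)\mapsto(X,P+dX)$ and $F$ acts as $(X,P)\mapsto(P,-X)$. Composing gives $M_d^\dagger X M_d=dP$ and $M_d^\dagger P M_d=-dX$, which is a quarter turn. Hence $M_d$ is a phase times $F^{d}$.
\begin{itemize}
\item This is exactly why $M_d$ is diagonal on $h_0,h_1$ with $\beta_d=-id$, since $Fh_1=-ih_1$.
\item A $\pm$ identity composite would instead force $\beta_d=\pm1$, contradicting the constants you correctly expect.
\end{itemize}

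\textbf{Your conjugation step is exact.} It holds exactly, not merely ``up to parity''. We have $\overline{F}=F^{-1}=F\cdot F^2$, and the parity $F^2$ commutes with the chirps and occurs twice. Therefore $M_{-1}=M_{+1}^{-1}=\overline{M_{+1}}$, which gives $\alpha_-=\overline{\alpha_+}$ and $\beta_-=\overline{\beta_+}$ directly.
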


\begin{proof}
    The product on the right of \Cref{eq:continuous-three-q-synth} is a contraction on $\C^N$. It suffices to show that its matrix elements on $\C^N$ match the matrix elements $\bra{y}S\ket{x}$ of the target unitary.

    We begin by computing
    \begin{align}
        Q_S 
        \,
        E 
        \ket{x}
        &
        =
        Q_S 
        \ket{\psi_x}
        \allowdisplaybreaks
        \\
        &
        =
        \sqrt{
            \frac{2}{\pi^{N/2}}
        }
        \int_{\R^N}
        \; 
        e^{-\frac12 \norm{\vec u}_2^2}
        \;
        u_x
        \;
        Q_S 
        \ket{\vec u}
        \;
        d\vec u
        \allowdisplaybreaks
        \\
        &
        =
        \sqrt{
            \frac{2}{\pi^{N/2}}
        }
        \int_{\R^N}
        \; 
        e^{ \frac12 i \, \vec u^T S \vec u}
        \;
        e^{-\frac12 \norm{\vec u}_2^2}
        \;
        u_x
        \;
        \ket{\vec u}
        \;
        d\vec u
        \label{eq:Q_S-E-x}
        \\
        &
        \eqcolon
        \ket{\phi_x}
        \,.
    \end{align}
    Similarly,
    \begin{align}
        Q_S^\dagger
        \,
        E 
        \ket{y}
        &
        =
        \sqrt{
            \frac{2}{\pi^{N/2}}
        }
        \int_{\R^N}
        \; 
        e^{- \frac12 i \, \vec v^T S \vec v}
        \;
        e^{-\frac12 \norm{\vec v}_2^2}
        \;
        v_y
        \;
        \ket{\vec v}
        \;
        d\vec v
        \label{eq:Q_S-dag-E-y}
        \\
        &
        \eqcolon
        \ket{\phi^*_y}
        \,.
    \end{align}

    We also have that 
    \begin{align}
        \calF
        \,
        Q_S
        \,
        \calF
        &
        =
        \int_{\R^N}
        \; 
        e^{\frac12 i \, \vec w^T S \vec w}
        \; 
        \calF
        \ketbra{\vec w}{\vec w}
        \calF
        \;
        d\vec w
        \allowdisplaybreaks
        \\
        &
        =
        \frac{1}{(2\pi)^N}
        \iiint_{\R^N}
        \; 
        e^{\frac12 i \, \vec w^T S \vec w}
        \; 
        e^{-i \, \vec v^T \vec w}
        e^{-i \, \vec u^T \vec w}
        \ketbra{\vec v}{\vec u}
        \;
        d\vec u
        \,
        d\vec v
        \,
        d\vec w
        \allowdisplaybreaks
        \\
        &
        =
        \frac{1}{(2\pi)^N}
        \iint_{\R^N}
        \;
        \left(
            \int_{\R^N}
            \; 
            e^{
                \frac12 i 
                \, 
                \vec w^T S \vec w
                -
                i 
                \, 
                (\vec u + \vec v)^T \vec w
            }
            \,
            d\vec w
        \right)
        \ketbra{\vec v}{\vec u}
        \;
        d\vec u
        \,
        d\vec v
        \allowdisplaybreaks
        \intertext{
            where we can complete the square in the exponent as
            $$
                \frac12 
                \vec w^T S \vec w
                -
                (\vec u + \vec v)^T \vec w 
                =
                \frac12 
                (\vec w - S (\vec u + \vec v))^T
                \,
                S
                \,
                (\vec w - S (\vec u + \vec v))
                -
                \frac12 
                (\vec u + \vec v)^T
                \,
                S
                \,
                (\vec u + \vec v)
            $$
            to get
        }
        \\
        &
        =
        \frac{1}{(2\pi)^N}
        \iint_{\R^N}
        \;
        e^{
            - 
            \frac12 
            i
            \,
            (\vec u + \vec v)^T
            \,
            S
            \,
            (\vec u + \vec v)
        }
        \left(
            \int_{\R^N}
            \; 
            e^{
                \frac12 
                i
                \,
                (\vec w - S (\vec u + \vec v))^T
                \,
                S
                \,
                (\vec w - S (\vec u + \vec v))
            }
            \,
            d\vec w
        \right)
        \ketbra{\vec v}{\vec u}
        \;
        d\vec u
        \,
        d\vec v
        \allowdisplaybreaks
        \\
        &
        =
        \frac{1}{(2\pi)^N}
        \iint_{\R^N}
        \;
        e^{
            - 
            \frac12 
            i
            \,
            (\vec u + \vec v)^T
            \,
            S
            \,
            (\vec u + \vec v)
        }
        \underbrace
        {
            \left(
                \int_{\R^N}
                \; 
                e^{
                    \frac12 
                    i
                    \,
                    \vec w^T
                    S
                    \vec w
                }
                \,
                d\vec w
            \right)
        }_{
            \phantom{
                \;\;
                \text{ since }
                \Tr(S) = 0
            }
            (2\pi)^{N/2}
            \;\;
            \text{ since }
            \Tr(S) = 0
        }
        \ketbra{\vec v}{\vec u}
        \;
        d\vec u
        \,
        d\vec v
        \allowdisplaybreaks
        \\
        &
        =
        \frac{1}{(2\pi)^{N/2}}
        \iint_{\R^N}
        \;
        e^{
            - 
            \frac12 
            i
            \,
            (\vec u + \vec v)^T
            \,
            S
            \,
            (\vec u + \vec v)
        }
        \ketbra{\vec v}{\vec u}
        \;
        d\vec u
        \,
        d\vec v
        \label{eq:F-Q_S-F}
        \,.
    \end{align}

    Interestingly, this shows that a query surrounded by two Fourier transforms,
    $
        \calF
        \,
        Q_S
        \,
        \calF
    $,
    has the effect of transforming from $\vec u$ into $\vec v$ while effectively querying negative four times%
    \footnote{
        Since 
        $
            -
            \left(
                \vec u + \vec v
            \right)^T
            S
            \left(
                \vec u + \vec v
            \right)
            =
            (-4)
            \left(
                \frac{\vec u + \vec v}{2}
            \right)^T
            S
            \left(
                \frac{\vec u + \vec v}{2}
            \right)
        $
    }
    at their midpoint $\frac{\vec u + \vec v}{2}$.
    This, along with the queries at $\vec u$ and at $\vec v$ directly before and after this, are exactly what we need to complete the quadratic polynomial decoding identity in \Cref{eq:three-q-quadratic-decoding-identity}!

    We now combine \Cref{eq:Q_S-E-x,eq:Q_S-dag-E-y,eq:F-Q_S-F} to compute the matrix element:
    \begin{align}
        \bra{y}
        E^\dagger
        \,
        Q_S
        \,
        \calF
        \,
        Q_S
        \,
        \calF
        \,
        Q_S
        \,
        E
        \ket{x}
        &
        =
        \bra{\phi^*_y}
        \calF
        \,
        Q_S
        \,
        \calF
        \ket{\phi_x}
        \allowdisplaybreaks
        \\
        &
        =
        \frac{2}{2^{N/2}\pi^{N}}
        \iint_{\R^N}
        \; 
        e^{
            \frac12 i \, \vec v^T S \vec v
        }
        \;
        e^{
            - 
            \frac12 
            i
            \,
            (\vec u + \vec v)^T
            \,
            S
            \,
            (\vec u + \vec v)
        }
        \;
        e^{ 
            \frac12 i \, \vec u^T S \vec u
        }
        \;
        e^{
            -
            \frac12 \norm{\vec v}_2^2 
            - 
            \frac12 \norm{\vec u}_2^2
        }
        \;
        v_y
        \;
        u_x
        \,
        d\vec v
        \;
        d\vec u
        \allowdisplaybreaks
        \\
        &
        =
        \frac{2}{2^{N/2}\pi^{N}}
        \iint_{\R^N}
        \; 
        e^{
            - 
            i 
            \, 
            \vec u^T S \vec v
        }
        \;
        e^{
            -
            \frac12 \norm{\vec v}_2^2 
            - 
            \frac12 \norm{\vec u}_2^2
        }
        \;
        v_y
        \;
        u_x
        \,
        d\vec v
        \;
        d\vec u
        \allowdisplaybreaks
        \\
        &
        =
        \frac{2}{2^{N/2}\pi^{N}}
        \int_{\R^N}
        \left(
            \int_{\R^N}
            \; 
            e^{
                - 
                i 
                \, 
                \vec u^T S \vec v
            }
            \;
            e^{
                -
                \frac12 \norm{\vec v}_2^2 
            }
            \;
            v_y
            \;
            d\vec v
        \right)
        e^{
            - 
            \frac12 \norm{\vec u}_2^2
        }
        \;
        u_x
        \;
        d\vec u
        \allowdisplaybreaks
        \\
        &
        =
        -
        \frac{2}{2^{N/2}\pi^{N}}
        \int_{\R^N}
        \left(
            \int_{\R^N}
            \; 
            e^{
                - 
                i 
                \, 
                \vec u^T S \vec v
            }
            \;
            \frac{\partial}{\partial v_y}
            e^{
                -
                \frac12 \norm{\vec v}_2^2 
            }
            \;
            d\vec v
        \right)
        e^{
            - 
            \frac12 \norm{\vec u}_2^2
        }
        \;
        u_x
        \;
        d\vec u
        \allowdisplaybreaks
        \\
        &
        =
        \frac{2}{2^{N/2}\pi^{N}}
        \int_{\R^N}
        \left(
            \int_{\R^N}
            \; 
            \;
            e^{
                -
                \frac12 \norm{\vec v}_2^2 
            }
            \frac{\partial}{\partial v_y}
            e^{
                - 
                i 
                \, 
                \vec u^T S \vec v
            }
            \;
            d\vec v
        \right)
        e^{
            - 
            \frac12 \norm{\vec u}_2^2
        }
        \;
        u_x
        \;
        d\vec u
        \tag{integration by parts}
        \allowdisplaybreaks
        \\
        &
        =
        -
        \frac{2i}{2^{N/2}\pi^{N}}
        \int_{\R^N}
        \left(
            \int_{\R^N}
            \; 
            \;
            e^{
                -
                \frac12 \norm{\vec v}_2^2 
            }
            e^{
                - 
                i 
                \, 
                \vec u^T S \vec v
            }
            \;
            d\vec v
        \right)
        e^{
            - 
            \frac12 \norm{\vec u}_2^2
        }
        \;
        (S \vec u)_y
        \;
        u_x
        \;
        d\vec u
        \allowdisplaybreaks
        \\
        &
        =
        -
        \frac{2i}{2^{N/2}\pi^{N}}
        \int_{\R^N}
        \left(
            \int_{\R^N}
            \; 
            \;
            e^{
                -
                \frac12  
                \left(
                    \vec v
                    +
                    i S \vec u
                \right)^T 
                \left(
                    \vec v
                    +
                    i S \vec u
                \right)
            }
            e^{
                -
                \frac12 
                \norm{
                    S \vec u
                }_2^2 
            }
            \;
            d\vec v
        \right)
        e^{
            - 
            \frac12 \norm{\vec u}_2^2
        }
        \;
        (S \vec u)_y
        \;
        u_x
        \;
        d\vec u
        \tag{completing the square}
        \allowdisplaybreaks
        \\
        &
        =
        -
        \frac{2i}{2^{N/2}\pi^{N}}
        \int_{\R^N}
        \underbrace{
            \left(
                \int_{\R^N}
                \; 
                \;
                e^{
                    -
                    \frac12 
                    \left(
                        \vec v
                        +
                        i S \vec u
                    \right)^T 
                    \left(
                        \vec v
                        +
                        i S \vec u
                    \right) 
                }
                \;
                d\vec v
            \right)
        }_{
            (2\pi)^{N/2}
        }
        e^{
            -
            \frac12 
            \norm{
                S \vec u
            }_2^2 
        }
        e^{
            - 
            \frac12 \norm{\vec u}_2^2
        }
        \;
        (S \vec u)_y
        \;
        u_x
        \;
        d\vec u
        \allowdisplaybreaks
        \\
        &
        =
        -
        \frac{2i}{\pi^{N/2}}
        \int_{\R^N}
        e^{
            -
            \frac12 
            \norm{
                S \vec u
            }_2^2 
        }
        e^{
            - 
            \frac12 \norm{\vec u}_2^2
        }
        \;
        (S \vec u)_y
        \;
        u_x
        \;
        d\vec u
        \allowdisplaybreaks
        \\
        &
        =
        -
        \frac{2i}{\pi^{N/2}}
        \int_{\R^N}
        e^{
            - 
            \norm{\vec u}_2^2
        }
        \;
        (S \vec u)_y
        \;
        u_x
        \;
        d\vec u
        \tag{$\norm{S \vec u}_2^2 = \norm{\vec u}_2^2$since $S$ is orthogonal}
        \allowdisplaybreaks
        \\
        &
        =
        -i
        \sum_{z \in [N]}
        \bra{y}
        S
        \ket{z}
        \;
        \underbrace{
            \frac{2}{\pi^{N/2}}
            \int_{\R^N}
            e^{
                - 
                \norm{\vec u}_2^2
            }
            \;
            u_z
            \;
            u_x
            \;
            d\vec u
        }_{
            \phantom{
                \text{ by \Cref{eq:orthog-of-oscillator-states}}
                \;
            }
            \delta_{zx}
            \;
            \text{ by \Cref{eq:orthog-of-oscillator-states}}
        }
        \allowdisplaybreaks
        \\
        &
        =
        -i
        \bra{y}
        S
        \ket{x}
        \qedhere
    \end{align}
\end{proof}
\section{Discretizing to a finite but approximate algorithm}
\label{sec:discrete}

We now describe the actual algorithm, which is a discrete approximation to the continuous algorithm in \Cref{sec:continuous-alg}.

\subsection{Setup for the Discretized Algorithm}
\label{sec:discretized-alg-setup}

\subsubsection{The Discretized Grid}
\label{sec:the-discretized-grid}

We begin by describing the $N$-dimensional grid onto which we will be subsampling the continuous algorithm.
Let $k = \left\lceil\log\log\left(\frac{N}{\epsilon}\right) + C \right\rceil$ for some sufficiently large constant $C$, and let $K = 2^k$ be the number of points of the grid along each dimension. Let $L = \sqrt{K\frac{\pi}{2}}$ be how far the grid extends in each direction (for a finite algorithm, the grid cannot extend out indefinitely), and consequently, we have that $\Delta = \frac{2L}{K} = \sqrt{\frac{2\pi}{K}}$ is the spacing between grid points.
That is, we discretize a region of $\R^N$ centered around the origin, associating an evenly-spaced grid on the interval $[-L,L]^N$ with elements of $[K]^N$.%
\footnote{
    Note that we are using the convention that $[K] = \{0, 1, \dots, K-1\}$
}

More specifically, along each dimension, the interval $[-L, L]$ is broken evenly into $K$ segments, with each element of $[K]$ associated with a point centered on the corresponding segment.
The position of each $a \in [K]$ within $[-L, L]$ is given by 
$$
    \alpha(a) 
    := 
    -
    L 
    + 
    \left(
        a + \frac{1}{2}
    \right) 
    \Delta
    = 
    -
    L 
    + 
    \left(
        a + \frac{1}{2}
    \right) 
    \frac{2L}{K}
    = 
    \left(
        a
        -
        \frac{K-1}{2}
    \right) 
    \frac{2L}{K}
    = 
    \left(
        a
        -
        \frac{K-1}{2}
    \right) 
    \Delta
    \,.
$$
The grid in each dimension is therefore given by 
$
    \Lambda 
    \coloneq 
    \alpha([K]) 
    = 
    \left(
        [K] 
        + 
        \frac{1}{2} 
        - 
        \frac{K}{2}
    \right) 
    \Delta
    \subset
    [-L,L]
$,
and the $N$-dimensional grid is $\Lambda^N \subset [-L,L]^N$, which is the same discretization along each dimension. 
We can also extend $\alpha(\cdot)$ to apply to a vector of integers $\vec a \in [K]^N$ 
by letting
$
    \alpha(\vec a)
    \coloneq
    \left(
        \vec a
        -
        \frac{K-1}{2}
        \vec 1
    \right) 
    \Delta
    \,,
$
where $\vec 1 = (1, \dots, 1)^T$. 
Then
$
    \Lambda^N
    =
    \alpha
    \left(
        [K]^N
    \right)
$
We will sometimes also take $\vec a \in \Z^N$ to get the full infinite lattice.

Our algorithm will work with states supported on $[K]^N$, but will treat them as representing the grid points of $\Lambda^N$.
Moreover, these states on $\Lambda^N$ will be stand-ins for the corresponding continuous states of $\Cref{sec:continuous-alg}$ that would appear on the full $\R^N$. 
There will be two sources of error that we will bound. The first is the error induced by subsampling to the infinite lattice $\alpha(\Z)^N$ (which extends $\Lambda^N$ infinitely in all directions). 
This error is akin to the error present in a Riemann sum as compared to an integral.
The second source of error comes from the finite extent of our grid. That is, we drop the parts of the superposition that extend past $[-L,L]$. This error is the weight present in the tails of the Gaussian weights.
It turns out that these two errors are somewhat dual to each other, and can both be bounded together by the principle of Poisson summation.
We proceed by describing the discretized states supported on this grid.

\subsubsection{The Encoded States}
\label{sec:disc-encoded-states}

The encoded basis states are simply a discrete version of the harmonic oscillator states of \Cref{eq:harmonic-osc-states}.

Namely, if we let
\begin{align*}
    \ket{\hat h_0}
    &
    \coloneq
    \frac{1}{\sqrt{Z_0}}
    \sum_{a \in [K]}
    e^{
        -
        \frac12 
        \alpha(a)^2
    }
    \ket{a}
    &
    \ket{\hat h_1}
    &
    \coloneq
    \frac{1}{\sqrt{Z_1}}
    \sum_{a \in [K]}
    \alpha(a)
    \,
    e^{
        -
        \frac12 
        \alpha(a)^2
    }
    \ket{a}
    \\
    &
    =
    \frac{1}{\sqrt{Z_0}}
    \sum_{a \in [K]}
    e^{
        -
        \frac{\pi}{K} 
        \left(
            a
            -
            \frac{K - 1}{2}
        \right)^2
    }
    \ket{a}
    &
    &
    =
    \sqrt{
        \frac{2\pi}{K Z_1}
    }
    \sum_{a \in [K]}
    \left(
        a
        -
        \frac{K - 1}{2}
    \right)
    e^{
        -
        \frac{\pi}{K} 
        \left(
            a
            -
            \frac{K - 1}{2}
        \right)^2
    }
    \ket{a}
\end{align*}
where 
$
    Z_0
    \coloneq
    \sum_{a \in [K]}
    e^{
        -
        \alpha(a)^2
    }
$
and
$
    Z_1
    \coloneq
    \sum_{a \in [K]}
    \alpha(a)^2
    \,
    e^{
        -
        \alpha(a)^2
    }
$,
then we have that the basis states are
\begin{align*}
    \ket{\widehat \psi_x} 
    &
    \coloneq
    \ket{\hat h_0}_{\reg{V}_{0}}
    \otimes
    \dots
    \otimes
    \ket{\hat h_0}_{\reg{V}_{x-1}}
    \otimes
    \ket{\hat h_1}_{\reg{V}_{x}}
    \otimes
    \ket{\hat h_0}_{\reg{V}_{x+1}}
    \otimes
    \dots
    \otimes
    \ket{\hat h_0}_{\reg{V}_{N-1}}
    \numberthis
    \label{eq:disc-harmonic-osc-states}
    \allowdisplaybreaks
    \\
    &
    =
    \sqrt{
        \frac{
            1
        }{
            Z_0^{N-1}
            Z_1
        }
    }
    \sum_{
        \vec v \in [K]^N
    }
    \alpha(v_x)
    \;
    e^{
        -
        \frac12 
        \norm{\alpha(\vec v)}_2^2
    }
    \;
    \ket{\vec v}
    \allowdisplaybreaks
    \\
    &
    =
    \sqrt{
        \frac{
            2\pi
        }{
            K
            Z_0^{N-1}
            Z_1
        }
    }
    \sum_{
        \vec v \in [K]^N
    }
    \left(
        v_x
        -
        \frac{K - 1}{2}
    \right)
    \;
    e^{
        -
        \frac{\pi}{K} 
        \norm{
            \vec v
            -
            \frac{K - 1}{2}
        }_2^2
    }
    \ket{\vec v}
    \,,
\end{align*}
where $x \in [N]$. 
These basis states are similarly orthonormal. This is because $\ket{\hat h_0}$ and $\ket{\hat h_1}$ are normalized by definition, and $\braket*{\hat h_0}{\hat h_1} = 0$.

The encoding isometry $\widehat E$ and the decoding coisometry $\widehat E^\dagger$ are defined similarly to \Cref{eq:encoding-isometry}.

\subsubsection{The Diagonal Quadratic-Phase Oracle}
As in the continuous case, the diagonal oracle will take as input $N$ registers $\reg{V}_{0}, \dots, \reg{V}_{N-1}$. Each register will contain an integer $v_i \in [K]$, which we collect as a vector $\vec v \in [K]^N$.
The oracle will act as 

\begin{align}
    \widehat 
    Q_S
    :
    \ket{\vec v} 
    \mapsto 
    e^{
        \frac12 
        \, 
        i 
        \, 
        \, 
        \alpha(\vec v)^T 
        \,
        S 
        \,
        \alpha(\vec v)
    }
    \ket{\vec v}
    \,.
\end{align}

\subsubsection{The Discrete Fourier Transform}
The discrete Fourier transform on the grid 
$\Lambda$ 
is the unitary defined as 
\begin{align}
    \widehat
    \calF_1
    &
    =
    \frac{1}{\sqrt{K}}
    \sum_{u, w \in [K]}
    \;
    e^{
        -
        i 
        \, 
        \alpha(u) 
        \,
        \alpha(w)
    }
    \ketbra{w}{u}
    \\
    &
    =
    \frac{1}{\sqrt{K}}
    \sum_{u, w \in [K]}
    \;
    e^{
        -
        \frac{2\pi i}{K}
        \, 
        \left(
            u
            -
            \frac{K-1}{2}
        \right) 
        \,
        \left(
            w
            -
            \frac{K-1}{2}
        \right) 
    }
    \ketbra{w}{u}
    \notag
    \allowdisplaybreaks
    \\
    &
    =
    e^{
        -
        \frac{2\pi i}{K}
        \cdot
        \frac{(K-1)^2}{4}
    }
    \cdot
    \frac{1}{\sqrt{K}}
    \sum_{u, w \in [K]}
    \;
    e^{
        -
        \frac{2\pi i}{K}
        \, 
        u
        w
    }
    e^{
        \frac{2\pi i}{K}
        \, 
        \left(
            u
            \cdot
            \frac{K-1}{2}
        \right) 
    }
    e^{
        \frac{2\pi i}{K}
        \, 
        \left(
            w
            \cdot
            \frac{K-1}{2}
        \right) 
    }
    \ketbra{w}{u}
    \notag
    \allowdisplaybreaks
    \\
    &
    =
    e^{
        -
        \frac{2\pi i}{K}
        \cdot
        \frac{(K-1)^2}{4}
    }
    \left(
        \sum_{w \in [K]}
        e^{
            \frac{2\pi i}{K}
            \, 
            \left(
                w
                \cdot
                \frac{K-1}{2}
            \right) 
        }
        \ketbra{w}{w}
    \right)
    \left(
        \frac{1}{\sqrt{K}}
        \sum_{u, w \in [K]}
        \;
        e^{
            -
            \frac{2\pi i}{K}
            \, 
            u
            w
        }
        \ketbra{w}{u}
    \right)
    \left(
        \sum_{u \in [K]}
        e^{
            \frac{2\pi i}{K}
            \, 
            \left(
                u
                \cdot
                \frac{K-1}{2}
            \right) 
        }
        \ketbra{u}{u}
    \right)
    \notag
    \allowdisplaybreaks
    \\
    &
    =
    \calD_1
    \,
    \qft_{\Z_K}^\dagger
    \,
    \calD_1
    \,,
\end{align}
where
$
    \calD_1
    \coloneq
    \sum_{u \in [K]}
    e^{
        \frac{2\pi i}{K}
        \cdot
        \frac{K-1}{2}
        \left(
            u
            -
            \frac{K-1}{4}
        \right) 
    }
    \ketbra{u}{u}
$
is a diagonal unitary that is used to shift the quantum Fourier transform in both the standard and Fourier bases so that it is centered on the origin of $[-L,L]$ rather than at $u = 0$.

Similarly, the $N$-dimensional discrete Fourier transform on $\Lambda^N$ is
\begin{align*}
    \widehat
    \calF
    &
    =
    {\widehat \calF_1}^{\otimes N}
    \allowdisplaybreaks
    \\
    &
    =
    \frac{1}{\sqrt{K^{N}}}
    \sum_{\vec u, \vec w \in [K]^N}
    \;
    e^{-i \, \alpha(\vec u)^T \alpha(\vec w)}
    \ketbra{\vec w}{\vec u}
    \allowdisplaybreaks
    \\
    &
    =
    \calD
    \,
    \qft_{(\Z_K)^N}^\dagger
    \,
    \calD
    \numberthis
    \label{eq:disc-fourier-transform}
    \,,
\end{align*}
where $\calD \coloneq {\calD_1}^{\otimes N}$.
The fact that the Fourier transform on the grid $\Lambda$ can be factored in this way (as a tensor product of 1-dimensional QFTs on powers of two, surrounded by diagonal phases) will become very important in \Cref{sec:shallow}, where we will show that it can be implemented in constant depth. 

\subsection{The Discrete Algorithm}
The discrete algorithm is essentially the same as the continuous one, but with the corresponding discretized operators. It encodes the input as a superposition of the discretized quantum harmonic oscillator states. It then applies three quadratic phases with Fourier transforms in between. We show that as the discretization gets finer, this comes closer and closer to implementing the target unitary. We bound the discretization error in the following theorem.

\begin{theorem}[Three-Query Unitary Synthesis]
    \label{thm:disc-three-q-synthesis}
    Let the target unitary $S \in U(N)$ be a traceless real symmetric unitary (recall by \Cref{lem:eqiv-synthesis} that this is without loss of generality, as synthesis for such unitaries is equivalent to that for all unitaries). Then we have that:%
    \footnote{
        If we want to more tightly bound the leakage (which indicates how much weight moves away from the ancillas being reset to $\ket{0}$ in a unitary implementation of this algorithm), instead of \Cref{eq:three-q-synth}, we could instead bound 
        $
            \left\lVert
            \widehat
            E
            \,
            S
            -
            i
            \,
            \widehat
            Q_S
            \,
            \widehat
            \calF
            \,
            \widehat
            Q_S
            \,
            \widehat
            \calF
            \,
            \widehat
            Q_S
            \,
            \widehat
            E
            \right\rVert
            ,
        $
        which avoids getting a square root loss in the error $\epsilon$. However, all our complexity bounds are already in terms of an arbitrary constant multiple of $\log\frac{1}{\epsilon}$. Any such polynomial savings in $\epsilon$ from $\sqrt{\epsilon}$ to $\epsilon$ therefore merely changes that constant, so there is little benefit to optimizing this.
    }
    \begin{align}
        \left\lVert
        S
        -
        i
        \,
        \widehat
        E^\dagger
        \,
        \widehat
        Q_S
        \,
        \widehat
        \calF
        \,
        \widehat
        Q_S
        \,
        \widehat
        \calF
        \,
        \widehat
        Q_S
        \,
        \widehat
        E
        \right\rVert
        \le
        O
        \left(
            N^{3/2}
            K^{3/4}
            e^{-\pi K / 8}
        \right)
        \,.
        \label{eq:three-q-synth}
    \end{align}
    Moreover, taking $K = \Theta(\log(N/\epsilon))$ for a sufficiently large constant%
    \footnote{
        Four our current purposes, we further want $K$ to be even, and for the shallow implementation of \Cref{sec:shallow}, we will also want $K$ to be a power of 2.
    }
    makes this bound at most $\epsilon^2$.%
    \footnote{
        As mentioned, different choices of constants will change the exponent on $\epsilon$ in the error bound.
    }
\end{theorem}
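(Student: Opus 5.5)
Both operators in \Cref{eq:three-q-synth} act on $\C^N$, so the plan is to compare their $N\times N$ matrices entrywise and convert to operator norm via $\norm{M}\le N\max_{x,y}\abs{M_{yx}}$ (or a Frobenius bound, giving a factor $N$). The target exponent $N^{3/2}$ suggests a slightly finer accounting: $N$ from the matrix size and $\sqrt N$ from the normalization mismatch of the $N$-fold product Gaussians. For fixed $x,y$ I would expand
\[
\bra{y}\widehat E^\dagger \widehat Q_S\widehat\calF\widehat Q_S\widehat\calF\widehat Q_S\widehat E\ket{x}
\]
exactly as in the proof of \Cref{thm:continuous-three-q-synthesis}, but with sums over $[K]^N$ in place of integrals. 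Using $\alpha(\vec u)^T\alpha(\vec w)$ in $\widehat\calF$, the same algebra goes through formally. The outer phases $e^{\frac i2\alpha(\vec v)^TS\alpha(\vec v)}$, $e^{\frac i2\alpha(\vec u)^TS\alpha(\vec u)}$ and the middle sum over $\vec w$ combine into the double sum
\[
\frac{1}{K^N Z_0^{N-1}Z_1}\sum_{\vec u,\vec v,\vec w}\alpha(v_y)\alpha(u_x)\,e^{-\frac12\norm{\alpha(\vec u)}^2-\frac12\norm{\alpha(\vec v)}^2}\,e^{\frac i2\alpha(\vec w)^TS\alpha(\vec w)-i(\alpha(\vec u)+\alpha(\vec v))^T\alpha(\vec w)}\cdots
\]

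The key step is to extend each finite sum over $[K]^N$ to the full half-integer lattice $\alpha(\Z^N)$, and then apply Poisson summation in the form \Cref{eq:scaled-half-integer-poisson-sum}. Since $\Delta=\sqrt{2\pi/K}$, the dual spacing is $2\pi/\Delta=\sqrt{2\pi K}$. The $\vec w=0$ Poisson term reproduces the continuous integral, which equals $-i\bra{y}S\ket{x}$ times the continuous normalizations. I would organize this as three steps.

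\textbf{(i) Truncation tails.} The summands outside $[-L,L]^N$ carry Gaussian weight $e^{-\frac12\alpha^2}$ at $\abs\alpha\ge L=\sqrt{\pi K/2}$. Each of the two outer amplitudes then loses weight $e^{-L^2/2}=e^{-\pi K/4}$, so the amplitude error is of order $e^{-\pi K/8}$ after the square root, with a polynomial prefactor $K^{O(1)}$ from the factor $\alpha$ and the tail sum.

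\textbf{(ii) Aliasing.} For Gaussians of unit width, the nonzero dual lattice terms are of size $e^{-\frac12(2\pi/\Delta)^2\cdot\frac12}=e^{-\pi K/2}$ or similar, so they are negligible compared with (i). I would handle the middle $\vec w$-sum with the chirp by first doing the $\vec u,\vec v$ sums, which are Gaussians times characters. Alternatively I would note that the discrete $\widehat\calF$ maps the discrete Gaussian $\ket{\hat h_0}$ to itself up to these exponentially small errors, since the self-dual choice $\Delta^2K=2\pi$ makes the grid self-dual. This is where the assumptions that $K$ is even and $\Tr S=0$ are needed: evenness makes the half-integer shift $\vec\ell$ consistent under $\widehat\calF$ (the phase $e^{2\pi i\vec\ell^T\vec w}$ becomes a sign), and $\Tr S=0$ fixes the Fresnel phase.

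\textbf{(iii) Normalization.} I would compare $Z_0,Z_1$ with their continuous values $\sqrt\pi/\Delta$ and $\sqrt\pi/(2\Delta)$, again via Poisson summation, giving relative error $e^{-\Omega(K)}$. This relative error is raised to the $N$-th power in $Z_0^{N-1}$, which yields a factor $N$. If needed, \Cref{lem:isometries-distance-rescaled} absorbs the global rescaling $\beta$.

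The main obstacle I expect is the middle step: making the chirp sum over the finite grid behave like the continuous Fresnel integral. The factor $e^{\frac i2\alpha(\vec w)^TS\alpha(\vec w)}$ has no decay, so the truncation of $\vec w$ to $\Lambda^N$ cannot be treated as a tail. Instead I would exploit periodicity. I would first try to show that, for even $K$, every quantity is $K$-periodic in the lattice index, since with $\Delta^2K=2\pi$ the phases $e^{\frac i2\Delta^2(\cdot)}$ are invariant under shifts by $K$ on the relevant parity class. The finite $\vec w$-sum would then equal a full-lattice sum restricted to a fundamental domain, and the only genuine errors would be the Gaussian tails of $\vec u,\vec v$ handled in (i).

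If periodicity fails, because $S$ has irrational entries and so $\alpha(\vec w)^TS\alpha(\vec w)$ is not periodic, my fallback is different. I would commute the bounded, Gaussian-weighted states $\widehat Q_S\widehat E\ket x$ through $\widehat\calF$ using the self-duality in (ii), and bound each of the four steps separately in $\ell_2$ against the embedded continuous states by Poisson summation. Summing these errors over $N$ coordinates and $N^2$ entries gives the stated $O(N^{3/2}K^{3/4}e^{-\pi K/8})$. Finally, choosing $K=\Theta(\log(N/\epsilon))$ with a large enough constant makes this bound at most $\epsilon^2$.
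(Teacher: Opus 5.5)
There is a genuine gap. None of your three routes is carried through to a bound, and the step you single out as the main obstacle is misdiagnosed. You are right that the entrywise plan cannot extend the $\vec w$-sum to $\alpha(\Z^N)$, since its summand has modulus one. You are also right that periodicity fails: shifting $\vec w$ by $Ke_j$ changes $\frac12\alpha(\vec w)^TS\alpha(\vec w)$ by $2\pi\sum_k S_{jk}\bigl(w_k-\frac{K-1}{2}\bigr)+\pi K S_{jj}$, which is not in $2\pi\Z$ for generic real $S$. But neither is needed, because the middle chirp cancels exactly. Since $\widehat E$ is real and $\widehat Q_S$, $\widehat\calF$ are symmetric, the algorithm equals $\widehat\Gamma_S^T\widehat Q_S\widehat\Gamma_S$ with $\widehat\Gamma_S=\widehat\calF\widehat Q_S\widehat E$. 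Each matrix entry is therefore the bilinear sum $\sum_{\vec w}(\widehat\Gamma_S\ket y)_{\vec w}\,e^{\frac i2\alpha(\vec w)^TS\alpha(\vec w)}(\widehat\Gamma_S\ket x)_{\vec w}$. Up to normalization and a small error, $\widehat\Gamma_S\ket x$ is the grid sample of $\calF Q_SE\ket x\propto\vec w^T(-iS)^{1/2}e_x\,e^{-\frac14\norm{\vec w}^2-\frac i4\vec w^TS\vec w}$. Computing this uses $(\id-iS)^{-1}=\frac12(\id+iS)$ (from $S^2=\id$), and $\Tr S=0$ makes $\det(\id-iS)^{-1/2}=2^{-N/4}$ real and positive. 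The two factors $e^{-\frac i4(\cdot)}$ cancel the middle query. What remains is a decaying real Gaussian sum proportional to $e_y^T(-iS)^{1/2}(-iS)^{1/2}e_x=-iS_{yx}$. Your remark in (ii) about doing the $\vec u,\vec v$ sums first points in this direction. However, those sums are not "Gaussians times characters": they carry the outer chirps, and it is their Fourier transforms that produce the cancelling phase. The paper packages this with the ideal isometry $\widetilde\Gamma_S=\widehat Q_S^{-1/2}\widetilde E(-iS)^{1/2}$, which satisfies $\widetilde\Gamma_S^T\widehat Q_S\widetilde\Gamma_S=-iS$ exactly and $\samp_{\Lambda^N}\calF\ket{\psi_\gamma}=\beta_K\widetilde\Gamma_S\ket\gamma$. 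Only $\norm{\widehat\Gamma_S-\widetilde\Gamma_S}$ then needs bounding, with \Cref{lem:isometries-distance-rescaled} absorbing $\beta_K$.

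Your error accounting in (i)--(ii) reaches the right exponent for the wrong reason. The outer unit-width states have off-grid tails of norm about $e^{-L^2/2}=e^{-\pi K/4}$; you treated the amplitude $e^{-\alpha^2/2}$ as a weight and then took a square root. The wrap-around terms are not negligible; they dominate. The relevant tool is not that $\widehat\calF$ nearly fixes $\ket{\hat h_0}$, since the transformed states are chirped. It is the exact identity $\widehat\calF\,\wsamp_{\Lambda^N}=\wsamp_{\Lambda^N}\,\calF$ on Schwartz functions (\Cref{lem:fourier-commutes-with-wraparound-samp}, from Poisson summation with $\Delta^2K=2\pi$ and $K$ even). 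This identity bounds $\widehat\calF\samp_{\Lambda^N}\psi-\samp_{\Lambda^N}\calF\psi$ by the off-grid lattice sums of both $\psi$ and $\calF\psi$. For $\psi=Q_SE\ket\gamma$, the Fourier-side Gaussian is $\sqrt2$ wider, with amplitude $e^{-L^2/4}=e^{-\pi K/8}$ at the grid edge; that is the actual source of the rate. Likewise, $N^{3/2}$ comes from a state-level $\ell_2$ bound: $\sqrt N$ from $\sum_x\abs{\gamma_x}\le\sqrt N$, and $N$ from $(1+O(K^{3/4}e^{-\pi K/8}))^N-1$. Summing entrywise errors over $N^2$ entries costs extra powers of $N$; that is harmless for the $\epsilon^2$ claim but does not give the stated bound. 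Your fallback can be made into a valid forward hybrid argument. Sampling commutes exactly with $\widehat Q_S$, and at each of the two Fourier steps both the state and its transform are Gaussian-decaying chirped Gaussians. This would even give the footnote's leakage bound. It requires the chirped-Gaussian transform twice, plus a check that $\samp_{\Lambda^N}E\ket x$ is the same multiple of $\widehat E\ket x$ for every $x$. The paper's transpose factorization needs only one Fourier comparison.
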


We show this in a slightly different way than we did for \Cref{thm:continuous-three-q-synthesis}. There, we were able to make some simplifications that allowed us to do a direct calculation that the algorithm computes the correct unitary. Here, by contrast the discretized algorithm is not exact, and several simplifications that we made there do not directly apply. One could, in principle, use approximate versions of each of those simplifications, and bound the error at each step, but keeping track of these accumulated errors tends to get somewhat messy. Instead, we present what we believe is a cleaner argument. We compare each half of the discretized algorithm to the ideal isometry on the discretized grid itself. This ideal isometry is simply a mathematical tool (not an efficient algorithm). It is simply the isometry that, if implemented, would in theory implement the target unitary with no error. 

\subsubsection{The Half-Algorithm}
\label{sec:actual-half-alg}
Let 
\begin{align*}
    \widehat 
    \Gamma_S
    \coloneq
    \widehat
    \calF
    \widehat
    Q_S
    \widehat
    E
    \numberthis
\end{align*}
be the algorithm's overall isometry up to right before the second (middle) oracle query.
We have that 
$
    \widehat 
    \Gamma_S^T
    =
    \widehat
    E^T
    \widehat
    Q_S^T
    \widehat
    \calF^T
    =
    \widehat
    E^\dagger
    \widehat
    Q_S
    \widehat
    \calF
    \,,
$
since $\widehat E$ is real-valued, and $\widehat Q_S$ and $\widehat \calF$ are their own transposes.%
\footnote{
    Note that we are taking \emph{just} the transpose and not the conjugate transpose.
}
We can therefore write the full algorithm (ignoring the global phase of $i$) as
\begin{align}
    \widehat
    E^\dagger
    \,
    \widehat
    Q_S
    \,
    \widehat
    \calF
    \,
    \widehat
    Q_S
    \,
    \widehat
    \calF
    \,
    \widehat
    Q_S
    \,
    \widehat
    E
    =
    \widehat 
    \Gamma_S^T
    \,
    \widehat
    Q_S
    \,
    \widehat 
    \Gamma_S
    \label{eq:actual-half-algorithm-factoring}
\end{align}

We would like to show that there is an ideal version of $\widehat \Gamma_S$, an isometry which we call $\widetilde \Gamma_S$, which on the one hand exactly implements the target unitary $S$, and on the other hand is very close to the real isometry $\widehat \Gamma_S$.

\subsubsection{The Ideal Half-Algorithm Isometry}
To define the ideal isometry $\widetilde \Gamma_S$, we in some sense move the $S$-dependent part of $\widehat \Gamma_S$ to the right past the encoding, where we can operate exactly on the original Hilbert space, as well as to the left as a diagonal phase that cancels out the second query. This is, of course, not at all a useful thing to do algorithmically (since this ideal isometry applies a generic unitary on $\C^N$, which brings us right back to the original task), but rather only mathematically, as a way to study the real algorithm.

It consists of three steps:
\begin{enumerate}
    \item 
    Apply a square root of $-i\, S$ to the original Hilbert space $\C^N$.
    Specifically, let
    $
        (-i\, S)^{\frac12}
        \coloneq
        \frac{
            1
        }{
            \sqrt{2}
        }
        \left(
            S 
            - 
            i
            \,
            \id
        \right)
    $.
    This is, of course, a unitary since 
    $
        \Big((-i\, S)^{\frac12}\Big)^\dagger
        (-i\, S)^{\frac12}
        =
        \frac12
        \left(
            S 
            + 
            i
            \,
            \id
        \right)
        \left(
            S 
            - 
            i
            \,
            \id
        \right)
        =
        \frac12
        \left(
            S^2
            +
            \id
        \right)
        =
        \id
        \,,
    $
    and it squares to $-i\, S$:
    $
        ((-i\, S)^{\frac12})^2
        =
        \frac{1}{2}
        \left(
            S 
            -
            i
            \,
            \id
        \right)
        \left(
            S 
            - 
            i
            \,
            \id
        \right)
        =
        \frac{1}{2}
        \left(
            S^2
            -
            2
            i
            \,
            S
            -
            \id
        \right)
        =
        -i
        \, 
        S
        \,.
    $
    Moreover, 
    like $S$, 
    we have that 
    $(-i\, S)^{\frac12}$ is symmetric under transpose:
    $
        \left((-i\, S)^{\frac12}\right)^T
        =
        (-i\, S)^{\frac12}
        \,.
    $
    \item 
    Apply a modified encoding isometry $\widetilde E$ with a wider Gaussian (specifically, $\sqrt{2}$ times wider). That is, let
    \begin{align*}
        \ket{\tilde h_0}
        &
        \coloneq
        \frac{1}{\sqrt{\widetilde Z_0}}
        \sum_{a \in [K]}
        e^{
            -
            \frac14 
            \alpha(a)^2
        }
        \ket{a}
        &
        \ket{\tilde h_1}
        &
        \coloneq
        \frac{1}{\sqrt{\widetilde Z_1}}
        \sum_{a \in [K]}
        \alpha(a)
        \,
        e^{
            -
            \frac14 
            \alpha(a)^2
        }
        \ket{a}
    \end{align*}
    with the normalizations
    $
        \widetilde Z_0
        \coloneq
        \sum_{a \in [K]}
        e^{
            -
            \frac12
            \alpha(a)^2
        }
    $
    and
    $
        \widetilde Z_1
        \coloneq
        \sum_{a \in [K]}
        \alpha(a)^2
        \,
        e^{
            -
            \frac12
            \alpha(a)^2
        }
    $,
    and let the basis states be
    \begin{align*}
        \ket{\widetilde \psi_x} 
        &
        \coloneq
        \ket{\tilde h_0}_{\reg{V}_{0}}
        \otimes
        \dots
        \otimes
        \ket{\tilde h_0}_{\reg{V}_{x-1}}
        \otimes
        \ket{\tilde h_1}_{\reg{V}_{x}}
        \otimes
        \ket{\tilde h_0}_{\reg{V}_{x+1}}
        \otimes
        \dots
        \otimes
        \ket{\tilde h_0}_{\reg{V}_{N-1}}
        \allowdisplaybreaks
        \\
        &
        =
        \sqrt{
            \frac{
                1
            }{
                \widetilde Z_0^{N-1}
                \widetilde Z_1
            }
        }
        \sum_{
            \vec v \in [K]^N
        }
        \alpha(v_x)
        \;
        e^{
            -
            \frac14
            \norm{\alpha(\vec v)}_2^2
        }
        \;
        \ket{\vec v}
        \,,
    \end{align*}
    where $x \in [N]$. 
    These basis states are likewise orthonormal for different $x$, since $\ket{\tilde h_0}$ and $\ket{\tilde h_1}$ are orthonormal.
    Then $\widetilde E$ is defined as 
    \begin{align*}
        \widetilde 
        E
        \coloneq
        \sum_{
            x
            \in
            [N]
        }
        \ketbra{
            \widetilde
            \psi_x
        }{
            x
        }
        \,.
    \end{align*}
    \item 
    Apply an inverse square root of $\widehat Q_S$. That is, apply
    \begin{align*}
        \widehat
        Q_S^{-\frac12}
        \coloneq
        \sum_{
            \vec v
            \in
            [K]^N
        }
        e^{
            -\frac14 
            \, 
            i 
            \, 
            \, 
            \alpha(\vec v)^T 
            \,
            S 
            \,
            \alpha(\vec v)
        }
        \proj{\vec v}
        \,.
    \end{align*}
\end{enumerate}

We then define the ideal isometry to be 
\begin{align*}
    \widetilde
    \Gamma_S
    &
    \coloneq
    \widehat
    Q_S^{-\frac12}
    \;
    \widetilde 
    E
    \;
    (-i\, S)^{\frac12}
    \,.
\end{align*}
Note that while the real algorithm is (up to a global phase of $i$)
\begin{align*}
    \widehat 
    \Gamma_S^T
    \,
    \widehat
    Q_S
    \,
    \widehat 
    \Gamma_S
    =
    \widehat
    E^\dagger
    \,
    \widehat
    Q_S
    \,
    \widehat
    \calF
    \,
    \widehat
    Q_S
    \,
    \widehat
    \calF
    \,
    \widehat
    Q_S
    \,
    \widehat
    E
    \,,
\end{align*}
replacing $\widehat \Gamma_S$ with the ideal isometry $\widetilde \Gamma_S$ gives
\begin{align*}
    \widetilde 
    \Gamma_S^T
    \,
    \widehat
    Q_S
    \,
    \widetilde 
    \Gamma_S
    &
    =
    (-i\, S)^{\frac12}
    \,
    \widetilde 
    E^T
    \,
    \widehat
    Q_S^{-\frac12}
    \cdot
    \widehat
    Q_S
    \cdot
    \widehat
    Q_S^{-\frac12}
    \,
    \widetilde 
    E
    \,
    (-i\, S)^{\frac12}
    \allowdisplaybreaks
    \\
    &
    =
    (-i\, S)^{\frac12}
    \,
    \widetilde 
    E^T
    \,
    \widetilde 
    E
    \,
    (-i\, S)^{\frac12}
    \allowdisplaybreaks
    \\
    &
    =
    (-i\, S)^{\frac12}
    \,
    (-i\, S)^{\frac12}
    =
    -i
    \, 
    S
    \numberthis
    \label{eq:disc-ideal-algorithm}
    \,.
\end{align*}

So in order to prove \Cref{thm:disc-three-q-synthesis}, it suffices to show that the real half-algorithm isometry $\widehat \Gamma_S$ is close in operator norm to the ideal isometry $\widetilde \Gamma_S$.

\begin{lemma}
    \begin{align}
        \left\lVert
        \widehat \Gamma_S
        -
        \widetilde \Gamma_S
        \right\rVert
        \le
        O
        \left(
            N^{3/2}
            K^{3/4}
            e^{-\pi K / 8}
        \right)
        \label{eq:real-vs-ideal-half-alg}
    \end{align}
    \label{lem:real-vs-ideal-half-alg}
\end{lemma}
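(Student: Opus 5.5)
The plan is to compare $\widehat\Gamma_S$ and $\widetilde\Gamma_S$ column by column, that is, to bound $\lVert \widehat\Gamma_S\ket{x} - \widetilde\Gamma_S\ket{x}\rVert$ for each $x\in[N]$. The heuristic is that the statement is exactly true in the continuum, and Poisson summation controls the discretization.

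\textbf{Continuum identity.} Replace every operator by its continuous counterpart from \Cref{sec:continuous-alg}. The state $\calF Q_S E\ket{x}$ is the Fourier transform of $u_x e^{-\frac12 \vec u^T(\id - iS)\vec u}$. Because $S^2=\id$, we have $(\id-iS)^{-1} = \frac12(\id+iS)$, and $\det(\id - iS) = 2^{N/2}$ up to a phase fixed by $\Tr S=0$. Completing the square, as in the proof of \Cref{thm:continuous-three-q-synthesis}, gives
\begin{align*}
    \calF Q_S E\ket{x} \;\propto\; \int_{\R^N} \Big(\tfrac{\id+iS}{2}\,\vec w\Big)_x\, e^{-\frac14\norm{\vec w}_2^2}\, e^{-\frac14 i\,\vec w^T S\vec w}\,\ket{\vec w}\, d\vec w .
\end{align*}
This is exactly $Q_S^{-1/2}\widetilde E\,(-iS)^{1/2}\ket{x}$, since $(-iS)^{1/2} = \frac{1}{\sqrt2}(S-i\id)$ is a scalar multiple of $\id+iS$. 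I would first verify this with all constants and phases tracked, so that the continuum versions of the two isometries coincide exactly.

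\textbf{Discretization.} Write the unnormalized vector $\widehat{\calF}\widehat{Q}_S\widehat{E}\ket{x}$ as a sum over $\vec u\in[K]^N$ of $f_x(\alpha(\vec u))\,e^{-i\alpha(\vec u)^T\alpha(\vec w)}$, where $f_x(\vec u) = u_x e^{-\frac12\vec u^T(\id-iS)\vec u}$ is a Schwartz function. I would proceed in three steps.

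First, extend the sum from $[K]^N$ to the full shifted lattice $\alpha(\Z^N)$. The incurred error is controlled by the Gaussian tails beyond $L$, namely $e^{-L^2/2}=e^{-\pi K/4}$ per coordinate, with a polynomial prefactor and a union over $N$ coordinates.

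Second, apply the shifted Poisson formula \eqref{eq:scaled-half-integer-poisson-sum} to the function $\vec u\mapsto f_x(\vec u)e^{-i\vec u^T\alpha(\vec w)}$. This expresses the lattice sum as the continuum transform $(\calF f_x)(\alpha(\vec w))$, which is the main term, plus aliases $(\calF f_x)(\alpha(\vec w)+\frac{2\pi}{\Delta}\vec m)$ for $\vec m\neq 0$, each carrying a sign. Since $\frac{2\pi}{\Delta}=\sqrt{2\pi K}$ and $\lvert\calF f_x\rvert\lesssim e^{-\frac14\norm{\cdot}^2}$, every alias evaluated at grid points $\lvert\alpha(w_j)\rvert\le L$ has size at most about $e^{-\frac14(\sqrt{2\pi K}-L)^2} = e^{-\pi K/8}$. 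Summing the geometric tails over $\vec m$ and over the $N$ coordinates gives a factor $N$ and polynomial factors in $K$.

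Third, compare the main term with the truncated Gaussian defining $\widetilde E$. This is again a tails-only estimate.

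\textbf{Normalizations and assembly.} The normalizations $Z_0,Z_1,\widetilde Z_0,\widetilde Z_1$ differ from their continuum values by the same Poisson-type errors. Rather than track them exactly, I would absorb the resulting overall scale into a factor $\beta$ and apply \Cref{lem:isometries-distance-rescaled}. This reduces the problem to bounding $\lVert\widehat\Gamma_S - \beta\widetilde\Gamma_S\rVert$, where $\beta$ is the ratio of the continuum-normalized prefactors. Finally, I would pass from the column bounds to the operator norm using $\lVert A\rVert\le\sqrt N\max_x\lVert A\ket x\rVert$. This supplies the remaining $N^{1/2}$, for a total of $N^{3/2}$. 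The $K^{3/4}$ comes from the $\ell^2$ norm of the polynomial prefactors ($u_x$, and $\Delta^{N/2}$-type normalizations) against the Gaussian tails.

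\textbf{Main obstacle.} The hard step is the second one. The quadratic phase $e^{\frac12 i\vec u^T S\vec u}$ couples all coordinates and is not periodic, so the problem does not factor coordinate-wise, and the $N$-dimensional Poisson error must be bounded uniformly over all $\vec w\in[K]^N$. At the same time, the contribution summed over all $K^N$ output points must not blow up exponentially in $N$. I would handle this by bounding the aliasing error in $\ell^2$ rather than pointwise. Concretely, I would use that the full alias sum is itself a shifted Gaussian whose real part $-\frac14\norm{\cdot}^2$ is isotropic, because $\frac12(\id+iS)$ has real part $\frac12\id$. The $\ell^2$ norm over the grid then factors into one-dimensional Gaussian sums, which can be controlled by Poisson summation once more.
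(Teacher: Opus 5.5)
Your architecture matches the paper's. Extending the $\vec u$-sum from $[K]^N$ to $\Z^N$ adds exactly $\widehat\calF$ applied to the wrapped tails $\sum_{\vec m\neq\vec 0}(-1)^{\sum_j m_j} f_x(\alpha(\vec u)+2L\vec m)$. The kernel picks up $(-1)^{m_j}$ under $u_j\mapsto u_j+Km_j$ because $K$ is even, so the $\ell^2$ norm of this error is controlled by unitarity of $\widehat\calF$. Your Poisson step then reproduces \Cref{lem:fourier-commutes-with-wraparound-samp}, and absorbing the scalar through \Cref{lem:isometries-distance-rescaled} is exactly what the paper does. Your third step costs nothing. $\widetilde E$ and $\widehat Q_S^{-1/2}$ are defined by sampling on the same grid, so the main term restricted to $[K]^N$ is exactly a fixed positive multiple (the paper's $\beta_K$) of $\widetilde\Gamma_S\ket{x}$.

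The gap is in the $N$-bookkeeping of your column-by-column assembly. On the input side, the modulus of $f_x$ is the product function $\lvert u_x\rvert e^{-\frac12\norm{\vec u}_2^2}$. The sum over $\vec m\neq\vec 0$ therefore factors as $\prod(1+\delta)-1=O(N\delta)$, as you say.

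On the Fourier side, however, the prefactor of $\calF f_x$ is $\vec w^T(-iS)^{1/2}\vec e_x$. This is a dense linear form whose coefficient vector is the $x$-th column of $(-iS)^{1/2}$. The modulus $\lvert \vec w^T(-iS)^{1/2}\vec e_x\rvert\, e^{-\frac14\norm{\vec w}_2^2}$ is not a product function, so its $\ell^2$ norm over a shifted grid does not factor into one-dimensional sums, contrary to your last paragraph. Isotropy of the Gaussian is not enough. The factorization that does work is the triangle inequality over coordinates used in the paper, and it costs $\norm{(-iS)^{1/2}\vec e_x}_1$, which is of order $\sqrt N$ for generic $S$. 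Your per-column bound is then $O(N^{3/2}K^{3/4}e^{-\pi K/8})$, and $\lVert A\rVert\le\sqrt N\max_x\lVert A\ket{x}\rVert$ yields $N^{2}$ rather than $N^{3/2}$. That is harmless once $K=\Theta(\log(N/\epsilon))$, but it is not the stated bound.

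The paper avoids this by bounding $\sup_{\norm{\vec\gamma}_2=1}$ directly. It pays $\norm{\cdot}_1\le\sqrt N$ exactly once per side: on $\vec\gamma$ for the input tails, and on the unit vector $(-iS)^{1/2}\vec\gamma$ for the Fourier tails.

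Within your framework there are two repairs.
\begin{itemize}
\item Split the error into its input-side and Fourier-side operators, and take columns of the latter in the rotated basis $((-iS)^{1/2})^{\dagger}\ket{x}$, where the prefactor becomes $w_x$.
\item Alternatively, bound $\vec\mu^\dagger M^{(\vec m)}\vec\mu\le\lVert M^{(\vec m)}\rVert$ for the unit vector $\vec\mu=(-iS)^{1/2}\vec e_x$, where $M^{(\vec m)}_{jk}=\sum_{\vec w\in\Lambda^N+2L\vec m}w_jw_k e^{-\frac12\norm{\vec w}_2^2}$. Use $\sum_{v\in\Lambda}v\,e^{-v^2/2}=0$ to kill every cross term involving a coordinate with $m_j=0$; in particular $M^{(\vec m)}$ is diagonal when $\vec m$ has a single nonzero entry.
\end{itemize}
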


This lemma immediately implies \Cref{thm:disc-three-q-synthesis}:

\begin{proof}[Proof of \Cref{thm:disc-three-q-synthesis}]
\begin{align*}
    \left\lVert
        S
        -
        i
        \,
        \widehat
        E^\dagger
        \,
        \widehat
        Q_S
        \,
        \widehat
        \calF
        \,
        \widehat
        Q_S
        \,
        \widehat
        \calF
        \,
        \widehat
        Q_S
        \,
        \widehat
        E
    \right\rVert
    &
    =
    \left\lVert
        \widetilde 
        \Gamma_S^T
        \,
        \widehat
        Q_S
        \,
        \widetilde 
        \Gamma_S
        -
        \widehat 
        \Gamma_S^T
        \,
        \widehat
        Q_S
        \,
        \widehat 
        \Gamma_S
    \right\rVert
    \tag{\Cref{eq:actual-half-algorithm-factoring,eq:disc-ideal-algorithm}}
    \allowdisplaybreaks
    \\
    &
    \le
    \left\lVert
        \widetilde 
        \Gamma_S^T
        \,
        \widehat
        Q_S
        \left(
            \widetilde 
            \Gamma_S
            -
            \widehat 
            \Gamma_S
        \right)
    \right\rVert
    +
    \left\lVert
        \left(
            \widetilde 
            \Gamma_S^T
            -
            \widehat 
            \Gamma_S^T
        \right)
        \widehat
        Q_S
        \,
        \widehat 
        \Gamma_S
    \right\rVert
    \allowdisplaybreaks
    \\
    &
    \le
    2
    \left\lVert
        \widetilde 
        \Gamma_S
        -
        \widehat 
        \Gamma_S
    \right\rVert
    \tag{since $\widehat \Gamma_S$, $\widetilde \Gamma_S$, and $\widehat Q_S$ are all isometries}
    \allowdisplaybreaks
    \\
    &
    \le
    O
    \left(
        N^{3/2}
        K^{3/4}
        e^{-\pi K / 8}
    \right)
    \tag*{(\Cref{lem:real-vs-ideal-half-alg})\qedhere}
\end{align*}
\end{proof}

Before we prove \Cref{lem:real-vs-ideal-half-alg}, we show how to relate the discrete states we use to corresponding continuous wavefunctions.

\subsubsection{Sampling to the Grid}
We define a linear sampling map which maps
quantum states in $\calS(\R^N)$ with Schwartz
wavefunctions
to those whose support is on the grid $\Lambda^N$ defined in \Cref{sec:the-discretized-grid}.

\begin{align}
    \samp_{\Lambda^N}
    :
    \int_{\R^N}
    \psi(\vec v)
    \,
    \ket{\vec v}
    \,
    d \vec v
    \;\;
    \mapsto
    \sum_{
        \vec w 
        \in
        [K]^N
    }
    \psi(\alpha(\vec w))
    \,
    \ket{\vec w}
\end{align}
This is a mathematical operator which allows us to express the discretized states of the algorithm (for instance, the encoded states in \Cref{sec:disc-encoded-states}) as sampled versions of the corresponding continuous algorithm in \Cref{sec:continuous-alg}.

For the Fourier transform step of the algorithm, it will be convenient to consider a different sampling of the continuous states, one which has the amplitudes of the continuous states wrap around the grid like a torus (up to a phase on each wrap around).
To do so, we first define the sampling map on a shifted grid. For every integer shift $\vec m \in \Z^N$ (which indicates how many multiples of the grid size to shift it along each dimension), let
\begin{align}
    \samp_{\Lambda^N}^{(\vec m)}
    :
    \int_{\R^N}
    \psi(\vec v)
    \,
    \ket{\vec v}
    \,
    d \vec v
    \;\;
    \mapsto
    &
    \sum_{
        \vec w 
        \in
        [K]^N
    }
    \psi(\alpha(\vec w + K \vec m))
    \,
    \ket{\vec w}
    \\
    &
    =
    \sum_{
        \vec w 
        \in
        [K]^N
    }
    \psi(
        \alpha(\vec w)
        +
        2L
        \vec m
    )
    \,
    \ket{\vec w}
    \notag
    \allowdisplaybreaks
    \\
    &
    =
    \samp_{\Lambda^N}
    \int_{\R^N}
    \psi(
        \vec v
        +
        2L
        \vec m
    )
    \,
    \ket{\vec v}
    \,
    d \vec v
    \notag
    \,,
\end{align}
and then let the \emph{wrapped sampling map} on a Schwartz state $\ket{\psi}$ be
\begin{align}
    \wsamp_{\Lambda^N}
    :
    \ket{\psi}
    \;
    &
    \mapsto
    \sum_{
        \vec m 
        \in 
        \Z^N
    }
    (-1)^{
        \sum_{j \in [N]} m_j
    }
    \,
    \samp_{\Lambda^N}^{(\vec m)}
    \ket{\psi}
    \,,
\end{align}

This wrapped sampling is convenient precisely because it converts the continuous Fourier transform (\Cref{eq:continuous-fourier-transform}) to the discrete one (\Cref{eq:disc-fourier-transform}).

\begin{lemma}
    \label{lem:fourier-commutes-with-wraparound-samp}
    For every state 
    $
        \ket{\psi} 
        =
        \int_{\R^N}
        \psi(\vec v)
        \ket{\vec v}
        d \vec v
    $
    in the Schwartz space $\calS(\R^N)$,
    \begin{align}
        \widehat
        \calF
        \,
        \wsamp_{\Lambda^N}
        \ket{\psi}
        =
        \wsamp_{\Lambda^N}
        \,
        \calF
        \ket{\psi}
        \,.
        \label{eq:fourier-commutes-with-wraparound-samp}
    \end{align}
\end{lemma}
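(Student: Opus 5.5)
The plan is to verify \Cref{eq:fourier-commutes-with-wraparound-samp} one amplitude at a time, comparing the coefficient of $\ket{\vec w}$ on both sides for each fixed $\vec w \in [K]^N$. The left side will be rewritten as a single sum over the full lattice $\alpha(\Z^N)$. The scaled, shifted Poisson summation formula \Cref{eq:scaled-shifted-poisson-sum}, in its half-integer form \Cref{eq:scaled-half-integer-poisson-sum}, should then turn that sum into exactly the right side. Throughout I use $K$ even and the relations $\Delta^2 = 2\pi/K$ and $2L = K\Delta = 2\pi/\Delta$.

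\textbf{Step 1 (unfold the left side).} By definition, the $\vec w$-amplitude of $\widehat\calF\,\wsamp_{\Lambda^N}\ket{\psi}$ is
\begin{align*}
K^{-N/2}\sum_{\vec u\in[K]^N}\sum_{\vec m\in\Z^N}(-1)^{\sum_j m_j}\, e^{-i\alpha(\vec u)^T\alpha(\vec w)}\,\psi\big(\alpha(\vec u + K\vec m)\big)\,.
\end{align*}
Set $\vec a = \vec u + K\vec m$. As $(\vec u,\vec m)$ ranges over $[K]^N\times\Z^N$, $\vec a$ ranges bijectively over $\Z^N$, and $\alpha(\vec a) = \alpha(\vec u) + 2L\vec m$.

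\textbf{Step 2 (absorb the sign).} The phase changes under this substitution by
\begin{align*}
e^{-i\,2L\,\vec m^T\alpha(\vec w)} = \prod_j e^{-2\pi i\, m_j\left(w_j-\frac{K-1}{2}\right)} = \prod_j (-1)^{m_j(K-1)} = (-1)^{\sum_j m_j}\,,
\end{align*}
using that $K$ is even. This factor exactly cancels the wrapping sign. The left side therefore becomes
\begin{align*}
K^{-N/2}\sum_{\vec a\in\Z^N} f\big(\alpha(\vec a)\big), \qquad \text{where } f(\vec v) \coloneq \psi(\vec v)\,e^{-i\vec v^T\alpha(\vec w)}\,.
\end{align*}
The function $f$ is Schwartz. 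All the double sums converge absolutely because $\psi$ is Schwartz, so the regrouping is legitimate.

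\textbf{Step 3 (Poisson summation).} The lattice $\alpha(\Z^N) = (\Z^N + \vec\ell)\Delta$ has offset $\vec\ell = -\frac{K-1}{2}\vec 1$. Since $K$ is even, this offset is congruent to $\frac12\vec 1$ modulo $\Z^N$, so \Cref{eq:scaled-half-integer-poisson-sum} applies, with prefactor $\Delta^N/(2\pi)^{N/2} = K^{-N/2}$. The result is
\begin{align*}
\sum_{\vec m\in\Z^N}(-1)^{\sum_j m_j}\,(\calF f)\big(\tfrac{2\pi}{\Delta}\vec m\big)\,.
\end{align*}
A modulation becomes a translation under the Fourier transform, so $(\calF f)(\vec\xi) = (\calF\psi)(\vec\xi + \alpha(\vec w))$. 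Together with $2\pi/\Delta = 2L$, this gives
\begin{align*}
\sum_{\vec m}(-1)^{\sum_j m_j}(\calF\psi)\big(\alpha(\vec w) + 2L\vec m\big)\,,
\end{align*}
which is precisely the $\vec w$-amplitude of $\wsamp_{\Lambda^N}\calF\ket{\psi}$.

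The main obstacle is bookkeeping rather than any deep estimate. The conventions have to line up exactly: the half-integer offset of the grid, the wrap sign $(-1)^{\sum_j m_j}$, and the centered phases in $\widehat\calF$. The crucial point is the Step 2 computation showing that the wrap sign is cancelled by the phase picked up under the shift $\vec u\mapsto\vec u+K\vec m$; this is exactly why the wrapped sampling carries that sign. A secondary check is that applying Poisson summation with a non-centered offset $\vec\ell$ only contributes the factor $e^{2\pi i\vec\ell^T\vec m} = (-1)^{\sum_j m_j}$, which again uses that $K$ is even.
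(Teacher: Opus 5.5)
Your proof is correct, but it takes a different and more direct route than the paper's.

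The paper keeps the finite sum over $\vec v \in [K]^N$ coming from $\widehat\calF$ separate. It packages $K^{-N/2}\sum_{\vec v} e^{-i\alpha(\vec v)^T\alpha(\vec w)}\,\psi(\cdot + \alpha(\vec v))$ as an auxiliary Schwartz function and reads the wrapped sum $\sum_{\vec m}(-1)^{\sum_j m_j}(\cdot)(2L\vec m)$ as the Fourier side of \Cref{eq:scaled-half-integer-poisson-sum}. It then computes the inverse transform of that auxiliary function (a combination of modulations of $\calF\psi$) and applies Poisson summation backwards, landing on a sum over the fine lattice $\alpha(\Z^N)$. Only then does it collapse the sum over $\vec v$ using orthogonality of the characters of $\Z_K^N$. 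That step selects $\vec r \equiv (K-1)\vec 1 - \vec w \pmod{K}$ and regenerates the sign $(-1)^{\sum_j m_j}$.

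You instead merge the DFT sum and the wrap sum into a single fine-lattice sum at the outset. The key fact you use is that the centered kernel $e^{-i\alpha(\vec u)^T\alpha(\vec w)}$ is antiperiodic under $\vec u \mapsto \vec u + K\vec m$ when $K$ is even. After that, one forward application of Poisson summation together with the modulation-to-translation rule finishes the proof.

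Your version is shorter. It needs no discrete orthogonality relation and no auxiliary inverse transform, so it sidesteps the sign conventions the paper has to track there. It also explains why the wrapped sampling carries exactly the sign $(-1)^{\sum_j m_j}$: on one side it is the antiperiodicity of the DFT kernel, on the other the half-integer offset of the grid. The paper's version costs more bookkeeping but shows the aliasing mechanism of the discrete transform explicitly, namely the character sum picking out a residue class mod $K$.

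The direction of the phase in your Step~2 does not matter, since it evaluates to a real $\pm 1$.
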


\begin{proof}

We use the Poisson summation formula of \Cref{eq:scaled-half-integer-poisson-sum}, specialized to the grid $\Lambda^N$, as defined in \Cref{sec:the-discretized-grid}. Specifically, we take 
$
    \Delta 
    = 
    \sqrt{
        \frac{
            2\pi
        }{
            K
        }
    }
    \,,
$ 
which gives 
$
    \frac
    {
        \Delta^{N}
    }
    {
        (2\pi)^{N/2}
    }
    =
    K^{-N/2}
    \,,
$
and 
$
    \frac{2\pi}{\Delta}
    =
    \sqrt{2\pi K}
    =
    2L
    \,,
$
which gives
\begin{align}
    \frac
    {
        1
    }
    {
        K^{N/2}
    }
    \sum_{\vec r \in \Z^N}
    f
    \left(
        \left(
            \vec r 
            + 
            \frac12
            \vec 1
        \right)
        \sqrt{
            \frac{
                2\pi
            }{
                K
            }
        }
    \right)
    &
    =
    \sum_{\vec m \in \Z^N}
    (-1)^{
        \sum_{j \in [N]} m_j
    }
    \;
    (\calF \, f)
    \left(
        2L
        \,
        \vec m
        \,
    \right)
    \label{eq:fourier-commutes-with-wraparound-samp-poisson-sum}
    \,.
\end{align}

We show that the two sides of 
\Cref{eq:fourier-commutes-with-wraparound-samp}
are equal by showing that the amplitude assigned by the left side and right side on some $\vec w \in [K]^N$ are the same. Starting from the left side, we have that
\begin{align*}
    \bra{\vec w}
    \widehat
    \calF
    \,
    \wsamp_{\Lambda^N}
    \ket{\psi}
    &
    =
    \bra{\vec w}
    \widehat
    \calF
    \,
    \sum_{
        \vec m 
        \in 
        \Z^N
    }
    (-1)^{
        \sum_{j \in [N]} m_j
    }
    \,
    \samp_{\Lambda^N}^{(\vec m)}
    \ket{\psi}
    \allowdisplaybreaks
    \\
    &
    =
    \bra{\vec w}
    \left(
        \frac{1}{\sqrt{K^{N}}}
        \sum_{\vec v, \vec w' \in [K]^N}
        \;
        e^{-i \, \alpha(\vec v)^T \alpha(\vec w')}
        \ketbra{\vec w'}{\vec v}
    \right)
    \,
    \sum_{
        \vec m 
        \in 
        \Z^N
    }
    (-1)^{
        \sum_{j \in [N]} m_j
    }
    \,
    \samp_{\Lambda^N}^{(\vec m)}
    \ket{\psi}
    \allowdisplaybreaks
    \\
    &
    =
    \left(
        \frac{1}{\sqrt{K^{N}}}
        \sum_{\vec v \in [K]^N}
        \;
        e^{-i \, \alpha(\vec v)^T \alpha(\vec w)}
        \bra{\vec v}
    \right)
    \,
    \sum_{
        \vec m 
        \in 
        \Z^N
    }
    (-1)^{
        \sum_{j \in [N]} m_j
    }
    \,
    \samp_{\Lambda^N}^{(\vec m)}
    \ket{\psi}
    \allowdisplaybreaks
    \\
    &
    =
    \frac{1}{\sqrt{K^{N}}}
    \sum_{\vec v \in [K]^N}
    \,
    e^{-i \, \alpha(\vec v)^T \alpha(\vec w)}
    \,
    \sum_{
        \vec m 
        \in 
        \Z^N
    }
    (-1)^{
        \sum_{j \in [N]} m_j
    }
    \,
    \bra{\vec v}
    \samp_{\Lambda^N}^{(\vec m)}
    \ket{\psi}
    \allowdisplaybreaks
    \\
    &
    =
    \frac{1}{\sqrt{K^{N}}}
    \sum_{\vec v \in [K]^N}
    \,
    e^{-i \, \alpha(\vec v)^T \alpha(\vec w)}
    \,
    \sum_{
        \vec m 
        \in 
        \Z^N
    }
    (-1)^{
        \sum_{j \in [N]} m_j
    }
    \,
    \psi(
        \alpha(
            \vec v
            +
            K
            \vec m
        )
    )
    \allowdisplaybreaks
    \\
    &
    =
    \frac{1}{\sqrt{K^{N}}}
    \sum_{\vec v \in [K]^N}
    \,
    e^{-i \, \alpha(\vec v)^T \alpha(\vec w)}
    \,
    \sum_{
        \vec m 
        \in 
        \Z^N
    }
    (-1)^{
        \sum_{j \in [N]} m_j
    }
    \,
    \psi(
        \alpha(
            \vec v
        )
        +
        2L
        \vec m
    )
    \allowdisplaybreaks
    \\
    &
    =
    \frac{1}{\sqrt{K^{N}}}
    \sum_{
        \vec m 
        \in 
        \Z^N
    }
    (-1)^{
        \sum_{j \in [N]} m_j
    }
    \,
    \sum_{\vec v \in [K]^N}
    \,
    e^{-i \, \alpha(\vec v)^T \alpha(\vec w)}
    \,
    \psi(
        \alpha(
            \vec v
        )
        +
        2L
        \vec m
    )
    \allowdisplaybreaks
    \\
    &
    =
    \sum_{
        \vec m 
        \in 
        \Z^N
    }
    (-1)^{
        \sum_{j \in [N]} m_j
    }
    \,
    (
        \calF
        f
    )
    (
        2L
        \vec m
    )
    \numberthis
    \label{eq:fourier-commutes-with-wraparound-samp-from-left}
    \,,
\end{align*}
where
$
    (\calF f)
    (\vec u)
    \coloneq
    \frac{1}{\sqrt{K^{N}}}
    \sum_{\vec v \in [K]^N}
    \,
    e^{-i \, \alpha(\vec v)^T \alpha(\vec w)}
    \,
    \psi(
        \vec u
        +
        \alpha(
            \vec v
        )
    )
    \,,
$
and we have that%
\footnote{
    The function $\calF f$ is a finite linear combination of translates of the Schwartz function $\psi$. Since the Fourier transform preserves Schwartz space, $f$ is also Schwartz.
}
\begin{align*}
    f
    (\vec x)
    &
    =
    \frac{1}{\sqrt{K^{N}}}
    \sum_{\vec v \in [K]^N}
    \,
    e^{-i \, \alpha(\vec v)^T \alpha(\vec w)}
    \,
    \frac{1}{(2 \pi)^{N/2}}
    \int_{\R^N}
    e^{
        i 
        \,
        (
            \vec u
        )^T
        \vec x
    }
    \,
    \psi(
        \vec u
        +
        \alpha(
            \vec v
        )
    )
    \,
    d \vec u
    \allowdisplaybreaks
    \\
    &
    =
    \frac{1}{\sqrt{K^{N}}}
    \sum_{\vec v \in [K]^N}
    \,
    e^{-i \, \alpha(\vec v)^T \alpha(\vec w)}
    \,
    \frac{1}{(2 \pi)^{N/2}}
    \int_{\R^N}
    e^{
        i 
        \,
        (
            \vec u
            -
            \alpha(
                \vec v
            )
        )^T
        \vec x
    }
    \,
    \psi(
        \vec u
    )
    \,
    d \vec u
    \allowdisplaybreaks
    \\
    &
    =
    \frac{1}{\sqrt{K^{N}}}
    \sum_{\vec v \in [K]^N}
    \,
    e^{-i \, \alpha(\vec v)^T \alpha(\vec w)}
    \,
    e^{
        -
        i 
        \,
        \alpha(
            \vec v
        )^T
        \vec x
    }
    \frac{1}{(2 \pi)^{N/2}}
    \int_{\R^N}
    e^{
        i 
        \,
        \vec u^T
        \vec x
    }
    \,
    \psi(
        \vec u
    )
    \,
    d \vec u
    \allowdisplaybreaks
    \\
    &
    =
    \frac{1}{\sqrt{K^{N}}}
    \sum_{\vec v \in [K]^N}
    \,
    e^{-i \, \alpha(\vec v)^T \alpha(\vec w)}
    \,
    e^{
        -
        i 
        \,
        \alpha(
            \vec v
        )^T
        \vec x
    }
    \;
    (
        \calF
        \psi
    )
    (
        -\vec x
    )
    \allowdisplaybreaks
    \\
    &
    =
    \frac{1}{\sqrt{K^{N}}}
    \sum_{\vec v \in [K]^N}
    \,
    e^{
        -
        i 
        \, 
        \alpha(\vec v)^T 
        (
            \alpha(\vec w)
            +
            \vec x
        )
    }
    \;
    (
        \calF
        \psi
    )
    (
        -\vec x
    )
    \,.
\end{align*}
Applying the poisson summation of \Cref{eq:fourier-commutes-with-wraparound-samp-poisson-sum} to this function, we get that \Cref{eq:fourier-commutes-with-wraparound-samp-from-left} is equal to
\begin{align*}
    \frac
    {
        1
    }
    {
        K^{N/2}
    }
    \sum_{\vec r \in \Z^N}
    f
    \left(
        \left(
            \vec r 
            + 
            \frac12
            \vec 1
        \right)
        \sqrt{
            \frac{
                2\pi
            }{
                K
            }
        }
    \right)
    &
    =
    \frac
    {
        1
    }
    {
        K^{N/2}
    }
    \sum_{\vec r \in \Z^N}
    f
    \left(
        \left(
            \vec r 
            -
            \frac{K-1}{2}
            \vec 1
        \right)
        \sqrt{
            \frac{
                2\pi
            }{
                K
            }
        }
    \right)
    \tag{since K is even}
    \allowdisplaybreaks
    \\
    &
    =
    \frac
    {
        1
    }
    {
        K^{N/2}
    }
    \sum_{\vec r \in \Z^N}
    f
    \left(
        \alpha
        (
            \vec r 
        )
    \right)
    \allowdisplaybreaks
    \\
    &
    =
    \frac
    {
        1
    }
    {
        K^{N}
    }
    \sum_{\vec r \in \Z^N}
    \sum_{\vec v \in [K]^N}
    \,
    e^{
        -
        i 
        \, 
        \alpha(\vec v)^T 
        (
            \alpha(\vec w)
            +
            \alpha(\vec r)
        )
    }
    \;
    (
        \calF
        \psi
    )
    (
        -
        \alpha(\vec r)
    )
    \numberthis
    \label{eq:fourier-commutes-with-wraparound-samp-from-left2}
    \,,
\end{align*}
where we have that
\begin{align*}
    \frac{1}{K^{N}}
    \sum_{\vec v \in [K]^N}
    \,
    e^{
        -
        i 
        \, 
        \alpha(\vec v)^T 
        (
            \alpha(\vec w)
            +
            \alpha(\vec r)
        )
    }
    &
    =
    \frac{1}{K^{N}}
    \sum_{\vec v \in [K]^N}
    \,
    e^{
        -
        \frac{2\pi i}{K}
        \, 
        \left(
            \vec v 
            -
            \frac{K-1}{2}
            \vec 1
        \right)^T
        \left(
            \left(
                \vec w 
                -
                \frac{K-1}{2}
                \vec 1
            \right)
            +
            \left(
                \vec r
                -
                \frac{K-1}{2}
                \vec 1
            \right)
        \right)
    }
    \allowdisplaybreaks
    \\
    &
    =
    \frac{1}{K^{N}}
    \sum_{\vec v \in [K]^N}
    \,
    e^{
        -
        \frac{2\pi i}{K}
        \, 
        \left(
            \vec v 
            -
            \frac{K-1}{2}
            \vec 1
        \right)^T
        \left(
            \vec w 
            +
            \vec r
            -
            (K-1)
            \vec 1
        \right)
    }
    \allowdisplaybreaks
    \\
    &
    =
    e^{
        -
        \frac{2\pi i}{K}
        \, 
        \left(
            -
            \frac{K-1}{2}
            \vec 1
        \right)^T
        \left(
            \vec w 
            +
            \vec r
            -
            (K-1)
            \vec 1
        \right)
    }
    \underbrace{
    \frac{1}{K^{N}}
    \sum_{\vec v \in [K]^N}
    \,
    e^{
        -
        \frac{2\pi i}{K}
        \, 
        \vec v^T
        \left(
            \vec w 
            +
            \vec r
            -
            (K-1)
            \vec 1
        \right)
    }
    }_{
    =
    \begin{cases}
        1
        &
        \vec r 
        =
        -
        \vec w
        +
        (K-1)\vec 1
        -
        K
        \vec m
        \quad
        \vec m \in \Z^N
        \\
        0
        &
        \text{otherwise}
    \end{cases}
    }
    \allowdisplaybreaks
    \\
    &
    =
    \begin{cases}
        (-1)^{
            \sum_{j \in [N]}
            m_j
        }
        &
        \vec r 
        =
        -
        \vec w
        +
        (K-1)\vec 1
        -
        K
        \vec m
        \quad
        \vec m \in \Z^N
        \\
        0
        &
        \text{otherwise}
    \end{cases}
    \,,
\end{align*}
and
\begin{align*}
    \alpha
    (
        -
        \vec w
        +
        (K-1)\vec 1
        -
        K
        \vec m
    )
    &
    =
    -
    \left(
        \vec w
        -
        \frac{K-1}{2}
        \vec 1
        +
        K
        \vec m
    \right)
    \sqrt{
        \frac{
            2\pi
        }{
            K
        }
    }
    \allowdisplaybreaks
    \\
    &
    =
    -
    \alpha
    \left(
        \vec w
    \right)
    -
    2L
    \vec m
    \,.
\end{align*}
Putting these back into \Cref{eq:fourier-commutes-with-wraparound-samp-from-left2}, we have that 
\begin{align*}
    \bra{\vec w}
    \widehat
    \calF
    \,
    \wsamp_{\Lambda^N}
    \ket{\psi}
    &
    =
    \sum_{\vec m \in \Z^N}
    (-1)^{
        \sum_{j \in [N]}
        m_j
    }
    \;
    (
        \calF
        \psi
    )
    (
        \alpha(\vec w)
        +
        2L
        \vec m
    )
    \allowdisplaybreaks
    \\
    &
    =
    \sum_{
        \vec m 
        \in 
        \Z^N
    }
    (-1)^{
        \sum_{j \in [N]} m_j
    }
    \,
    \bra{\vec w}
    \samp_{\Lambda^N}^{(\vec m)}
    \,
    \calF
    \ket{\psi}
    \allowdisplaybreaks
    \\
    &
    =
    \bra{\vec w}
    \sum_{
        \vec m 
        \in 
        \Z^N
    }
    (-1)^{
        \sum_{j \in [N]} m_j
    }
    \,
    \samp_{\Lambda^N}^{(\vec m)}
    \,
    \calF
    \ket{\psi}
    \allowdisplaybreaks
    \\
    &
    =
    \bra{\vec w}
    \wsamp_{\Lambda^N}
    \,
    \calF
    \ket{\psi}
    \,.
    \qedhere
\end{align*}
\end{proof}

\begin{remark}
    \Cref{lem:fourier-commutes-with-wraparound-samp} indicates that unlike the actual sampling method that we use, if we were to sample our discretized states according to this wrapped sampling method, we could implement the Fourier transform without incurring extra error. However, note that such a sampling would simply move the error onto the diagonal phase queries, since those would not be properly implementable on a wrapped grid.
    Since we have three such queries, but only two Fourier transforms, we find it simpler to sample to the grid as we do, and consider the error incurred on the Fourier transform, than to bound the error incurred on the phase queries by a wrap-around sampling.
\end{remark}

Now we can use this fact about the Fourier transform and the wrapped sampling to bound how far this is from being the case for the actual sampling we perform on the grid $\Lambda^N$.

The following two corollaries expresses the difference between applying a continuous Fourier transform before sampling and applying a discrete Fourier transform after sampling to the grid $\Lambda^N$. This difference can be expressed as two terms consisting of the sampled amplitudes outside the grid in both the standard and Fourier bases.

\begin{corollary}
    \label{lem:sampling-fourier-bound1}
    For every state 
    $
        \ket{\psi} 
        =
        \int_{\R^N}
        \psi(\vec v)
        \ket{\vec v}
        d \vec v
    $
    in the Schwartz space $\calS(\R^N)$,
    \begin{align*}
        \widehat
        \calF
        \samp_{\Lambda^N}
        \ket{\psi}
        -
        \samp_{\Lambda^N}
        \calF
        \ket{\psi} 
        = 
        &
        \sum_{
            \vec m 
            \in 
            \Z^N
            \setminus
            \{\vec 0\}
        }
        (-1)^{
            \sum_{j \in [N]} m_j
        }
        \samp_{\Lambda^N}^{(\vec m)}
        \calF
        \ket{\psi}
        \\
        &
        \quad
        -
        \widehat
        \calF
        \sum_{
            \vec m 
            \in 
            \Z^N
            \setminus
            \{\vec 0\}
        }
        (-1)^{
            \sum_{j \in [N]} m_j
        }
        \samp_{\Lambda^N}^{(\vec m)}
        \ket{\psi}
        \numberthis
        \,.
    \end{align*}
\end{corollary}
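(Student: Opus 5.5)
This corollary is an algebraic rearrangement of \Cref{lem:fourier-commutes-with-wraparound-samp}, so I would derive it directly from that lemma. The first step is to split the wrapped sampling map into its $\vec m = \vec 0$ term and the remaining shifts. Since $\samp_{\Lambda^N}^{(\vec 0)} = \samp_{\Lambda^N}$, for every Schwartz state $\ket{\phi}$ we have
\begin{align*}
    \wsamp_{\Lambda^N}\ket{\phi}
    =
    \samp_{\Lambda^N}\ket{\phi}
    +
    \sum_{\vec m \in \Z^N \setminus \{\vec 0\}}
    (-1)^{\sum_{j \in [N]} m_j}
    \samp_{\Lambda^N}^{(\vec m)}\ket{\phi}
    \,.
\end{align*}
I would apply this decomposition twice: once with $\ket{\phi} = \ket{\psi}$ and once with $\ket{\phi} = \calF\ket{\psi}$. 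The second application is legitimate because $\calF$ preserves the Schwartz space, so $\calF\ket{\psi}$ is again a Schwartz state.

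The second step is to substitute both decompositions into the identity $\widehat\calF\,\wsamp_{\Lambda^N}\ket{\psi} = \wsamp_{\Lambda^N}\calF\ket{\psi}$ from \Cref{lem:fourier-commutes-with-wraparound-samp}. On the left-hand side, I would pull $\widehat\calF$ through the infinite sum by linearity. Then I would move the $\vec m \neq \vec 0$ terms to the opposite sides, which yields exactly the claimed expression for $\widehat\calF\samp_{\Lambda^N}\ket{\psi} - \samp_{\Lambda^N}\calF\ket{\psi}$.

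The only step that needs care is convergence. Both sums over $\vec m \in \Z^N \setminus \{\vec 0\}$ must converge in $\C^{K^N}$, so that splitting them off, rearranging them, and commuting them with the finite matrix $\widehat\calF$ are all justified. I would handle this coordinatewise. For a fixed $\vec w \in [K]^N$, the amplitude of $\samp_{\Lambda^N}^{(\vec m)}\ket{\phi}$ is $\phi(\alpha(\vec w) + 2L\vec m)$. Because $\phi$ is Schwartz, it is bounded by $C_M(1 + \norm{\alpha(\vec w)+2L\vec m})^{-M}$ for every $M$, and taking $M > N$ makes the sum over $\vec m \in \Z^N$ absolutely summable. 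The whole space $\C^{K^N}$ is finite-dimensional, so absolute convergence in each coordinate gives absolute convergence of the vector sums. Linearity of $\widehat\calF$ then lets it pass through the sum, and the corollary follows.
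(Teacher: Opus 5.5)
Your proposal is correct and takes the same route as the paper: split off the $\vec m = \vec 0$ term of each wrapped sampling map in \Cref{lem:fourier-commutes-with-wraparound-samp} and rearrange. Your coordinatewise absolute-convergence check via Schwartz decay is a useful detail the paper leaves implicit. Strictly, you only need linearity of $\widehat\calF$ over the two-term split, since the statement keeps $\widehat\calF$ outside the sum over $\vec m \neq \vec 0$.
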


\begin{proof}
    This simply follows from \Cref{lem:fourier-commutes-with-wraparound-samp} and the observation that 
    $
        \widehat
        \calF
        \samp_{\Lambda^N}
        \ket{\psi}
    $
    and
    $
        \samp_{\Lambda^N}
        \calF
        \ket{\psi} 
    $
    are precisely the $\vec m = \vec 0$ terms that are missing from the respective sums on the right.
    That is, expand each wrapped sampling map in the lemma and separate the $\vec m = \vec 0$ term. 
    This gives 
    \begin{align*}
        \widehat
        \calF
        \samp_{\Lambda^N}
        \ket{\psi}
        +
        \widehat
        \calF
        \sum_{
            \vec m 
            \in 
            \Z^N
            \setminus
            \{\vec 0\}
        }
        (-1)^{
            \sum_{j \in [N]} m_j
        }
        \samp_{\Lambda^N}^{(\vec m)}
        \ket{\psi}
        \\
        = 
        \samp_{\Lambda^N}
        \calF
        \ket{\psi} 
        +
        \sum_{
            \vec m 
            \in 
            \Z^N
            \setminus
            \{\vec 0\}
        }
        (-1)^{
            \sum_{j \in [N]} m_j
        }
        \samp_{\Lambda^N}^{(\vec m)}
        \calF
        \ket{\psi}
        \,.
    \end{align*}
    Rearranging proves the corollary.
\end{proof}

\begin{corollary}
    \label{lem:sampling-fourier-bound2}
    For every state 
    $
        \ket{\psi} 
        =
        \int_{\R^N}
        \psi(\vec v)
        \ket{\vec v}
        d \vec v
    $
    in the Schwartz space $\calS(\R^N)$,
    \begin{align*}
        \left\lVert
            \widehat
            \calF
            \samp_{\Lambda^N}
            \ket{\psi}
            -
            \samp_{\Lambda^N}
            \calF
            \ket{\psi} 
        \right\rVert
        \le
        \sum_{
            \vec m 
            \in 
            \Z^N
            \setminus
            \{\vec 0\}
        }
        \left\lVert
            \samp_{\Lambda^N}^{(\vec m)}
            \calF
            \ket{\psi}
        \right\rVert
        +
        \sum_{
            \vec m 
            \in 
            \Z^N
            \setminus
            \{\vec 0\}
        }
        \left\lVert
            \samp_{\Lambda^N}^{(\vec m)}
            \ket{\psi}
        \right\rVert
        \numberthis
        \,.
    \end{align*}
\end{corollary}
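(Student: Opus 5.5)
The plan is to start from the exact identity in \Cref{lem:sampling-fourier-bound1} and apply the triangle inequality to each of its two infinite sums. Before doing so, we should check that the sums converge absolutely in norm. This is what justifies splitting the norm of the sum into a sum of norms; if the right-hand side diverges, the inequality holds trivially, so nothing is lost either way. For a Schwartz wavefunction $\psi$, each vector $\samp_{\Lambda^N}^{(\vec m)}\ket{\psi}$ has finitely many entries $\psi(\alpha(\vec w)+2L\vec m)$. Their norms decay faster than any polynomial in $\norm{\vec m}$, so the series over $\vec m\in\Z^N\setminus\{\vec 0\}$ converges absolutely. The same holds for $\calF\ket{\psi}$, since the Fourier transform preserves Schwartz space.

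Given this, the steps are as follows.

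First, take norms of both sides of the identity in \Cref{lem:sampling-fourier-bound1} and split the difference into its two sums:
\begin{align*}
\Bigl\lVert \widehat\calF\samp_{\Lambda^N}\ket{\psi}-\samp_{\Lambda^N}\calF\ket{\psi}\Bigr\rVert
\le \Bigl\lVert \sum_{\vec m\neq\vec 0}(-1)^{\sum_j m_j}\samp_{\Lambda^N}^{(\vec m)}\calF\ket{\psi}\Bigr\rVert
+\Bigl\lVert \widehat\calF\sum_{\vec m\neq\vec 0}(-1)^{\sum_j m_j}\samp_{\Lambda^N}^{(\vec m)}\ket{\psi}\Bigr\rVert .
\end{align*}

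Second, in the last term, remove $\widehat\calF$. This is allowed because $\widehat\calF$ is unitary on $\C^{[K]^N}$, being a tensor product of the unitaries $\widehat\calF_1=\calD_1\qft^\dagger_{\Z_K}\calD_1$.

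Third, apply the triangle inequality termwise to each absolutely convergent series. The signs $(-1)^{\sum_j m_j}$ have modulus one and drop out, which yields exactly the stated bound.

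The only step that needs any care is the convergence justification, and it is routine for Schwartz functions.
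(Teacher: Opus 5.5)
Your proposal is correct and matches the paper's proof step for step: start from the identity in \Cref{lem:sampling-fourier-bound1}, apply the triangle inequality, drop $\widehat\calF$ by unitarity, and apply the triangle inequality termwise to each series. Your explicit check that the series converge absolutely for Schwartz $\psi$ is a small addition that the paper leaves implicit.
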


\begin{proof}
    By \Cref{lem:sampling-fourier-bound1}, we have that
    \begin{align*}
        &
        \left\lVert
            \widehat
            \calF
            \samp_{\Lambda^N}
            \ket{\psi}
            -
            \samp_{\Lambda^N}
            \calF
            \ket{\psi} 
        \right\rVert
        \allowdisplaybreaks
        \\
        &
        \le
        \left\lVert
            \sum_{
                \vec m 
                \in 
                \Z^N
                \setminus
                \{\vec 0\}
            }
            (-1)^{
                \sum_{j \in [N]} m_j
            }
            \samp_{\Lambda^N}^{(\vec m)}
            \calF
            \ket{\psi}
        \right\rVert
        +
        \left\lVert
            \widehat
            \calF
            \sum_{
                \vec m 
                \in 
                \Z^N
                \setminus
                \{\vec 0\}
            }
            (-1)^{
                \sum_{j \in [N]} m_j
            }
            \samp_{\Lambda^N}^{(\vec m)}
            \ket{\psi}
        \right\rVert
        \tag{triangle inequality}
        \allowdisplaybreaks
        \\
        &
        =
        \left\lVert
            \sum_{
                \vec m 
                \in 
                \Z^N
                \setminus
                \{\vec 0\}
            }
            (-1)^{
                \sum_{j \in [N]} m_j
            }
            \samp_{\Lambda^N}^{(\vec m)}
            \calF
            \ket{\psi}
        \right\rVert
        +
        \left\lVert
            \sum_{
                \vec m 
                \in 
                \Z^N
                \setminus
                \{\vec 0\}
            }
            (-1)^{
                \sum_{j \in [N]} m_j
            }
            \samp_{\Lambda^N}^{(\vec m)}
            \ket{\psi}
        \right\rVert
        \tag{since $\widehat \calF$ is unitary}
        \allowdisplaybreaks
        \\
        &
        \le
        \sum_{
            \vec m 
            \in 
            \Z^N
            \setminus
            \{\vec 0\}
        }
        \left\lVert
            \samp_{\Lambda^N}^{(\vec m)}
            \calF
            \ket{\psi}
        \right\rVert
        +
        \sum_{
            \vec m 
            \in 
            \Z^N
            \setminus
            \{\vec 0\}
        }
        \left\lVert
            \samp_{\Lambda^N}^{(\vec m)}
            \ket{\psi}
        \right\rVert
        \tag*{(triangle inequality) \qedhere}
        \,.
    \end{align*}
\end{proof}

\subsubsection{Continuous States Corresponding to the Half-Algorithm}

We next consider the state resulting from implementing the actual half-algorithm  
$
    \widehat 
    \Gamma_S
    =
    \widehat
    \calF
    \widehat
    Q_S
    \widehat
    E
$
(see \Cref{sec:actual-half-alg})
on some input state 
$
    \ket{\gamma} 
    = 
    \sum_{x \in [N]} 
    \gamma_x 
    \ket{x}
$:
\begin{align*}
    \widehat 
    \Gamma_S
    \ket{\gamma}
    &
    =
    \widehat
    \calF
    \,
    \widehat
    Q_S
    \,
    \widehat
    E
    \ket{\gamma}
    \allowdisplaybreaks
    \\
    &
    =
    \sqrt{
        \frac{
            1
        }{
            Z_0^{N-1}
            Z_1
        }
    }
    \;\;
    \widehat
    \calF
    \,
    \widehat
    Q_S
    \!
    \sum_{
        \vec v \in [K]^N
    }
    \alpha(
        \vec v
    )^T
    \vec \gamma
    \;
    e^{
        -
        \frac12 
        \norm{\alpha(\vec v)}_2^2
    }
    \;
    \ket{\vec v}
    \allowdisplaybreaks
    \\
    &
    =
    \sqrt{
        \frac{
            1
        }{
            Z_0^{N-1}
            Z_1
        }
    }
    \;\;
    \widehat
    \calF
    \!
    \sum_{
        \vec v \in [K]^N
    }
    \alpha(
        \vec v
    )^T
    \vec \gamma
    \;
    e^{
        -
        \frac12 
        \norm{\alpha(\vec v)}_2^2
    }
    \;
    e^{
        \frac12 
        \, 
        i 
        \, 
        \, 
        \alpha(\vec v)^T 
        \,
        S 
        \,
        \alpha(\vec v)
    }
    \,
    \ket{\vec v}
    \allowdisplaybreaks
    \\
    &
    =
    \sqrt{
        \frac{
            1
        }{
            Z_0^{N-1}
            Z_1
        }
    }
    \;\;
    \widehat
    \calF
    \,
    \samp_{\Lambda^N}
    \!
    \int_{\R^N}
    \vec u^T
    \vec \gamma
    \;
    e^{
        -
        \frac12 
        \norm{\vec u}_2^2
    }
    \;
    e^{
        \frac12 
        \, 
        i 
        \, 
        \vec u^T 
        \,
        S 
        \,
        \vec u
    }
    \,
    \ket{\vec u}
    \,
    d \vec u
    \allowdisplaybreaks
    \\
    &
    =
    \widehat
    \calF
    \,
    \samp_{\Lambda^N}
    \ket{\psi_{\gamma}}
    \,,
\end{align*}
where we use the (un-normalized) state in $\calS(\R^N)$
\begin{align*}
    \ket{\psi_{\gamma}}
    \coloneq
    \sqrt{
        \frac{
            1
        }{
            Z_0^{N-1}
            Z_1
        }
    }
    \;
    \int_{\R^N}
    \vec u^T
    \vec \gamma
    \;
    e^{
        -
        \frac12 
        \norm{\vec u}_2^2
    }
    \;
    e^{
        \frac12 
        \, 
        i 
        \, 
        \vec u^T 
        \,
        S 
        \,
        \vec u
    }
    \,
    \ket{\vec u}
    \,
    d \vec u
    \,.
\end{align*}
We can also compute the continuous Fourier transform of this (unnormalized) state:
\begin{align*}
    \calF
    \ket{\psi_{\gamma}}
    &
    =
    \left(
        \frac{1}{(2\pi)^{N/2}}
        \iint_{\R^N}
        \;
        e^{-i \, \vec u'^T \vec w}
        \ketbra{\vec w}{\vec u'}
        d \vec u'
        \,
        d \vec w
    \right)
    \sqrt{
        \frac{
            1
        }{
            Z_0^{N-1}
            Z_1
        }
    }
    \;
    \int_{\R^N}
    \vec u^T
    \vec \gamma
    \;
    e^{
        -
        \frac12 
        \norm{\vec u}_2^2
    }
    \;
    e^{
        \frac12 
        \, 
        i 
        \, 
        \vec u^T 
        \,
        S 
        \,
        \vec u
    }
    \,
    \ket{\vec u}
    \,
    d \vec u
    \allowdisplaybreaks
    \\
    &
    =
    \frac{1}{(2\pi)^{N/2}}
    \sqrt{
        \frac{
            1
        }{
            Z_0^{N-1}
            Z_1
        }
    }
    \;
    \iint_{\R^N}
    \vec u^T
    \vec \gamma
    \;
    e^{
        -
        \frac12 
        \norm{\vec u}_2^2
    }
    \;
    e^{
        \frac12 
        \, 
        i 
        \, 
        \vec u^T 
        \,
        S 
        \,
        \vec u
    }
    \;
    e^{-i \, \vec u^T \vec w}
    \,
    \ket{\vec w}
    \,
    d \vec u
    \,
    d \vec w
    \allowdisplaybreaks
    \\
    &
    =
    \frac{1}{(2\pi)^{N/2}}
    \sqrt{
        \frac{
            1
        }{
            Z_0^{N-1}
            Z_1
        }
    }
    \;
    \iint_{\R^N}
    \vec u^T
    \vec \gamma
    \;
    e^{
        \frac12 
        \, 
        i 
        \, 
        \vec u^T 
        \,
        (
            S
            +
            i
            \id
        )
        \,
        \vec u
    }
    \;
    e^{-i \, \vec u^T \vec w}
    \,
    \ket{\vec w}
    \,
    d \vec u
    \,
    d \vec w
    \allowdisplaybreaks
    \\
    &
    =
    \frac{1}{(2\pi)^{N/2}}
    \sqrt{
        \frac{
            1
        }{
            Z_0^{N-1}
            Z_1
        }
    }
    \;
    \iint_{\R^N}
    \vec u^T
    \vec \gamma
    \;
    e^{
        \frac{1}{\sqrt{2}}
        \, 
        i 
        \, 
        \vec u^T 
        \,
        (
            iS
        )^{\frac12}
        \,
        \vec u
    }
    \;
    e^{-i \, \vec u^T \vec w}
    \,
    \ket{\vec w}
    \,
    d \vec u
    \,
    d \vec w
    \allowdisplaybreaks
    \\
    &
    =
    \sqrt{
        \frac{
            1
        }{
            2^{N/2 + 1}
            Z_0^{N-1}
            Z_1
        }
    }
    \;
    \int_{\R^N}
    \vec w^T
    (
        -i
        S
    )^{\frac12}
    \vec \gamma
    \;
    e^{
        -
        \frac{1}{2\sqrt{2}}
        \, 
        i 
        \, 
        \vec w^T 
        \,
        (
            -
            i
            S
        )^{\frac12}
        \,
        \vec w
    }
    \,
    \ket{\vec w}
    \,
    d \vec w
    \,,
\end{align*}
where we use the specific branches of the square roots
\begin{align*}
    (i\, S)^{\frac12}
    &
    \coloneq
    \frac{
        1
    }{
        \sqrt{2}
    }
    \left(
        S 
        + 
        i
        \,
        \id
    \right)
    &
    (-i\, S)^{\frac12}
    &
    \coloneq
    \frac{
        1
    }{
        \sqrt{2}
    }
    \left(
        S 
        - 
        i
        \,
        \id
    \right)
    \,.
\end{align*}

We then have that
\begin{align*}
    \samp_{\Lambda^N}
    \calF
    \ket{\psi_{\gamma}}
    &
    =
    \sqrt{
        \frac{
            1
        }{
            2^{N/2 + 1}
            Z_0^{N-1}
            Z_1
        }
    }
    \;
    \samp_{\Lambda^N}
    \int_{\R^N}
    \vec w^T
    (
        -i
        S
    )^{\frac12}
    \vec \gamma
    \;
    e^{
        -
        \frac{1}{2\sqrt{2}}
        \, 
        i 
        \, 
        \vec w^T 
        \,
        (
            -
            i
            S
        )^{\frac12}
        \,
        \vec w
    }
    \,
    \ket{\vec w}
    \,
    d \vec w
    \allowdisplaybreaks
    \\
    &
    =
    \sqrt{
        \frac{
            1
        }{
            2^{N/2 + 1}
            Z_0^{N-1}
            Z_1
        }
    }
    \;
    \sum_{\vec v \in [K]^N}
    \alpha(\vec v)^T
    (
        -i
        S
    )^{\frac12}
    \vec \gamma
    \;
    e^{
        -
        \frac{1}{2\sqrt{2}}
        \, 
        i 
        \, 
        \alpha(\vec v)^T 
        \,
        (
            -
            i
            S
        )^{\frac12}
        \,
        \alpha(\vec v)
    }
    \,
    \ket{\vec v}
    \allowdisplaybreaks
    \\
    &
    =
    \sqrt{
        \frac{
            1
        }{
            2^{N/2 + 1}
            Z_0^{N-1}
            Z_1
        }
    }
    \;
    \sum_{\vec v \in [K]^N}
    \alpha(\vec v)^T
    (
        -i
        S
    )^{\frac12}
    \vec \gamma
    \;
    e^{
        -
        \frac{1}{4}
        \, 
        i 
        \, 
        \alpha(\vec v)^T 
        \,
        (
            S
            -
            i
            \id
        )
        \,
        \alpha(\vec v)
    }
    \,
    \ket{\vec v}
    \allowdisplaybreaks
    \\
    &
    =
    \sqrt{
        \frac{
            1
        }{
            2^{N/2 + 1}
            Z_0^{N-1}
            Z_1
        }
    }
    \;
    \sum_{\vec v \in [K]^N}
    \alpha(\vec v)^T
    (
        -i
        S
    )^{\frac12}
    \vec \gamma
    \;
    e^{
        -
        \frac{1}{4}
        \, 
        \norm{\alpha(\vec v)}_2^2
    }
    \;
    e^{
        -
        \frac{1}{4}
        \, 
        i 
        \, 
        \alpha(\vec v)^T 
        \,
        S
        \,
        \alpha(\vec v)
    }
    \,
    \ket{\vec v}
    \allowdisplaybreaks
    \\
    &
    =
    \sqrt{
        \frac{
            1
        }{
            2^{N/2 + 1}
            Z_0^{N-1}
            Z_1
        }
    }
    \;
    \widehat
    Q_S^{-\frac12}
    \sum_{\vec v \in [K]^N}
    \alpha(\vec v)^T
    (
        -i
        S
    )^{\frac12}
    \vec \gamma
    \;
    e^{
        -
        \frac{1}{4}
        \, 
        \norm{\alpha(\vec v)}_2^2
    }
    \,
    \ket{\vec v}
    \allowdisplaybreaks
    \\
    &
    =
    \sqrt{
        \frac{
            \widetilde 
            Z_0^{N-1}
            \widetilde 
            Z_1
        }{
            2^{N/2 + 1}
            Z_0^{N-1}
            Z_1
        }
    }
    \;\;
    \widehat
    Q_S^{-\frac12}
    \;
    \widetilde 
    E
    \;
    (-i\, S)^{\frac12}
    \,
    \ket{\gamma}
    \allowdisplaybreaks
    \\
    &
    =
    \sqrt{
        \frac{
            \widetilde 
            Z_0^{N-1}
            \widetilde 
            Z_1
        }{
            2^{N/2 + 1}
            Z_0^{N-1}
            Z_1
        }
    }
    \;\;
    \widetilde
    \Gamma_S
    \,
    \ket{\gamma}
    \allowdisplaybreaks
    \\
    &
    =
    \beta_{K}
    \widetilde
    \Gamma_S
    \,
    \ket{\gamma}
    \,,
\end{align*}
where
$
    \beta_{K}
    \coloneq
    \sqrt{
        \frac{
            \widetilde 
            Z_0^{N-1}
            \widetilde 
            Z_1
        }{
            2^{N/2 + 1}
            Z_0^{N-1}
            Z_1
        }
    }
    \,.
$
We therefore have that
\begin{align*}
    \widehat 
    \Gamma_S
    \ket{\gamma}
    -
    \beta_{K}
    \widetilde
    \Gamma_S
    \,
    \ket{\gamma}
    =
    \widehat
    \calF
    \,
    \samp_{\Lambda^N}
    \ket{\psi_{\gamma}}
    -
    \samp_{\Lambda^N}
    \calF
    \ket{\psi_{\gamma}}
    \numberthis
    \label{eq:half-alg-difference-rescaled}
    \,,
\end{align*}
and
\begin{align*}
    \left\lVert
        \widehat 
        \Gamma_S
        -
        \widetilde
        \Gamma_S
    \right\rVert
    &
    \le
    \frac{2}{1+\beta_K}
    \left\lVert
        \widehat 
        \Gamma_S
        -
        \beta_{K}
        \widetilde
        \Gamma_S
    \right\rVert
    \tag{\Cref{lem:isometries-distance-rescaled}}
    \allowdisplaybreaks
    \\
    &
    =
    \frac{2}{1+\beta_K}
    \,
    \sup_{\norm{\vec \gamma}_2 = 1}
    \left\lVert
        \widehat 
        \Gamma_S
        \ket{\psi_{\gamma}}
        -
        \beta_{K}
        \widetilde
        \Gamma_S
        \ket{\psi_{\gamma}}
    \right\rVert
    \allowdisplaybreaks
    \\
    &
    =
    \frac{2}{1+\beta_K}
    \sup_{\norm{\vec \gamma}_2 = 1}
    \left\lVert
        \widehat
        \calF
        \,
        \samp_{\Lambda^N}
        \ket{\psi_{\gamma}}
        -
        \samp_{\Lambda^N}
        \calF
        \ket{\psi_{\gamma}}
    \right\rVert
    \tag{\Cref{eq:half-alg-difference-rescaled}}
    \allowdisplaybreaks
    \\
    &
    \le
    \frac{2}{1+\beta_K}
    \sup_{\norm{\vec \gamma}_2 = 1}
    \sum_{
        \vec m 
        \in 
        \Z^N
        \setminus
        \{\vec 0\}
    }
    \left\lVert
        \samp_{\Lambda^N}^{(\vec m)}
        \calF
        \ket{\psi_\gamma}
    \right\rVert
    +
    \frac{2}{1+\beta_K}
    \sup_{\norm{\vec \gamma}_2 = 1}
    \sum_{
        \vec m 
        \in 
        \Z^N
        \setminus
        \{\vec 0\}
    }
    \left\lVert
        \samp_{\Lambda^N}^{(\vec m)}
        \ket{\psi_\gamma}
    \right\rVert
    \tag{\Cref{lem:sampling-fourier-bound2}}
    \,.
\end{align*}

It now remains to bound the mass given by 
$
    \ket{\psi_\gamma}
$
and
$
    \calF
    \ket{\psi_\gamma}
$
to the lattice points outside the grid.
We have that
\begin{align*}
    \left\lVert
        \samp_{\Lambda^N}^{(\vec m)}
        \ket{\psi_\gamma}
    \right\rVert
    &
    =
    \sqrt{
        \frac{
            1
        }{
            Z_0^{N-1}
            Z_1
        }
    }
    \;
    \left\lVert
        \samp_{\Lambda^N}^{(\vec m)}
        \int_{\R^N}
        \vec u^T
        \vec \gamma
        \;
        e^{
            -
            \frac12 
            \norm{\vec u}_2^2
        }
        \;
        e^{
            \frac12 
            \, 
            i 
            \, 
            \vec u^T 
            \,
            S 
            \,
            \vec u
        }
        \,
        \ket{\vec u}
        \,
        d \vec u
    \right\rVert
    \allowdisplaybreaks
    \\
    &
    =
    \sqrt{
        \frac{
            1
        }{
            Z_0^{N-1}
            Z_1
        }
    }
    \;
    \left\lVert
        \samp_{\Lambda^N}^{(\vec m)}
        \int_{\R^N}
        \vec u^T
        \vec \gamma
        \;
        e^{
            -
            \frac12 
            \norm{\vec u}_2^2
        }
        \,
        \ket{\vec u}
        \,
        d \vec u
    \right\rVert
    \allowdisplaybreaks
    \\
    &
    =
    \sqrt{
        \frac{
            1
        }{
            Z_0^{N-1}
            Z_1
        }
    }
    \;
    \left\lVert
        \sum_{x \in [N]}
        \gamma_x
        \;
        \samp_{\Lambda^N}^{(\vec m)}
        \int_{\R^N}
        u_x
        \;
        e^{
            -
            \frac12 
            \norm{\vec u}_2^2
        }
        \,
        \ket{\vec u}
        \,
        d \vec u
    \right\rVert
    \allowdisplaybreaks
    \\
    &
    \le
    \sqrt{
        \frac{
            1
        }{
            Z_0^{N-1}
            Z_1
        }
    }
    \;
    \sum_{x \in [N]}
    \abs{\gamma_x}
    \;
    \left\lVert
        \samp_{\Lambda^N}^{(\vec m)}
        \int_{\R^N}
        u_x
        \;
        e^{
            -
            \frac12 
            \norm{\vec u}_2^2
        }
        \,
        \ket{\vec u}
        \,
        d \vec u
    \right\rVert
    \allowdisplaybreaks
    \\
    &
    =
    \sqrt{
        \frac{
            1
        }{
            Z_0^{N-1}
            Z_1
        }
    }
    \;
    \sum_{x \in [N]}
    \abs{\gamma_x}
    \;
    \left\lVert
        \bigotimes_{z \in [N]}
        \left(
            \samp_{\Lambda}^{(m_z)}
            \int_{\R}
            u^{\delta_{xz}}
            \;
            e^{
                -
                \frac12 
                u^2
            }
            \,
            \ket{u}
            \,
            d u
        \right)
    \right\rVert
    \allowdisplaybreaks
    \\
    &
    =
    \sum_{x \in [N]}
    \abs{\gamma_x}
    \;
    \frac{
        1
    }{
        \sqrt{
            Z_1
        }
    }
    \left\lVert
        \samp_{\Lambda}^{(m_x)}
        \int_{\R}
        u
        \;
        e^{
            -
            \frac12 
            u^2
        }
        \,
        \ket{u}
        \,
        d u
    \right\rVert
    \\
    &
    \qquad
    \qquad
    \cdot
    \prod_{
        z \in [N] 
        \setminus
        \{x\}
    }
    \frac{
        1
    }{
        \sqrt{Z_0}
    }
    \left\lVert
        \samp_{\Lambda}^{(m_z)}
        \int_{\R}
        e^{
            -
            \frac12 
            u^2
        }
        \,
        \ket{u}
        \,
        d u
    \right\rVert
    \,,
\end{align*}
where
$
    \samp_{\Lambda}^{(m)}
$
is the one-dimensional sampling shifted by $m$ times the grid size.

Much the same calculation shows that 
\begin{align*}
    \left\lVert
        \samp_{\Lambda^N}^{(\vec m)}
        \calF
        \ket{\psi_\gamma}
    \right\rVert
    &
    \le
    \sum_{x \in [N]}
    \abs{\bra{x}(-iS)^{\frac12}\ket{\gamma}}
    \;
    \frac{
        1
    }{
        \sqrt{
            2
            Z_1
            \sqrt{2}
        }
    }
    \left\lVert
        \samp_{\Lambda}^{(m_x)}
        \int_{\R}
        u
        \;
        e^{
            -
            \frac14 
            u^2
        }
        \,
        \ket{u}
        \,
        d u
    \right\rVert
    \\
    &
    \qquad
    \qquad
    \cdot
    \prod_{
        z \in [N] 
        \setminus
        \{x\}
    }
    \frac{
        1
    }{
        \sqrt{
            Z_0
            \sqrt{2}
        }
    }
    \left\lVert
        \samp_{\Lambda}^{(m_z)}
        \int_{\R}
        e^{
            -
            \frac14 
            u^2
        }
        \,
        \ket{u}
        \,
        d u
    \right\rVert
    \,.
\end{align*}

For $b \in \{0,1\}$ and $c \in \{1,2\}$, let $Z_b^{(1)} = Z_b$ and $Z_b^{(2)} = \widetilde Z_b$. Then for sufficiently large $K$,%
\footnote{
    We need that for any $K$ such that $2\sqrt{K} \ll K/2$, for instance for $K \ge 128$. This is arbitrary and can be adjusted by changing the upper and lower bounds for the index in the sum below. 
}
we lower bound the normalizing prefactors as
\begin{align*}
    Z_b^{(c)}
    &
    =
    \sum_{a \in [K]}
    \alpha(a)^{2b}
    \;
    e^{
        -
        \alpha(a)^2
        /c
    }
    \allowdisplaybreaks
    \\
    &
    =
    \sum_{a \in [K]}
    \left(
        \frac{2\pi}{K} 
        \left(
            a
            -
            \frac{K - 1}{2}
        \right)^2
    \right)^b
    e^{
        -
        \frac{2\pi}{cK} 
        \left(
            a
            -
            \frac{K - 1}{2}
        \right)^2
    }
    \allowdisplaybreaks
    \\
    &
    \ge
    \sum_{
        \frac{K-1}{2} + \sqrt{K} 
        \le 
        a 
        \le 
        \frac{K-1}{2} + 2 \sqrt{K} 
    }
    \left(
        \frac{2\pi}{K} 
        \left(
            a
            -
            \frac{K - 1}{2}
        \right)^2
    \right)^b
    e^{
        -
        \frac{2\pi}{cK} 
        \left(
            a
            -
            \frac{K - 1}{2}
        \right)^2
    }
    \allowdisplaybreaks
    \\
    &
    \ge
    \sqrt{K}
    \,
    \Omega(1)
    \,.
\end{align*}

Since all the points on the shifted grid are within the range of
$
    [ 
        \, 
        2m - 1, 
        \, 
        2m + 1
        \, 
    ]
    \cdot
    L
    \,,
$
we have that for each $m \ne 0$,
\begin{align*}
    \frac{
        1
    }{
        \sqrt{Z_b^{(c)}}
    }
    \left\lVert
        \samp_{\Lambda}^{(m)}
        \int_{\R}
        u^b
        \;
        e^{
            -
            \frac{1}{2c} 
            u^2
        }
        \,
        \ket{u}
        \,
        d u
    \right\rVert
    &
    \le
    O(K^{-\frac14})
    \cdot
    \sqrt{
        \sum_{u \in \Lambda + 2mL}
        u^{2b}
        \;
        e^{
            -
            u^2
            /c
        }
    }
    \allowdisplaybreaks
    \\
    &
    \le
    O(K^{\frac14})
    \cdot
    \sqrt{
        \max_{u \in \Lambda + 2mL}
        u^{2b}
        \;
        e^{
            -
            u^2
            /c
        }
    }
    \allowdisplaybreaks
    \\
    &
    \le
    O(K^{\frac14})
    \,
    \sqrt{
        L^{2b}
        \,
        (
            2
            \abs{m}
            +
            1
        )^{2b}
        \;
        e^{
            -
            L^2
            (
                2
                \abs{m}
                -
                1
            )^2
            /c
        }
    }
    \allowdisplaybreaks
    \\
    &
    =
    O(K^{\frac14 + \frac{b}{2}})
    \,
    (
        2
        \abs{m}
        +
        1
    )^b
    \;
    e^{
        -
        \frac{\pi K}{4c}
        (
            2
            \abs{m}
            -
            1
        )^2
    }
    \,,
\end{align*}

Summing over the nonzero one-dimensional shifts gives
\begin{align*}
    \sum_{m\in\Z\setminus\{0\}}
    \frac{1}{\sqrt{Z_b^{(c)}}}
    \left\lVert
        \samp_{\Lambda}^{(m)}
        \int_{\R}
        u^b e^{-u^2/(2c)}
        \ket{u}\,du
    \right\rVert
    &
    \le
    O(K^{\frac14 + \frac{b}{2}})
    \sum_{r=1}^{\infty}
    (2r+1)^b
    e^{-\frac{\pi K}{4c}(2r-1)^2}
    \allowdisplaybreaks
    \\
    &
    \le
    O(K^{\frac14 + \frac{b}{2}})
    e^{-\frac{\pi K}{4c}}
    \sum_{r=1}^{\infty}
    (2r+1)^b
    e^{-\frac{2\pi K}{c}(r-1)}
    \allowdisplaybreaks
    \\
    &
    \le
    O\left(
        K^{\frac14 + \frac{b}{2}}
        e^{-\pi K/(4c)}
    \right)
    \allowdisplaybreaks
    \\
    &
    \le
    O\left(
        K^{\frac34}
        e^{-\pi K/8}
    \right).
\end{align*}
Here the second inequality uses
$
    (2r-1)^2 
    \ge 
    1+8(r-1)
$
for every integer $r\ge1$.
The remaining series is bounded by a universal constant,
uniformly over $b\in\{0,1\}$, $c\in\{1,2\}$, and $K\ge128$.

For the central shift, the definitions of the normalization
constants give the exact identity
\begin{align*}
    \frac{1}{\sqrt{Z_b^{(c)}}}
    \left\lVert
        \samp_{\Lambda}^{(0)}
        \int_{\R}
        u^b e^{-u^2/(2c)}
        \ket{u}\,du
    \right\rVert
    =1.
\end{align*}

We now sum the products of these one-dimensional factors.
For either $c\in\{1,2\}$ and any
$\vec\gamma\in\C^N$ with $\norm{\vec\gamma}_2=1$,
we have
\begin{align*}
    &
    \sum_{\vec m\in\Z^N\setminus\{\vec0\}}
    \sum_{x\in[N]}|\gamma_x|
    \prod_{z\in[N]}
    \frac{1}{\sqrt{Z_{\delta_{xz}}^{(c)}}}
    \left\lVert
        \samp_{\Lambda}^{(m_z)}
        \int_{\R}
        u^{\delta_{xz}}e^{-u^2/(2c)}
        \ket{u}\,du
    \right\rVert
    \\
    &=
    \left(\sum_{x\in[N]}|\gamma_x|\right)
    \Bigg[
        \left(
            \sum_{m\in\Z}
            \frac{1}{\sqrt{Z_1^{(c)}}}
            \left\lVert
                \samp_{\Lambda}^{(m)}
                \int_{\R}
                u e^{-u^2/(2c)}
                \ket{u}\,du
            \right\rVert
        \right)
        \\
    &\hspace{5em}
        {}\cdot
        \left(
            \sum_{m\in\Z}
            \frac{1}{\sqrt{Z_0^{(c)}}}
            \left\lVert
                \samp_{\Lambda}^{(m)}
                \int_{\R}
                e^{-u^2/(2c)}
                \ket{u}\,du
            \right\rVert
        \right)^{N-1}
        -1
    \Bigg]
    \\
    &\le
    \sqrt N
    \left[
        \left(
            1+
            O\left(K^{3/4}e^{-\pi K/8}\right)
        \right)^N
        -1
    \right]
    \\
    &\le
    \sqrt N
    \left[
        \exp\left(
            O\left(NK^{3/4}e^{-\pi K/8}\right)
        \right)
        -1
    \right].
\end{align*}
The equality follows by factoring the sum over all
$\vec m\in\Z^N$ into one-dimensional sums and then
subtracting the $\vec m=\vec0$ term, whose product is
exactly $1$.
The next inequality uses
$
    \sum_x|\gamma_x|
    \le \sqrt N\,\norm{\vec\gamma}_2
    =\sqrt N
$.

If $K$ is at least a sufficiently large constant times
$\log(N+1)$, then
$
    NK^{3/4}e^{-\pi K/8}
$
is bounded by a universal constant.
Using $e^t-1=O(t)$ for nonnegative bounded $t$, the
preceding expression is therefore at most
\begin{align*}
    O\left(
        N^{3/2}K^{3/4}e^{-\pi K/8}
    \right).
\end{align*}

Applying this bound with $c=1$ to the factorization
of the input-side tails gives
\begin{align*}
    \sup_{\norm{\vec\gamma}_2=1}
    \sum_{\vec m\in\Z^N\setminus\{\vec0\}}
    \left\lVert
        \samp_{\Lambda^N}^{(\vec m)}
        \ket{\psi_\gamma}
    \right\rVert
    \le
    O\left(
        N^{3/2}K^{3/4}e^{-\pi K/8}
    \right).
\end{align*}

For the Fourier-side tails, we apply the same bound
with $c=2$ and coefficient vector
$(-iS)^{1/2}\vec\gamma$, which has norm $1$.
Changing the one-dimensional denominators in the
Fourier-side factorization to
$\sqrt{\widetilde Z_1}$ and
$\sqrt{\widetilde Z_0}$ introduces the overall factor
\begin{align*}
    \sqrt{
        \frac{
            \widetilde Z_0^{N-1}\widetilde Z_1
        }{
            2^{N/2+1}Z_0^{N-1}Z_1
        }
    }
    =\beta_K.
\end{align*}
Consequently,
\begin{align*}
    \sup_{\norm{\vec\gamma}_2=1}
    \sum_{\vec m\in\Z^N\setminus\{\vec0\}}
    \left\lVert
        \samp_{\Lambda^N}^{(\vec m)}
        \calF\ket{\psi_\gamma}
    \right\rVert
    \le
    \beta_K\,
    O\left(
        N^{3/2}K^{3/4}e^{-\pi K/8}
    \right).
\end{align*}

Substituting these two bounds into the preceding
estimate for the half-algorithm distance gives
\begin{align*}
    \left\lVert
        \widehat\Gamma_S-\widetilde\Gamma_S
    \right\rVert
    &\le
    \frac{2}{1+\beta_K}
    (1+\beta_K)\,
    O\left(
        N^{3/2}K^{3/4}e^{-\pi K/8}
    \right)
    \\
    &=
    O\left(
        N^{3/2}K^{3/4}e^{-\pi K/8}
    \right).
\end{align*}
This proves \Cref{lem:real-vs-ideal-half-alg} (and consequently \Cref{thm:disc-three-q-synthesis}).
\qed

\section{Shallow Implementation}
\label{sec:shallow}

In this section, we show that all unitaries can be parallelized to shallow depth.
Specifically, we show that the discretized algorithm of \Cref{sec:discrete}, including both the query algorithm, and the diagonal queries themselves, can implemented using a circuit of one- and two-qubit gates of logarithmic depth, and in constant depth with unbounded fan-out.

\begin{theorem}[All Unitaries Have Shallow Circuits]
    \label{thm:shallow}
    For every unitary $U \in U(N)$ and every tolerance $\epsilon > 0$, 
    \begin{enumerate}
        \item 
        There is a quantum circuit $\calC_U$ of one- and two- qubit gates of depth $O(\log(N) + \log\log(N/\epsilon))$ and width $\widetilde O(N^2 \log(N/\epsilon)^4)$ such that
        \begin{align}
            \left\lVert
                U 
                \otimes
                \ket{0}
                -
                \calC_U
                \,
                (\id \otimes \ket{0})
            \right\rVert
            \le
            \epsilon
            \,.
        \end{align}

        \item 
        There is a quantum circuit $\calC'_U$ of one- and two- qubit gates and $\widetilde O(N)$-sized fan-out gates of depth $O(1)$ and width $\widetilde O(N^2 \log(N/\epsilon)^4)$ such that
        \begin{align}
            \left\lVert
                U 
                \otimes
                \ketbra{0}{0}
                -
                (\id \otimes \ketbra{0}{0})
                \,
                \calC'_U
                \,
                (\id \otimes \ketbra{0}{0})
            \right\rVert
            \le
            \epsilon
            \,.
        \end{align}
    \end{enumerate}
\end{theorem}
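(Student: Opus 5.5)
We plan to reduce everything to \Cref{thm:disc-three-q-synthesis}. By \Cref{lem:eqiv-synthesis}, it suffices to implement a traceless real symmetric involution $S$ of dimension $\Theta(N)$; the reduction costs one ancilla and $O(1)$ extra depth. We then take $K = 2^k = \Theta(\log(N/\epsilon))$, large enough that the bound in \Cref{thm:disc-three-q-synthesis} is at most $\epsilon^2$. The main-term identity $S \approx i\,\widehat E^\dagger \widehat Q_S \widehat\calF \widehat Q_S \widehat\calF \widehat Q_S \widehat E$ then leaves only the task of compiling each of the seven factors into a shallow circuit on $N$ registers of $k$ qubits each, plus workspace.

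The components, in order, are as follows.

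\textbf{(i) The encoding $\widehat E$.} We prepare a one-hot register $\sum_x \gamma_x \ket{e_x}$ from $\ket{\gamma}$ on $\log N$ qubits. This is a binary-to-unary conversion: fan out each address bit to $N$ copies, then compute each indicator $[x = j]$ as an AND of $\log N$ bits. That takes depth $O(\log\log N)$ with a tree of Toffolis, or $O(1)$ with fan-out plus standard constant-depth AND constructions using ancillas. Uncomputing the binary register is the delicate part, since it must be done coherently; we would do it by computing the binary address from the one-hot register via parities (XORs of fanned-out bits). Next, on register $j$ we apply, controlled on the one-hot bit $e_j$, the fixed $k$-qubit state preparation of $\ket{\hat h_1}$ versus $\ket{\hat h_0}$. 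These are fixed states on $k = O(\log\log(N/\epsilon))$ qubits, so each has an exact circuit of size $2^{O(k)} = \poly\log(N/\epsilon)$. We would parallelize it using the same one-hot trick recursively, or simply accept depth $\poly(k)$. Finally we uncompute the one-hot bit, which is recoverable from whether register $j$ is in $\ket{\hat h_1}$ versus $\ket{\hat h_0}$ via the inverse preparation.

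\textbf{(ii) The Fourier transform $\widehat\calF = \calD\, \qft^\dagger_{(\Z_K)^N}\,\calD$.} This is a tensor product of $N$ independent $k$-qubit QFTs and diagonal phases. Each acts on only $O(\log\log)$ qubits, so standard circuits give depth $\poly(k)$. For constant depth with fan-out, we use the known constant-depth QFT via fan-out and parity, adjusted to be exact.

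\textbf{(iii) The oracle $\widehat Q_S$.} The phase $\tfrac12 \alpha(\vec v)^T S \alpha(\vec v) = \tfrac12 \sum_{i,j} S_{ij}\alpha(v_i)\alpha(v_j)$ splits into commuting two-register diagonal phases $e^{\frac i2 S_{ij}\alpha(v_i)\alpha(v_j)}$. Each of these is a diagonal on $2k$ qubits, hence a product of $O(K^2)$ multi-controlled phases, or of Walsh-parity phases $e^{i\theta_T Z_T}$. To parallelize the $O(N^2)$ terms, we fan out each register $v_i$ into $N$ copies, with width $N^2 k$. We then apply every pair term on disjoint copies simultaneously. Each pair diagonal, written as a product of $2^{2k}$ parity phases, is itself made constant depth by fanning out bits, computing all parities in parallel (parity is constant depth with fan-out, or depth $O(\log k)$ with CNOT trees), applying $Z$-rotations, and uncomputing. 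The total width is $N^2 \cdot 2^{2k} \cdot \poly(k) = \widetilde O(N^2\log(N/\epsilon)^4)$, matching the statement. At the end we unfan the copies.

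\textbf{Assembly and error analysis.} We compose these steps. Without fan-out, every fan-out of size $M$ becomes a CNOT tree of depth $O(\log M) = O(\log N + \log\log(N/\epsilon))$, which gives part 1; with fan-out gates we get part 2. The error is controlled as follows. The leakage and isometry error of the overall projected operator is at most $\epsilon^2 \le \epsilon$ by \Cref{thm:disc-three-q-synthesis}. For part 1's isometry form, which needs the ancillas to return to $\ket 0$, a sandwiched operator $A = (\id\otimes\bra 0)\calC(\id\otimes\ket 0)$ with $\norm{A - U}\le \epsilon^2$ forces $\norm{\calC(\id\otimes\ket0) - U\otimes\ket0}^2 \le 2\norm{A-U} = O(\epsilon^2)$, and so yields error $O(\epsilon)$. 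This is exactly why $\epsilon^2$ was targeted.

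We expect the main obstacle to be step (i): coherently erasing the binary input label after producing the one-hot encoding, at depth $O(\log N)$ or $O(1)$ with fan-out. A secondary issue is making the $k$-qubit state preparations and QFTs truly constant depth, not merely $\poly(k)$. For these we would use the exact constant-depth parity and fan-out based constructions, reducing each fixed small diagonal to parities.
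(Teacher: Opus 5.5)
Your overall plan matches the paper's. You reduce to a traceless real symmetric involution $S$, compile $\widehat E$, $\widehat Q_S$ and $\widehat\calF$ into constant-depth circuits with fan-out, replace fan-outs by CNOT trees for part 1, and invoke \Cref{thm:disc-three-q-synthesis} with target error $\epsilon^2$. Your conversion from the projected error to the isometric error in part 1 is correct, and more explicit than in the paper. The binary-to-one-hot conversion with parity-based erasure of the label, and the pairwise-phase oracle via fan-out copies, are also fine.

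The gap is the controlled preparation of $\ket{\hat h_0}$ versus $\ket{\hat h_1}$ on $k$ qubits inside the encoding. You never give a constant-depth circuit for it. ``Accept depth $\poly(k)$'' breaks the $O(1)$ depth of part 2. Unless the degree is one, it even breaks the $O(\log N+\log\log(N/\epsilon))$ bound of part 1, e.g.\ for fixed $N$ as $\epsilon\to0$. The ``one-hot trick'' does not rescue this. Parallelizing a level-by-level (uniformly controlled rotation) preparation with one-hot encodings of prefixes still costs one sequential layer per qubit, so depth $O(k)$ rather than $O(1)$. Recursing on the whole construction makes the depth grow with the number of recursion levels. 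Your closing suggestion of reducing fixed small diagonals to parities does not apply, since state preparation is not diagonal. Your ``inverse preparation'' erasure of the one-hot bit needs the same missing primitive.

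What you need is an actual constant-depth state-preparation lemma. The paper uses \Cref{lem:state-prep} (Rosenthal): any $r$-qubit state can be prepared exactly in constant depth with fan-out and $\widetilde O(2^r)$ ancillas. With $r=k$ this costs only $\widetilde O(\log(N/\epsilon))$ width per register, and it is applied controlled on the one-hot bit. The paper then erases the control cheaply using the reflection symmetry $\alpha(K-1-a)=-\alpha(a)$, which gives $X^{\otimes k}\ket{\hat h_b}=(-1)^b\ket{\hat h_b}$. A Hadamard on the control, a size-$(k+1)$ fan-out onto the register, and another Hadamard kick the bit back out. Your inverse-preparation-plus-OR erasure would also work once the primitive is in hand.

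Your QFT step (``adjusted to be exact'') is also unspecified; the paper makes it exact via Takahashi--Tani-style parallel decoding over all lower-bit candidates. This is not the crux, because the error budget is additive and the QFTs act on only $k$ qubits, so a sufficiently accurate approximate constant-depth QFT would also suffice.
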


\begin{proof}
    We focus on showing that every unitary can be implemented in constant depth using $O(N \log(N/\epsilon)^2)$-sized fan-out gates. To get the logarithmic depth circuit of only one- and two-qubit gates, we use the fact that we can implement the $O(N \log(N/\epsilon)^2)$-sized fan-out gate using only CNOT gates in a binary tree of depth $O(\log(N) + \log\log(N/\epsilon))$.

    To prove \Cref{thm:shallow}, we simply follow the algorithm of \Cref{sec:discrete} and show that each of its components can be implemented in the necessary depth. 
    We have four components to consider:
    \begin{enumerate}
        \item
        \label{item:shallow-reduction-to-traceless-symmetric}
        The reduction of \Cref{lem:eqiv-synthesis} from implementing an arbitrary $U$ to implementing a corresponding traceless real symmetric involution $S$.
        \item
        \label{item:shallow-encoding}
        The encoding isometry $\widehat E$ (as well as, equivalently, the decoding map $E^\dagger$)
        \item
        \label{item:shallow-quadratic-phase}
        The diagonal phase query $\widehat Q_S$
        \item
        \label{item:shallow-fourier-transform}
        The discrete Fourier transform $\widehat \calF$
    \end{enumerate}
    \Cref{item:shallow-reduction-to-traceless-symmetric} is the simplest. Both steps in the two-step reduction from general unitaries to the traceless real symmetric unitaries involve preparing a basic single-qubit state (either $\ket{-_Y}$ or $\ket{1}$) before the start of the algorithm, and then uncomputing it at the end (in the case of the $\ket{1}$ it conveniently gets automatically uncomputed). Each of these is just a single one-qubit gate. 
    Moreover, the dimension of the unitary grows by a factor of $4$ through this reduction, which means that in the other parts of the algorithm, we just need put an extra factor of $4$ in front of each $N$ in the width calculations.

    For \Cref{item:shallow-encoding,item:shallow-quadratic-phase,item:shallow-fourier-transform}, we must do a bit more work. We show how to implement the encoding isometry in \Cref{sec:shallow-encoding}, the diagonal phase query in \Cref{sec:shallow-quadratic-phase}, and the discrete Fourier transform in \Cref{sec:shallow-fourier-transform}. These are all exact implementations.
    Combining these circuits according to the algorithm of \Cref{sec:discrete} and the bound of \Cref{thm:disc-three-q-synthesis} gives the desired result.
\end{proof}

We will recall the following definitions and lemmas:

\begin{definition}[Fan-out and parity gates]
    For $r \ge 1$, the fan-out gate on $r+1$ qubits
    maps
    \begin{align*}
        \ket{b}\ket{x_1,\dots,x_r}
        \mapsto
        \ket{b}\ket{x_1 \oplus b,\dots,x_r \oplus b}
        \,,
    \end{align*}
    and the parity gate on $r+1$ qubits maps
    \begin{align*}
        \ket{b}\ket{x_1,\dots,x_r}
        \mapsto
        \ket{b \oplus x_1 \oplus \dots \oplus x_r}
        \ket{x_1,\dots,x_r}
        \,,
    \end{align*}
    where $b,x_1,\dots,x_r \in \bits$, and their
    action on superpositions is defined by linearity.
\end{definition}

\begin{lemma}[Equivalence of Parity and Fan-out~\cite{Moore1999,HoyerSpalek2005}]
    \label{lem:parity}
    The Parity gate on $r+1$ qubits is the $r+1$-qubit fan-out gate conjugated by Hadamards on each qubit.
\end{lemma}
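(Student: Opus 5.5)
The statement to prove is \Cref{lem:parity}: the $(r+1)$-qubit parity gate equals $H^{\otimes(r+1)}\,\mathrm{FANOUT}\,H^{\otimes(r+1)}$. Here the control of the fan-out sits on the same qubit that serves as the target of the parity gate. Both sides are unitaries on $(\C^2)^{\otimes(r+1)}$. So I will simply compare their action on a basis, and the convenient basis is the Hadamard basis. Since $H$ is self-inverse, the claim is equivalent to
\[
H^{\otimes(r+1)}\,\mathrm{PARITY}\,H^{\otimes(r+1)} = \mathrm{FANOUT}.
\]

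The key observation is that the parity gate is a product of CNOTs. For each $j \in \{1,\dots,r\}$ there is one CNOT with control $x_j$ and target $b$. These gates commute with one another, because they share only a target, and $X$ gates on the same qubit commute. Their composite flips $b$ exactly $x_1\oplus\cdots\oplus x_r$ times, which reproduces the parity map on basis states.

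Similarly, the fan-out gate is a product of commuting CNOTs. For each $j$ there is one CNOT with control $b$ and target $x_j$; these share only a control and so commute.

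The only identity needed is the standard one: conjugating a CNOT by Hadamards on both of its qubits swaps its control and target. I would verify this either by checking it on the four basis states, or by noting that $\mathrm{CNOT}=\tfrac12(\id\otimes\id+Z\otimes\id+\id\otimes X-Z\otimes X)$ and that conjugation by $H$ exchanges $X$ and $Z$.

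Now $H^{\otimes(r+1)}$ acts on every qubit. In the product of CNOTs, I can insert $H^{\otimes(r+1)}H^{\otimes(r+1)}=\id$ between consecutive factors. Hadamards on qubits that a given CNOT does not touch cancel against each other, leaving each CNOT conjugated only on its own two qubits. Each factor then becomes a CNOT with control $b$ and target $x_j$, so the product is exactly $\mathrm{FANOUT}$, as required.

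As an alternative check, I would verify directly on Hadamard-basis states. Writing $\ket{\hat y}=H\ket{y}$, parity maps $\ket{\hat c}\ket{\hat y_1\dots\hat y_r}$ to $\ket{\hat c}\ket{\widehat{y_1\oplus c}\dots\widehat{y_r\oplus c}}$. This is a short character-sum computation: the phase $(-1)^{c\cdot(b\oplus\bigoplus_j x_j)}$ factors as $(-1)^{cb}\prod_j(-1)^{cx_j}$. No step is a real obstacle. The only care needed is bookkeeping: confirming which qubit is the control of the fan-out, and confirming that the commuting CNOTs allow the Hadamards to be distributed factor by factor.
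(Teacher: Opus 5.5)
Your proof is correct: you write parity as a product of commuting CNOTs targeting $b$ and fan-out as a product of commuting CNOTs controlled by $b$, then use that Hadamard conjugation on both qubits reverses a CNOT, which gives exactly $H^{\otimes(r+1)}\,\mathrm{PARITY}\,H^{\otimes(r+1)}=\mathrm{FANOUT}$. The paper gives no proof of its own and only cites Moore and H{\o}yer--\v{S}palek, whose argument is this same control--target reversal, so your route is essentially the standard one.
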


\begin{lemma}[Exact AND and OR gates~\cite{takahashi2016collapse}]
    \label{lem:and-or}
    The AND and OR gates on $r$ qubits can be computed exactly by a constant-depth circuit with fan-out and $O(r \log r)$ ancillas.
\end{lemma}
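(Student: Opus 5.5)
The statement is the exact AND/OR lemma of~\cite{takahashi2016collapse}. By De~Morgan (NOT gates on the inputs and on the output), it suffices to handle OR. It further suffices to implement the controlled phase
\begin{align*}
\ket{x}\ket{t}\mapsto(-1)^{t\cdot[x=0^r]}\ket{x}\ket{t}
\end{align*}
exactly in constant depth. Conjugating $t$ by Hadamards and starting it in a fixed basis state writes $[x=0^r]$, and hence $\mathrm{OR}(x)$, into a fresh qubit. The plan has two stages. First, reduce exactly from $r$ bits to $m=\lceil\log_2 r\rceil+1$ qubits whose state is exactly $\ket{0^m}$ if $x=0^r$ and exactly orthogonal to $\ket{0^m}$ otherwise. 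Second, implement the reflection about $\ket{0^m}$ on those $m$ qubits, controlled by $t$, by brute force over all $2^m=O(r)$ parities. Finally, uncompute the first stage.

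\textbf{Stage 1 (exact weight reduction).} For each $j\in\{1,\dots,m\}$, prepare an ancilla $a_j$ in $\ket{+}$ and set $\theta_j=\pi/2^{j-1}$. We want to accumulate the phase $e^{i\theta_j|x|}$ on the $\ket{1}$ branch of $a_j$. To do this, fan $a_j$ out into $r$ copies, apply the two-qubit controlled-phase $\mathrm{diag}(1,1,1,e^{i\theta_j})$ between copy $i$ and $x_i$ for all $i$ in parallel, and then un-fan-out. Doing this for all $j$ in parallel uses $O(r\log r)$ ancillas in constant depth. A final Hadamard on each $a_j$ leaves it in the state whose $\ket{0}$ amplitude is $(1+e^{i\pi|x|/2^{j-1}})/2$. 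This amplitude equals $1$ for all $j$ when $|x|=0$. If $|x|\ge1$, take $j-1=v_2(|x|)\le\log_2 r$; this $j$ lies in range, $|x|/2^{j-1}$ is an odd integer, and so the amplitude is exactly $0$. Hence the joint state $\ket{\varphi_x}$ of $a_1,\dots,a_m$ is $\ket{0^m}$ when $x=0^r$ and is exactly orthogonal to $\ket{0^m}$ otherwise. The step I expect to need the most care is exactly this exactness: the phases must be chosen so that some factor vanishes exactly, not merely approximately. The $2$-adic valuation choice of $\theta_j$ is what makes this work.

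\textbf{Stage 2 (exact zero-test on $m$ qubits via Fourier expansion).} We use
\begin{align*}
[y=0^m]=2^{-m}\sum_{T\subseteq[m]}(-1)^{\bigoplus_{k\in T}y_k}.
\end{align*}
This gives
\begin{align*}
(-1)^{t[y=0]}=\prod_{T}\exp\!\Big(i\pi\, t\, 2^{-m}(-1)^{\oplus_T y}\Big).
\end{align*}
Fan out each $y_k$ into $2^m$ copies. Then, for every subset $T$ in parallel, compute the parity $\bigoplus_{k\in T}y_k$ into a fresh ancilla; by \Cref{lem:parity}, this is a single fan-out conjugated by Hadamards. Fan out $t$ into $2^m$ copies as well, and apply to each (parity ancilla, copy of $t$) pair the fixed diagonal two-qubit gate implementing $\exp(i\pi t 2^{-m}(-1)^{p})$. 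Then uncompute the parities and all fan-outs. Everything is constant depth with $O(2^m m)=O(r\log r)$ ancillas, and the phases multiply to exactly $(-1)^{t[y=0]}$.

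\textbf{Composition.} Run Stage 1, apply Stage 2 to $(a_1,\dots,a_m)$, and run Stage 1 in reverse. Because $\ket{\varphi_x}$ is exactly $\ket{0^m}$ or exactly orthogonal to it, Stage 2 acts on it as the scalar $(-1)^{t[x=0]}$. The inverse of Stage 1 therefore returns all ancillas to $\ket{0}$ and leaves only this phase. Each stage has constant depth, the ancilla count is $O(r\log r)$, and there is no error anywhere. This yields the claimed exact AND and OR gates.
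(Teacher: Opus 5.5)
Your proposal is correct and follows essentially the construction of the cited source (the paper states this lemma without proof and cites Takahashi--Tani). That construction is the H{\o}yer--\v{S}palek exact reduction of OR to a register of $O(\log r)$ qubits that is either exactly $\ket{0^m}$ or exactly orthogonal to it, followed by an exact zero-test on that register via the parity (Fourier) expansion of OR implemented with parity gates. One small detail to add: in Stage~1 each $x_i$ interacts with a copy of every $a_j$, so to run all $j$ in one layer you must also fan out each $x_i$ into $m$ computational-basis copies (as you already do for $y_k$ and $t$ in Stage~2), which costs another $O(r\log r)$ ancillas and leaves the bound unchanged.
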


\begin{lemma}[State Preparation~\cite{Rosenthal_2026}]
    \label{lem:state-prep}
    For every $r$-qubit state, there exists a constant-depth circuit $C_{\ket{\psi}}$ with fan-out and $\widetilde O(2^r)$ ancillas such that $C_{\ket{\psi}} \ket{0, \dots, 0} = \ket{\psi} \ket{0, \dots, 0}$.
\end{lemma}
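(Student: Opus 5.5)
The plan is to prepare the state in a one-hot (``tree'') representation first, and then convert it to binary, using only three primitives: fan-out, parity via \Cref{lem:parity}, and AND via \Cref{lem:and-or}. Write $\ket{\psi}=\sum_{x\in\bits^r}\psi_x\ket{x}$. It suffices to prepare $\sum_x\abs{\psi_x}\ket{x}$ and then apply the diagonal phase $\ket{x}\mapsto e^{i\arg\psi_x}\ket{x}$. The phase is easy in constant depth. Fan out $\Theta(2^r)$ copies of $x$. For each $z\in\bits^r$, compute the indicator $[x=z]$ into a fresh ancilla with one AND gate of size $r$ (after NOTs on the copy). Apply the single-qubit phase $e^{i\arg\psi_z}$ to that ancilla, then uncompute. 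This uses $O(2^r r\log r)$ ancillas and constant depth.

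For the magnitudes, I would use the binary tree of depth $r$. Each internal node $v$ gets a ``direction'' qubit, and I prepare all of them \emph{in parallel} in product states $\ket{\phi_v}=\cos\theta_v\ket0+\sin\theta_v\ket1$. The angle $\theta_v$ is chosen so that $\cos^2\theta_v$ is the conditional probability of going left at $v$, computed from the partial sums of $\abs{\psi_x}^2$. The path to leaf $x$ is then selected exactly when every direction qubit on that path equals the corresponding bit of $x$. Reading it out is one AND of $r$ qubits per leaf, computed simultaneously for all $2^r$ leaves on fan-out copies of the relevant direction qubits. Call the resulting indicators $\ket{e_x}$; exactly one of them is $1$ on every branch. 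Next, compute the binary register $\ket{x}$ from the one-hot register: bit $j$ of $x$ is the parity of those $e_x$ with $x_j=1$, which is a single parity gate per bit. The global state is now
\begin{align*}
\sum_x \abs{\psi_x}\,\ket{x}\ket{e_x}\ket{\mathrm{path}(x)\ \text{bits}}\bigotimes_{v\notin\mathrm{path}(x)}\ket{\phi_v}\,,
\end{align*}
with amplitude $\prod_{v\in\mathrm{path}(x)}(\cos\text{ or }\sin)\theta_v=\abs{\psi_x}$.

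The main obstacle is disentangling the garbage. I would handle it as follows. Uncompute $\ket{e_x}$ by recomputing $[x=z]$ from copies of the binary register with AND gates and XORing it in. Then repair the direction qubits. For each node $v$, both the indicator ``$v\in\mathrm{path}(x)$'' (an AND of a prefix of $x$) and the required branch bit $b_v(x)$ are computable in constant depth from fan-out copies of $x$. Controlled on that indicator, apply $R_v X^{b_v}$, where $R_v\ket0=\ket{\phi_v}$. This turns the collapsed qubit $\ket{b_v}$ back into $\ket{\phi_v}$, while off-path qubits are left untouched. These controlled single-qubit gates act on disjoint targets, so they all run in one layer. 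Afterwards every direction qubit is in $\ket{\phi_v}$ independently of $x$. Apply $R_v^\dagger$ to reset them to $\ket0$, uncompute the indicator ancillas, and leave $\sum_x\abs{\psi_x}\ket{x}$.

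The step I expect to need the most care is checking that this correction is exactly unitary-consistent. For fixed $x$, the on-path qubits are deterministically $b_v(x)$, so $R_vX^{b_v}$ maps them to $\ket{\phi_v}$ exactly. The product-state amplitudes must also multiply to precisely $\abs{\psi_x}$, with the off-path factors normalized. If both hold, the whole circuit is exact, has constant depth, and uses $\widetilde O(2^r)$ ancillas counting all fan-out copies.
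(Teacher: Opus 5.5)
Your construction is correct. The difference from the paper is that the paper gives no argument for this lemma; it imports it as a black box from \cite{Rosenthal_2026}. You instead give a self-contained, exact proof that uses only primitives the paper already records: fan-out, parity via \Cref{lem:parity}, and exact AND via \Cref{lem:and-or}. In effect it is a parallelized binary-tree (Grover--Rudolph-style) preparation. All $2^r-1$ branching qubits are prepared as a product state in one layer. The selected leaf is read off as a one-hot register by $2^r$ ANDs and converted to binary by parities. The tree is then disentangled by re-rotating exactly the on-path qubits, controlled on prefix indicators of $x$.

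The identity everything rests on is sound. Group the computational configurations of $\bigotimes_v\ket{\phi_v}$ by the leaf they select. On branch $x$, the on-path qubits are exactly $\ket{b_v(x)}$ and the off-path qubits are still $\bigotimes_{v\notin\mathrm{path}(x)}\ket{\phi_v}$. The amplitude telescopes to $\abs{\psi_x}$, taking $\theta_v\in[0,\pi/2]$ and arbitrary on zero-weight subtrees. So after the repair the tree register factors off as $\bigotimes_v\ket{\phi_v}$, and the output is exactly $\ket{\psi}\ket{0,\dots,0}$ using $O(r\log r\cdot 2^r)$ ancillas.

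Two details should be stated explicitly. First, each $e_x$ feeds up to $r$ of the parity gates, so it must be fanned out beforehand; otherwise those gates have to be serialized. Second, the controlled $R_vX^{b_v}$ is a three-qubit gate. You can realize it as a Toffoli onto the direction qubit of $v$, with controls the indicator $[v\in\mathrm{path}(x)]$ and a copy of $b_v(x)$, followed by a controlled $R_v$; the root's indicator is identically $1$.

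Relative to the citation, your route costs some length. In exchange it makes the paper self-contained, gives an explicit ancilla bound, and makes visible that the preparation is exact rather than approximate.
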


\subsection{The Shallow Encoding Isometry}
\label{sec:shallow-encoding}
Recall the encoding isometry from \Cref{sec:disc-encoded-states}.
For each $x \in [N]$, it coherently maps each basis vector $\ket{x}$ to a single-excitation state of the discretized harmonic oscillator:
\begin{align*}
    \ket{\widehat \psi_x} 
    &
    \coloneq
    \ket{\hat h_0}_{\reg{V}_{0}}
    \otimes
    \dots
    \otimes
    \ket{\hat h_0}_{\reg{V}_{x-1}}
    \otimes
    \ket{\hat h_1}_{\reg{V}_{x}}
    \otimes
    \ket{\hat h_0}_{\reg{V}_{x+1}}
    \otimes
    \dots
    \otimes
    \ket{\hat h_0}_{\reg{V}_{N-1}}
    \,,
\end{align*}
where for $b \in \bits$,
\begin{align*}
    \ket{\hat h_b}
    &
    \coloneq
    \frac{1}{\sqrt{Z_b}}
    \sum_{a \in [K]}
    \alpha(a)^b
    \,
    e^{
        -
        \frac12 
        \alpha(a)^2
    }
    \ket{a}
    \,,
    \allowdisplaybreaks
    \\
    Z_b
    &
    \coloneq
    \sum_{a \in [K]}
    \alpha(a)^{2b}
    \,
    e^{
        -
        \alpha(a)^2
    }
    \,,
    \allowdisplaybreaks
    \\
\intertext{and}
    \alpha(a)
    &
    \coloneq
    \sqrt{
        \frac{2\pi}{K}
    }
    \left(
        a
        -
        \frac{K - 1}{2}
    \right)
    \,.
\end{align*}

This is effectively a quantum one-hot encoding of $x \in [N]$, where $\ket{\hat h_0}$ and $\ket{\hat h_1}$ correspond to $0$ and $1$ respectively.

We start by coherently converting $\ket{x}$ to an actual classical one-hot encoding of $x$.
This follows a reversible implementation of the first layers of the classical circuit for parallelizing functions:

\begin{breakablealgorithm}
\caption{One-Hot Encoding}
\label{alg:one-hot-encoding}
\textbf{Input:} $\ket{x}_{\reg{X}_{0}} \otimes\ket{0^n}^{\otimes N-1}_{\reg{X}_{1,\dots,N-1}} \otimes \ket{0}^{\otimes N}_{\reg{B}_{0, \dots, N-1}}$ for $x \in [N]$.
\begin{enumerate}
    \item 
    \label[step]{step:sh_enc1}
    Use $n$ parallel $N$-sized fan-out gates (one for each bit of $\ket{x}$) to create $\ket{x}^{\otimes N}$, spread throughout registers $\reg{X}_{0}, \dots, \reg{X}_{N-1}$.
    \item
    \label[step]{step:sh_enc2}
    In parallel across the $N$ registers, do the following: at register $\reg{X}_{j}$ use parallel $X$ and $I$ gates to XOR the \emph{flipped} bit-string encoding of $j$. Register $\reg{X}_x$ (and only in this register) will contain the all-ones string. All other registers will contain arbitrary strings.
    \item
    In parallel across each of the $N$ registers, use the constant depth implementation of a size-$(n+1)$ AND gate (\Cref{lem:and-or}) on register $\reg{X}_{j}$ and a single-qubit ancilla register $\reg{B}_j$ in state $\ket{0}$ to write the AND and of the $n$ bits to $\reg{B}_{j}$. By \Cref{lem:and-or}, this uses $O(n \log n)$ ancillas per application, for $O(N \log N \log\log N)$ total width. 
    Now the registers $\reg{B}_{0}, \dots, \reg{B}_{N-1}$ contain the one-hot encoding of $x$. But we still have garbage left to uncompute.
    \item
    Apply \Cref{step:sh_enc2} and then \Cref{step:sh_enc1} again to uncompute them. We are left with $\ket{x}$ in $\reg{X}_{0}$ and $\ket{0^n}$ in register $\reg{X}_{1}, \dots, \reg{X}_{N-1}$, and the one-hot encoding in the $\reg{B}$ registers.
    \item
    \label[step]{step:sh_enc5}
    In parallel across the $N$ registers, for each $j \in [N]$, use a fan-out gate of size at most $n$ (in fact, for each $j$, its size is the Hamming weight of $j$ plus one) to write the binary-string encoding of $j$ into $\reg{X}_{j}$ controlled on $\reg{B}_{j}$ having a 1.
    The state is now
    $$
        \ket{x}_{\reg{X}_{0}}
        \ket{0}_{\reg{B}_{0}}
        \ket{0^n}_{\reg{X}_{1}}
        \ket{0}_{\reg{B}_{1}}
        \otimes
        \dots
        \otimes
        \ket{0^n}_{\reg{X}_{x-1}}
        \ket{0}_{\reg{B}_{x-1}}
        \ket{x}_{\reg{X}_{x}}
        \ket{1}_{\reg{B}_{x}}
        \ket{0^n}_{\reg{X}_{x+1}}
        \ket{0}_{\reg{B}_{x+1}}
        \otimes
        \dots
    $$
    \item 
    Use $n$ parallel parity gates of size $N$ to compute the XOR of all 
    the $\reg{X}$ registers into register $\reg{X}_{0}$ to uncompute it. (That is, we XOR the values of $\reg{X}_{1},\dots, \reg{X}_{N-1}$ into $\reg{X}_{0}$.)
    \item 
    Finally, apply \Cref{step:sh_enc5} again to uncompute register $\reg{X}_{x}$.
\end{enumerate}
\end{breakablealgorithm}

We are left with each of the $\reg{X}$ registers set to $\ket{0^n}$ (which we can set aside and/or reuse), and a one-hot encoding of $x$ in the $\reg{B}$ registers.
This has constant depth and width $O(N \log N \log\log N)$.

We now explain how to coherently map $\ket{b} \mapsto \ket{\hat h_b}$ for $b \in \bits$.

\begin{breakablealgorithm}
\caption{Coherent Gaussian States}
\label{alg:gaussian}
\textbf{Input:} $\ket{b}_{\reg{B}_{j}} \otimes\ket{0^k}_{\reg{V}_j}$ for $b \in \bits$.
\begin{enumerate}
    \item 
    \label[step]{step:sh_gauss1}
    Controlled on seeing bit $b$ in register $\reg{B}_j$ use \Cref{lem:state-prep} to prepare $\ket{\hat{h}_b}$ in the output register $\reg{V}_j$. By \Cref{lem:state-prep}, this circuit has constant depth, and width $\widetilde O(2^k) = \widetilde O(K) = \widetilde O(\log (N/\epsilon))$ qubits.
    \item 
    \label[step]{step:sh_gauss2}
    Notice that $\ket{\hat{h}_0}$ and $\ket{\hat{h}_1}$ have opposite symmetry under reflection. That is, 
    \begin{align*}
        \alpha(K-1 -a) = -\alpha(a)
    \end{align*}
    and we have that under such reflection, 
    \begin{align*}
        X^{\otimes k}
        \ket{\hat{h}_b}
        =
        (-1)^b
        \ket{\hat{h}_b}
        \,.
    \end{align*}
    Therefore, we can uncompute register $\reg{B}_{j}$ by a phase kickback. We simply apply a Hadamard to $\reg{B}_{j}$ and then apply a fan-out of size $k+1$ from $\reg{B}_{j}$ to all $k$ qubits of $\reg{V}_j$. Then apply another Hadamard gate to $\reg{B}_{j}$.
\end{enumerate}
\end{breakablealgorithm}

Applying this in parallel across all $N$ registers, this is a constant depth circuit of width $\widetilde O(N\log (N/\epsilon))$ that results in the state $\ket{\widehat \psi_x}$ across the $\reg{V}$ registers, and the $\reg{B}$ registers reset to~$\ket{0}^{N}$.

\subsection{The Shallow Diagonal Phase}
\label{sec:shallow-quadratic-phase}

Recall the diagonal phase oracle
\begin{align*}
    \widehat 
    Q_S
    :
    \ket{\vec v} 
    \mapsto 
    e^{
        \frac12 
        \, 
        i 
        \, 
        \, 
        \alpha(\vec v)^T 
        \,
        S 
        \,
        \alpha(\vec v)
    }
    \ket{\vec v}
    \,.
\end{align*}

The quadratic form in the oracle can be written as 
\begin{align*}
    \alpha(\vec v)^T 
    \,
    S 
    \,
    \alpha(\vec v)
    =
    \sum_{a,b \in [N]}
    S_{a,b}
    \,
    \alpha(v_a) 
    \,
    \alpha(v_b)
    =
    \frac{2\pi}{K}
    \sum_{a,b \in [N]}
    S_{a,b}
    \,
     \left(v_a - \frac{K-1}{2}\right)
    \,
     \left(v_b - \frac{K-1}{2}\right)
\end{align*}
The phase can thus be factored into a product of $N^2$ phases each depending on a pair of entries of $\vec v$.

We can implement it in constant depth as follows:

\begin{breakablealgorithm}
\caption{Quadratic Diagonal Phase}
\label{alg:shallow-diagonal-phase}
\textbf{Input:} $N$ registers $\reg{V}_{0}, \dots, \reg{V}_{N-1}$, each on $k = \log K$ qubits, where register $\reg{V}_{j}$ stores the value of $v_j \in [K]$.
\begin{enumerate}
    \item 
    \label[step]{step:sh_phase1}
    Label ancilla registers%
    \footnote{
        This labeling is \emph{not} an algorithmic step. It is just a way for us to refer to specific ancilla registers for the purpose of describing the circuit.
    }
    $\reg{W}_{a,b}$ for each $a,b \in [N]$, of size $k$ qubits each.
    Use $kN$ parallel fan-out gates, each of size $N$. For each register $\reg{V}_{j}$, the $k$ per-qubit fan-out gate copies it to 
    each $\reg{W}_{j,b}$ for all $b \in [N]$.
    \item 
    \label[step]{step:sh_phase2}
    Label ancilla registers $\reg{A}_{\{a,b\}}$ of size $(2k + 4)$ qubits for each unordered pair $\{a,b\}$ for $a,b \in [N]$.
    In parallel across all such unordered pairs, use 
    $\reg{W}_{a,b}$ (which contains $v_a$) and  
    $\reg{W}_{b,a}$ (which contains $v_b$) and  
    the constant depth classical-function implementation to compute the product of $(v_a - \frac{K-1}{2})$, and $(v_b - \frac{K-1}{2})$ into an ancilla register $\reg{A}_{\{a,b\}}$. (For the off-diagonal terms, we consider them to be multiplied by 2, since there are two off diagonal terms for the same unordered pair.)  
    Since this is a classical function on $2k$ input bits and $(2k + 4)$ output bits,
    it can be implemented in constant depth and with width at most $O(2^{2k} ((2k + 4) + k\log k) = O(K^2 k \log k)$ per ordered pair $\{a,b\}$, for a total width of this step of $O(N^2 K^2 k \log k) = O(N^2 \log(N/\epsilon)^2 \log\log(N/\epsilon) \log\log\log(N/\epsilon))$ (since we set $K = O(\log(N/\epsilon))$
    \item 
    Since $K$ is a power of 2, $\reg{A}_{\{a,b\}}$ contains a binary string such that each bit of $\reg{A}_{\{a,b\}}$ indicates whether some power-of-two fraction of $2\pi \cdot S_{a,b}$ should be implemented in the phase. So in parallel across the different $\reg{A}_{\{a,b\}}$ registers and in parallel across all the different qubits of each such $\reg{A}_{\{a,b\}}$, apply a single-qubit unitary that implements the appropriate fraction of $2\pi \cdot S_{a,b}$ in the phase. (Up to a global phase convention, these are continuous $Z$ rotations, which are single-qubit gates.) 
    \item
    Apply \Cref{step:sh_phase2} again to uncompute the $\reg{A}_{\{a,b\}}$ registers.
    \item
    Apply \Cref{step:sh_phase1} again to uncompute the $\reg{W}_{a,b}$ registers.
\end{enumerate}
\end{breakablealgorithm}
This exactly implements the phase
$
    \widehat 
    Q_S
$
in constant depth using 
space
$$
    O(N^2 \log(N/\epsilon)^2 \log\log(N/\epsilon) \log\log\log(N/\epsilon))
    \,.
$$

\subsection{The Shallow Fourier Transform}
\label{sec:shallow-fourier-transform}

Recall that the discrete Fourier transform $\widehat \calF$ on the grid that we use has the form:
\begin{align*}
    \widehat
    \calF
    &
    =
    {\widehat \calF_1}^{\otimes N}
\end{align*}
where the Fourier transform in each direction is
\begin{align*}
    \widehat
    \calF_1
    &
    =
    \calD_1
    \,
    \qft_{\Z_K}^\dagger
    \,
    \calD_1
    \,,
\end{align*}
and
\begin{align*}
    \calD_1
    \coloneq
    \sum_{u \in [K]}
    e^{
        \frac{2\pi i}{K}
        \cdot
        \frac{K-1}{2}
        \left(
            u
            -
            \frac{K-1}{4}
        \right) 
    }
    \ketbra{u}{u}
\end{align*}
is a diagonal unitary that is used to shift the quantum Fourier transform in both the standard and Fourier bases so that it is centered on the origin of $[-L,L]$ rather than at $u = 0$.

We can implement the layer of diagonal unitaries $\calD_1$ in parallel over the $N$ registers. Each $\calD_1$ can be implemented in much simpler manner than for $\widehat Q_S$:

\begin{breakablealgorithm}
\caption{The Fourier-Transform-Shifting Diagonal $\calD_1$}
\label{alg:shallow-fourier-diagonal}
\textbf{Input:} Register $\reg{U}$ on $k = \log K$ qubits, which stores the value of $u \in [K]$.
\begin{enumerate}
    \item 
    \label[step]{step:sh_fourier_diagonal1}
    Compute 
    $
        \frac{K-1}{2}
        \left(
            u
            -
            \frac{K-1}{4}
        \right) 
    $
    into an ancilla register $\reg{A}$ of size $(2k + 3)$.
    Since this is a classical function on $k$ input bits and $(2k+3)$ output bits,
    it can be implemented in constant depth and with width at most $(2k + 3) \cdot 2^{k} = O(K k) = \widetilde O(\log(N/\epsilon))$. 
    \item 
    Since $K$ is a power of 2, $\reg{A}$ contains a binary string such that each bit of $\reg{A}$ indicates whether some power-of-two fraction of $2\pi$ should be implemented in the phase. So in parallel across the different qubits of $\reg{A}$, apply a single-qubit unitary that implements the appropriate fraction of $2\pi$ in the phase.
    \item 
    Apply \Cref{step:sh_fourier_diagonal1} again to uncompute $\reg{A}$.
\end{enumerate}
\end{breakablealgorithm}
This exactly implements $\calD_1$ with no error in constant depth, and with a total width for implementing all $N$ copies of $O(N K k) = O(N \log(N/\epsilon) \log\log(N/\epsilon))$.

\subsubsection{Implementing the Power-of-Two Quantum Fourier Transform}

What remains is to implement the Fourier transform over $\Z_K$ in constant depth.
Let $M = K/2$. We write
$
    x = \sum_{j=0}^{k-1} 2^j x_j
$
and define the Fourier basis states
\begin{align*}
    \ket{\phi_x}
    &
    \coloneq
    \qft_{\Z_K}\ket{x}
    =
    \frac{1}{\sqrt{K}}
    \sum_{y \in [K]}
    e^{2\pi i xy/K}
    \ket{y}
    \,.
\end{align*}
The qubit of $\ket{\phi_x}$ corresponding to the binary
weight $2^{k-1-j}$ is in the state
\begin{align*}
    \ket{\rho_{x,j}}
    \coloneq
    \frac{
        \ket{0}
        +
        e^{2\pi i x/2^{j+1}}
        \ket{1}
    }{\sqrt{2}}
    \,.
\end{align*}
Thus, up to the ordering of the qubits,
$
    \ket{\phi_x}
    =
    \bigotimes_{j=0}^{k-1}
    \ket{\rho_{x,j}}
$.
Below, we label these qubits by $j$ according to this
factorization.

We use the Fourier-state construction of
\cite[Lemma~4.13]{HoyerSpalek2005}, but replace approximate
phase estimation by an exact decoder using all possible
lower-bit strings in parallel. This is the phase-decoding
method used in \cite[Section~4.1]{takahashi2016collapse}.

\begin{breakablealgorithm}
\caption{Exact Power-of-Two Quantum Fourier Transform}
\label{alg:shallow-exact-qft}
\textbf{Input:}
$
    \ket{x}_{\reg{X}}
    \otimes
    \ket{0^k}_{\reg{Y}_0}
    \otimes
    \dots
    \otimes
    \ket{0^k}_{\reg{Y}_{M-1}}
$
for $x \in [K]$, together with additional ancillas
initialized to $\ket{0}$.
\begin{enumerate}
    \item
    \label[step]{step:sh_qft_prepare}
    In parallel across all $M$ registers $\reg{Y}_r$,
    prepare $\ket{\phi_x}$ while retaining $\ket{x}$
    in register $\reg{X}$.
    Specifically, first apply Hadamards to all the
    $\reg{Y}$ qubits.
    For each $r \in [M]$, $j \in [k]$, and
    $s \in \{0,\dots,j\}$, apply the controlled phase
    \begin{align*}
        \ket{x_s}\ket{b}
        \mapsto
        e^{2\pi i 2^s x_s b/2^{j+1}}
        \ket{x_s}\ket{b}
    \end{align*}
    between bit $s$ of $\reg{X}$ and qubit $j$
    of $\reg{Y}_r$.
    These diagonal two-qubit gates can be applied in
    constant depth by first using fan-out to provide
    a separate computational-basis copy of each
    participating qubit for each gate, applying the
    gates on disjoint pairs, and then undoing the
    fan-out.
    All copying ancillas return to zero.
    The state is now
    \begin{align*}
        \ket{x}_{\reg{X}}
        \otimes
        \ket{\phi_x}^{\otimes M}_{\reg{Y}}
        \,.
    \end{align*}

    \item
    \label[step]{step:sh_qft_basis}
    In parallel, for every $j \in [k]$ and
    $r \in [2^j]$, apply
    \begin{align*}
        H
        \begin{pmatrix}
            1 & 0 \\
            0 & e^{-2\pi i r/2^{j+1}}
        \end{pmatrix}
    \end{align*}
    to qubit $j$ of register $\reg{Y}_r$.
    There are sufficiently many registers because
    $2^j \le M$.
    Denote the computational-basis value of this
    qubit after the basis change by $b_{j,r}$.
    In particular, whenever $r = x \bmod 2^j$,
    the qubit is exactly in state $\ket{x_j}$.

    \item
    \label[step]{step:sh_qft_candidates}
    In parallel across all $a \in [K]$, compute
    \begin{align*}
        d_a
        =
        \bigwedge_{j=0}^{k-1}
        \left[
            b_{j,a \bmod 2^j}
            =
            a_j
        \right]
    \end{align*}
    into a single-qubit ancilla register $\reg{D}_a$.
    Here $a_j$ is bit $j$ of the binary representation
    of $a$, and the brackets denote the bit indicating
    whether the equality holds.
    Use fan-out to provide separate copies of the
    controls for the different candidates, and use
    \Cref{lem:and-or} to compute each conjunction
    exactly in constant depth.
    Uncompute the temporary controls and the
    conjunction workspace, retaining only the
    $\reg{D}$ registers.
    These registers contain the one-hot encoding
    of $x$.

    \item
    \label[step]{step:sh_qft_erase_input}
    For each $j \in [k]$, use a parity gate to XOR
    \begin{align*}
        \bigoplus_{
            a \in [K]:
            a_j = 1
        }
        d_a
    \end{align*}
    into bit $j$ of register $\reg{X}$.
    Apply these $k$ parity gates in parallel, using
    fan-out copies of the $\reg{D}$ qubits as needed
    and uncomputing those copies afterwards.
    Since the displayed parity is $x_j$, register
    $\reg{X}$ is now $\ket{0^k}$.

    \item
    Apply the inverse of
    \Cref{step:sh_qft_candidates} to reset all
    $\reg{D}$ registers to zero, and then apply
    the inverse of \Cref{step:sh_qft_basis}.
    The state is now
    \begin{align*}
        \ket{0^k}_{\reg{X}}
        \otimes
        \ket{\phi_x}^{\otimes M}_{\reg{Y}}
        \,,
    \end{align*}
    with all decoding ancillas reset to zero.

    \item
    \label[step]{step:sh_qft_sum}
    Apply the reversible modular-sum permutation
    \begin{align*}
        &
        \ket{y_0}_{\reg{Y}_0}
        \ket{y_1}_{\reg{Y}_1}
        \cdots
        \ket{y_{M-1}}_{\reg{Y}_{M-1}}
        \\
        &\qquad\mapsto
        \ket{
            \left(
                \sum_{r=0}^{M-1} y_r
            \right)
            \bmod K
        }_{\reg{Y}_0}
        \ket{y_1}_{\reg{Y}_1}
        \cdots
        \ket{y_{M-1}}_{\reg{Y}_{M-1}}
        \,.
    \end{align*}
    To implement this permutation, first compute the
    sum modulo $K$ into a fresh $k$-qubit register
    $\reg{A}$, with clean workspace, and swap
    $\reg{A}$ with $\reg{Y}_0$.
    Register $\reg{A}$ now contains the old value
    $y_0$.
    If $z$ denotes the new value of $\reg{Y}_0$,
    compute
    \begin{align*}
        \left(
            z
            -
            \sum_{r=1}^{M-1} y_r
        \right)
        \bmod K
    \end{align*}
    into a fresh temporary register, XOR it into
    $\reg{A}$ to reset $\reg{A}$ to zero, and
    uncompute the temporary register.

    Both arithmetic computations can be implemented
    exactly in constant depth with $O(K^4)$ width.
    For the sum, use fan-out to replace each input
    bit of binary weight $2^j$ by $2^j$ copies.
    The resulting $M(K-1)=O(K^2)$ bits have Hamming
    weight equal to the integer sum.
    Apply the exact constant-depth counting circuit
    of \cite[Lemma~4]{takahashi2016collapse}, copy
    its lowest $k$ output bits, and uncompute the
    counting circuit and the copies.

    For the subtraction, use the identity
    \begin{align*}
        z-\sum_{r=1}^{M-1}y_r
        \equiv
        z
        +
        \sum_{r=1}^{M-1}(K-1-y_r)
        +
        (M-1)
        \pmod K
        \,.
    \end{align*}
    Bitwise complementation gives $K-1-y_r$,
    and $M-1$ additional bits set to $1$ supply
    the constant term.
    The same counting construction therefore
    applies, with all workspace uncomputed.

    \item
    Apply Hadamards to all qubits of
    $\reg{Y}_1,\dots,\reg{Y}_{M-1}$.
    Finally, swap registers $\reg{X}$ and $\reg{Y}_0$
    so that the output occupies the original input
    register.
\end{enumerate}
\end{breakablealgorithm}

We verify that the decoding in
\Cref{step:sh_qft_candidates} is exact.
For the candidate $a=x$, every tested qubit is
deterministically equal to $x_j$, so $d_x=1$.
For any candidate $a\ne x$, let $j$ be its least
significant bit that differs from $x$.
Then
$
    a \bmod 2^j = x \bmod 2^j
$,
so the tested qubit is deterministically equal to
$x_j$, which differs from $a_j$.
Consequently, $d_a=0$.
This holds on every computational-basis
string with nonzero amplitude after
\Cref{step:sh_qft_basis}.
The registers $\reg{D}_a$ therefore contain the
one-hot encoding of $x$ exactly, even though some
of the qubits $b_{j,r}$ remain in superposition.
Therefore, the input is erased exactly and
reversing the decoder restores all $M$
Fourier states without introducing any phase.

To verify the final cleanup, expand
\begin{align*}
    \ket{\phi_x}^{\otimes M}
    =
    \frac{1}{K^{M/2}}
    \sum_{y_0,\dots,y_{M-1}\in[K]}
    e^{
        \frac{2\pi i x}{K}
        \sum_{r=0}^{M-1}y_r
    }
    \ket{y_0,\dots,y_{M-1}}
    \,.
\end{align*}
After the modular-sum permutation, the phase
depends only on the new first register.
Thus \Cref{step:sh_qft_sum} maps this state to
\begin{align*}
    \ket{\phi_x}_{\reg{Y}_0}
    \otimes
    \ket{\phi_0}^{\otimes(M-1)}
    =
    \ket{\phi_x}_{\reg{Y}_0}
    \otimes
    \ket{+}^{\otimes k(M-1)}
    \,.
\end{align*}
The final Hadamards reset all but the first register
to zero, and the swap places $\ket{\phi_x}$ in
register $\reg{X}$.
The algorithm therefore implements
\begin{align*}
    \ket{x}_{\reg{X}}
    \otimes
    \ket{0, \dots, 0}
    \mapsto
    \qft_{\Z_K}\ket{x}_{\reg{X}}
    \otimes
    \ket{0, \dots, 0}
\end{align*}
exactly.
By linearity, the same holds for arbitrary
superpositions of the input.

The Fourier-state preparation uses $O(Kk^2)$
width, and the parallel decoding uses
$O(Kk\log(k+1))$ width.
The arithmetic step uses $O(K^4)$ width, since
the counting circuit on $O(K^2)$ bits has quadratic
size.
Every step has constant depth with fan-out.
Thus the entire algorithm has constant depth
and width $O(K^4)$.

Reversing the circuit exactly implements
$\qft_{\Z_K}^{\dagger}$ with the same resources.
Applying it in parallel across all $N$ registers
uses width
\begin{align*}
    O(NK^4)
    =
    O\left(
        N\log(N/\epsilon)^4
    \right)
\end{align*}
and constant depth.
Together with the two layers of $\calD_1$,
this exactly implements $\widehat\calF$, with
all auxiliary registers returned to zero and
no additional approximation error.

\bibliographystyle{alpha}
\bibliography{ref}

\end{document}